\providecommand{\U}[1]{\protect\rule{.1in}{.1in}}
\newtheorem{theorem}{Theorem}
\newtheorem{assumption}{Assumption}
\newtheorem{lemma}{Lemma}
\newtheorem{proposition}{Proposition}
\newenvironment{proof}[1][Proof]{\textbf{#1.} }{\  \rule{0.5em}{0.5em}}
\renewcommand{\baselinestretch}{1.25}
\newcommand{\bei}{\begin{itemize}}
\newcommand{\ei}{\end{itemize}}
\begin{document}

\title{Dynamic Semiparametric Models for \linebreak Expected Shortfall (and
Value-at-Risk)\thanks{For helpful comments we\ thank Tim Bollerslev, Rob
Engle, Jia Li, Nour Meddahi, and seminar participants at the Bank of Japan,
Duke University, EPFL, Federal Reserve Bank of New\ York, Hitotsubashi
University, New York University, Toulouse School of Economics, the University
of Southern California, and the 2015 Oberwolfach Workshop on Quantitative Risk
Management where this project started. The first author would particularly
like to thank the finance department at NYU\ Stern, where much of his work on
this paper was completed. Contact address:\ Andrew Patton, Department of
Economics, Duke University, 213 Social Sciences Building, Box 90097, Durham NC
27708-0097. Email: \texttt{andrew.patton@duke.edu}. }}
\author{Andrew J. Patton\\Duke University
\and Johanna F. Ziegel\\University of Bern
\and Rui Chen\\Duke University{}}
\date{First version:\ 5 December 2015. This version: 11 July 2017.}
\maketitle

\begin{center}
\textbf{Abstract }
\end{center}

\bigskip

Expected Shortfall (ES) is the average return on a risky asset conditional on
the return being below some quantile of its distribution, namely its
Value-at-Risk (VaR). The Basel III Accord, which will be implemented in the
years leading up to 2019, places new attention on ES, but unlike VaR, there is
little existing work on modeling ES. We use recent results from statistical
decision theory to overcome the problem of \textquotedblleft
elicitability\textquotedblright\ for ES by \textit{jointly }modelling ES
and\ VaR, and propose new dynamic models for these risk measures. We provide
estimation and inference methods for the proposed models, and confirm via
simulation studies that the methods have good finite-sample properties. We
apply these models to daily returns on four international equity indices, and
find the proposed new ES-VaR models outperform forecasts based on GARCH or
rolling window models.

\bigskip

\vfill\textbf{Keywords: }Risk management, tails, crashes, forecasting,
generalized autoregressive score.

\textbf{J.E.L. codes:} G17, C22, G32, C58. \ \pagebreak%

\def\baselinestretch{1.65}\small\normalsize

\section{Introduction}

The financial crisis of 2007-08 and its aftermath led to numerous changes in
financial market regulation and banking supervision. One important change
appears in the Third Basel Accord (Basel Committee, 2010), where new emphasis
is placed on \textquotedblleft Expected Shortfall\textquotedblright\ (ES)\ as
a measure of risk, complementing, and in parts substituting, the more-familiar
Value-at-Risk (VaR) measure. Expected Shortfall is the expected return on an
asset conditional on the return being below a given quantile of its
distribution, namely its\ VaR. That is, if $Y_{t}$ is the return on some asset
over some horizon (e.g., one day or one week) with conditional (on information
set $\mathcal{F}_{t-1}$) distribution $F_{t}$, which we assume to be strictly
increasing with finite mean, the $\alpha$-level VaR and ES are:%
\begin{align}
\mathrm{ES}_{t}  & =\mathbb{E}\left[  Y_{t}|Y_{t}\leq\mathrm{VaR}%
_{t},\mathcal{F}_{t-1}\right] \label{eqnES}\\
\text{where \ \ }\mathrm{VaR}_{t}  & =F_{t}^{-1}\left(  \alpha\right)  \text{,
\ for }\alpha\in\left(  0,1\right) \label{eqnVaR}\\
\text{and \ \ }Y_{t}|\mathcal{F}_{t-1}  & \thicksim F_{t}%
\end{align}
\qquad As Basel III\ is implemented worldwide (implementation is expected to
occur in the period leading up to January 1$^{\text{st}}$, 2019), ES will
inevitably gain, and require, increasing attention from risk managers and
banking supervisors and regulators. The new \textquotedblleft market
discipline\textquotedblright\ aspects of Basel III mean that ES and VaR will
be regularly disclosed by banks, and so a knowledge of these measures will
also likely be of interest to these banks' investors and counter-parties.

There is, however, a paucity of empirical models for expected shortfall. The
large literature on volatility models (see Andersen \textit{et al.} (2006) for
a review) and VaR models (see Komunjer (2013) and McNeil et al. (2015)), have
provided many useful models for these measures of risk. However, while ES has
long been known to be a \textquotedblleft coherent\textquotedblright\ measure
of risk (Artzner, \textit{et al.} 1999), in contrast with VaR, the literature
contains relatively few models for ES; some exceptions are discussed below.
This dearth is perhaps in part because regulatory interest in this risk
measure is only recent, and perhaps also due to the fact that this measure is
not \textquotedblleft elicitable.\textquotedblright\ A risk measure (or
statistical functional more generally) is said to be \textquotedblleft
elicitable\textquotedblright\ if there exists a loss function such that the
measure is the solution to minimizing the expected loss. For example, the mean
is elicitable using the quadratic loss function, and VaR is elicitable using
the piecewise-linear or \textquotedblleft tick\textquotedblright\ loss
function. Having such a loss function is a stepping stone to building dynamic
models for these quantities. We use recent results from Fissler and
Ziegel\ (2016), who show that ES is \textit{jointly} \textit{elicitable}
with\ VaR, to build new dynamic models for ES and VaR.

This paper makes three main contributions. Firstly, we present some novel
dynamic models for ES and\ VaR, drawing on the GAS framework of\ Creal,
\textit{et al. }(2013), as well as successful models from the volatility
literature, see Andersen \textit{et al.} (2006). The models we propose are
semiparametric in that they impose parametric structures for the dynamics of
ES and VaR, but are completely agnostic about the conditional distribution of
returns (aside from regularity conditions required for estimation and
inference). The models proposed in this paper are related to the class of
\textquotedblleft CAViaR\textquotedblright\ models proposed by Engle and
Manganelli (2004a), in that we directly parameterize the measure(s) of risk
that are of interest, and avoid the need to specify a conditional distribution
for returns. The models we consider make estimation and prediction fast and
simple to implement. Our semiparametric approach eliminates the need to
specify and estimate a conditional density, thereby removing the possibility
that such a model is misspecified, though at a cost of a loss of efficiency
compared with a correctly specified density model.

Our second contribution is asymptotic theory for a general class of dynamic
semiparametric models for ES and\ VaR. This theory is an extension of results
for VaR presented in Weiss (1991) and Engle and\ Manganelli\ (2004a), and
draws on identification results in\ Fissler and\ Ziegel\ (2016) and results
for M-estimators in Newey and\ McFadden (1994). We present conditions under
which the estimated parameters of the VaR and\ ES models are consistent and
asymptotically normal, and we present a consistent estimator of the asymptotic
covariance matrix. We show via an extensive Monte Carlo study that the
asymptotic results provide reasonable approximations in realistic simulation
designs. In addition to being useful for the new models we propose, the
asymptotic theory we present provides a general framework for other
researchers to develop, estimate, and evaluate new models for VaR and ES.

Our third contribution is an extensive application of our new models and
estimation methods in an out-of-sample analysis of forecasts of ES and\ VaR
for four international equity indices over the period January 1990 to December
2016. We compare these new models with existing methods from the literature
across a range of tail probability values $\left(  \alpha\right)  $ used in
risk management. We use Diebold and Mariano (1995) tests to identify the
best-performing models for ES and VaR, and we present simple regression-based
methods, related to those of Engle and\ Manganelli (2004a) and Nolde
and\ Ziegel\ (2017), to \textquotedblleft backtest\textquotedblright\ the ES forecasts.

Some work on expected shortfall estimation and prediction has appeared in the
literature, overcoming the problem of elicitability in different ways: Engle
and\ Manganelli (2004b) discuss using extreme value theory, combined with
GARCH or\ CAViaR dynamics, to obtain forecasts of ES. Cai and\ Wang (2008)
propose estimating VaR and\ ES based on nonparametric conditional
distributions, while Taylor (2008) and Gsch\"{o}pf \textit{et al.} (2015)
estimate models for \textquotedblleft expectiles\textquotedblright\ (Newey
and\ Powell, 1987) and map these to ES. Zhu and Galbraith (2011) propose using
flexible parametric distributions for the standardized residuals from models
for the conditional mean and variance. Drawing on Fissler and\ Ziegel (2016),
we overcome the problem of elicitability more directly, and open up new
directions for ES modeling and prediction.

In recent independent work, Taylor (2017) proposes using the asymmetric
Laplace distribution to jointly estimate dynamic models for VaR and\ ES. He
shows the intriguing result that the negative log-likelihood of this
distribution corresponds to one of the loss functions presented in Fissler and
Ziegel\ (2016), and thus can be used to estimate and evaluate such models.
Unlike our paper, Taylor (2017) provides no asymptotic theory for his proposed
estimation method, nor any simulation studies of its reliability. However,
given the link he presents, the theoretical results we present below can be
used to justify \textit{ex post} the methods of his paper.

The remainder of the paper is structured as follows. In Section \ref{sMODELS}
we present new dynamic semiparametric models for ES and VaR and compare them
with the main existing models for ES and VaR. In Section \ref{sTHEORY} we
present asymptotic distribution theory for a generic dynamic semiparametric
model for ES and VaR, and in Section \ref{sSIMULATION} we study the
finite-sample properties of the asymptotic theory in some realistic Monte
Carlo designs. Section \ref{sAPPLICATION} we apply the new models to daily
data on four international equity indices, and compare these models both
in-sample and out-of-sample with existing models. Section \ref{sCONCLUSION}
concludes. Proofs and additional technical details are presented in the
appendix, and a supplemental web appendix contains detailed proofs and
additional analyses.

\section{\label{sMODELS}Dynamic models for ES and VaR}

In this section we propose some new dynamic models for expected shortfall (ES)
and Value-at-Risk (VaR). We do so by exploiting recent work in Fissler and
Ziegel\ (2016) which shows that these variables are elicitable
\textit{jointly}, despite the fact that ES was known to be not elicitable
separately, see Gneiting (2011a). The models we propose are based on the GAS
framework of Creal, \textit{et al.} (2013) and\ Harvey (2013), which we
briefly review in Section \ref{sGAS} below.

\subsection{\label{sFZ0}A consistent scoring rule for ES and VaR}

Fissler and Ziegel\ (2016) show that the following class of loss functions (or
\textquotedblleft scoring rules\textquotedblright),\ indexed by the functions
$G_{1}$ and $G_{2},$ is consistent for VaR and\ ES. That is, minimizing the
expected loss using any of these loss functions returns the true VaR and ES.
In the functions below, we use the notation $v$ and $e$ for VaR and ES.%
\begin{align}
L_{FZ}\left(  Y,v,e;\alpha,G_{1},G_{2}\right)   & =\left(  \mathbf{1}\left\{
Y\leq v\right\}  -\alpha\right)  \left(  G_{1}\left(  v\right)  -G_{1}\left(
Y\right)  +\frac{1}{\alpha}G_{2}\left(  e\right)  v\right) \label{eqnFZloss}\\
& -G_{2}\left(  e\right)  \left(  \frac{1}{\alpha}\mathbf{1}\left\{  Y\leq
v\right\}  Y-e\right)  -\mathcal{G}_{2}\left(  e\right) \nonumber
\end{align}
where $G_{1}$ is weakly increasing, $G_{2}$ is strictly increasing and
strictly positive, and $\mathcal{G}_{2}^{\prime}=G_{2}.$ We will refer to the
above class as \textquotedblleft FZ loss functions.\textquotedblright%
\footnote{Consistency of the FZ loss function for VaR and\ ES also requires
imposing that $e\leq v,$ which follows naturally from the definitions of ES
and\ VaR in equations (1) and (2). We discuss how we impose this restriction
empirically in Sections \ref{sSIMULATION} and \ref{sAPPLICATION}
below.}\ Minimizing any member of this class yields VaR\ and ES:%
\begin{equation}
\left(  \mathrm{VaR}_{t},\mathrm{ES}_{t}\right)  =\arg\min_{\left(
v,e\right)  }~\mathbb{E}_{t-1}\left[  L_{FZ}\left(  Y_{t},v,e;\alpha
,G_{1},G_{2}\right)  \right]
\end{equation}

Using the FZ loss function for estimation and forecast evaluation requires
choosing $G_{1}$ and $G_{2}.$ We choose these so that the loss function
generates loss differences (between competing forecasts) that are homogeneous
of degree zero. This property has been shown in volatility forecasting
applications to lead to higher power in Diebold-Mariano (1995) tests in Patton
and Sheppard (2009). Nolde and Ziegel (2017) show that there does \textit{not
}generally exist an FZ loss function that generates loss differences that are
homogeneous of degree zero. However, zero-degree homogeneity may be attained
by exploiting the fact that, for the values of $\alpha$ that are of interest
in risk management applications (namely, values ranging from around 0.01 to
0.10), we may assume that $\mathrm{ES}_{t}<0$ a.s. $\forall~t.$ The following
proposition shows that if we further impose that $\mathrm{VaR}_{t}<0~$a.s.
$\forall~t,$ then, up to irrelevant location and scale factors, there is only
\textit{one} FZ loss function that generates loss differences that are
homogeneous of degree zero.\footnote{If VaR can be positive, then there is one
free shape parameter in the class of zero-homogeneous FZ loss functions
($\varphi_{1}/\varphi_{2},$ in the notation of the proof of Proposition
\ref{propFZ0}). In that case, our use of the loss function in equation
(\ref{eqnFZ0}) can be interpreted as setting that shape parameter to zero.
This shape parameter does not affect the consistency of the loss function, as
it is a member of the FZ class, but it may affect the ranking of misspecified
models, see Patton (2016).} The fact that the $L_{FZ0}$\ loss function defined
below is unique has the added benefit that there are, of course, no remaining
shape or tuning parameters to be specified.

\begin{proposition}
\label{propFZ0}Define the FZ loss difference for two forecasts $\left(
v_{1t},e_{1t}\right)  $ and $\left(  v_{2t},e_{2t}\right)  $ as \linebreak%
$L_{FZ}\left(  Y_{t},v_{1t},e_{1t};\alpha,G_{1},G_{2}\right)  -L_{FZ}\left(
Y_{t},v_{2t},e_{2t};\alpha,G_{1},G_{2}\right)  .$ Under the assumption that
VaR and ES are both strictly negative, the loss differences generated by a FZ
loss function are homogeneous of degree zero iff $G_{1}(x)=0$ and
$G_{2}(x)=1/x.$ The resulting \textquotedblleft FZ0\textquotedblright\ loss
function is:%
\begin{equation}
L_{FZ0}\left(  Y,v,e;\alpha\right)  =-\frac{1}{\alpha e}\mathbf{1}\left\{
Y\leq v\right\}  \left(  v-Y\right)  +\frac{v}{e}+\log\left(  -e\right)
-1\label{eqnFZ0}%
\end{equation}

\end{proposition}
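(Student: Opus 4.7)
My plan is to treat the zero-homogeneity requirement as a functional equation in $(G_1,G_2)$, reduce it via partial derivatives to a pair of scaling identities, solve those, and then invoke the FZ regularity conditions to eliminate the remaining parameters.

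Since $\mathbf{1}\{Y\le v\}=\mathbf{1}\{cY\le cv\}$ for every $c>0$, zero-homogeneity of the loss difference is equivalent to asking that $(v,e)\mapsto L_{FZ}(cY,cv,ce)-L_{FZ}(Y,v,e)$ be constant in $(v,e)$ for every $c>0$ and $Y$. In the region $Y>v$, the indicator vanishes and I can write $L_{FZ}(Y,v,e)=H(v,e)+\alpha G_1(Y)$ with $H(v,e):=-\alpha G_1(v)-vG_2(e)+eG_2(e)-\mathcal{G}_2(e)$. Under smoothness the condition is equivalent to $cH_v(cv,ce)=H_v(v,e)$ and $cH_e(cv,ce)=H_e(v,e)$ for all $c>0$. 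Computing $H_e(v,e)=(e-v)G_2'(e)$ and $H_v(v,e)=-\alpha G_1'(v)-G_2(e)$, the first identity reduces (after cancelling $(e-v)$) to $c^2G_2'(ce)=G_2'(e)$, which forces $G_2'(x)=K/x^2$ and hence $G_2(x)=-K/x+C$ for constants $K,C$. Substituting into the second identity, the $K/e$ terms cancel on both sides and I am left with $c\phi(cv)=\phi(v)$ for $\phi(v):=\alpha G_1'(v)+C$, whose general solution is $\phi(v)=B/v$; thus $G_1'(v)=B/(\alpha v)-C/\alpha$.

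Now I pin down the constants using the FZ regularity. Since $G_1$ is evaluated at returns $Y$ of either sign, it must be weakly increasing on all of $\mathbb{R}$; integrating gives $G_1(v)=(B/\alpha)\log|v|-(C/\alpha)v+\mathrm{const}$, which for $B\ne 0$ has a logarithmic singularity at $v=0$ sending $G_1(0^-)$ and $G_1(0^+)$ to opposite infinities, so monotonicity across the origin forces $B=0$. The limit $G_1'(v)\to -C/\alpha$ as $|v|\to\infty$ then gives $C\le 0$, while strict positivity of $G_2(e)=-K/e+C$ on $e<0$, combined with $G_2(e)\to C$ as $e\to-\infty$, gives $C\ge 0$. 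Hence $B=C=0$, so $G_1$ is constant (set to zero by the location normalization) and $G_2(x)=-K/x$ with $K>0$. Absorbing $K$ into the overall positive scale of the loss fixes $G_2(x)=1/x$ up to a sign convention. Plugging $G_1\equiv 0$, $G_2(x)=1/x$, and $\mathcal{G}_2(e)=\log(-e)$ into (\ref{eqnFZloss}) and simplifying recovers (\ref{eqnFZ0}). The converse direction is immediate, since every term of $L_{FZ0}$ transforms by a $(v,e)$-independent additive amount (a constant or $\log c$) under $(Y,v,e)\mapsto(cY,cv,ce)$.

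The main obstacle will be the elimination of the free parameter $B$: if one imposed $G_1$ weakly increasing only on $(-\infty,0)$, the one-parameter family indexed by $B/K$ would survive, which is exactly the $\varphi_1/\varphi_2$ shape parameter mentioned in the footnote. The uniqueness therefore hinges critically on the fact that $G_1(Y)$ is evaluated at returns of either sign, so weak monotonicity must be enforced \emph{globally} across the origin, and that is what kills the extra degree of freedom.
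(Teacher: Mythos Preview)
Your argument is essentially correct and arrives at the right answer, but it takes a genuinely different route from the paper. The paper's proof is a two-step citation: it invokes Theorem~C.3 of Nolde and Ziegel (2017), which already characterizes the zero-homogeneous FZ class (under $e<0$) as $G_1(x)=\varphi_1\mathbf{1}\{x\ge 0\}$, $G_2(x)=-\varphi_2/x$ with $\varphi_1\ge 0$, $\varphi_2>0$; it then observes that under $v<0$ the $\varphi_1$ term contributes only a $Y$-dependent piece to the loss, hence drops out of loss differences, leaving only the scale $\varphi_2$. Your approach, by contrast, is self-contained: you derive the scaling identities $c^2G_2'(ce)=G_2'(e)$ and $c\phi(cv)=\phi(v)$ directly from the homogeneity requirement in the region $Y>v$ and solve them explicitly. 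This buys independence from an external reference at the price of an implicit smoothness assumption on $G_1,G_2$ restricted to $(-\infty,0)$.

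There is one imprecision worth tightening. Your homogeneity argument only constrains $G_1$ on $(-\infty,0)$, since $v$ is assumed strictly negative and $G_1(Y)$ cancels in loss differences; it says nothing about $G_1|_{[0,\infty)}$. So the sentence ``sending $G_1(0^-)$ and $G_1(0^+)$ to opposite infinities'' overreaches: $G_1(0^+)$ is not determined by your equations. The conclusion $B=0$ still follows, but for a slightly different reason: for $B>0$ one has $G_1'(v)\to-\infty$ as $v\to0^-$, violating monotonicity already on $(-\infty,0)$; for $B<0$ one has $G_1(0^-)=+\infty$, which cannot be majorized by any finite value $G_1(0)$, so no weakly increasing finite extension to $\mathbb{R}$ exists. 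Relatedly, your closing identification of $B/K$ with the paper's $\varphi_1/\varphi_2$ is not quite right: $\varphi_1$ indexes a \emph{step} in $G_1$ at zero (which is a valid FZ choice that becomes invisible to loss differences once $v<0$), whereas your $B$ indexes a logarithmic branch on $(-\infty,0)$ that is never a valid FZ $G_1$ because it cannot be extended across the origin. The two parameters play different roles, even though both are eliminated in the end.
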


All proofs are presented in Appendix A. In Figure \ref{figFZ0loss} we plot
$L_{FZ0}$ when $Y=-1.$ In the left panel we fix $e=-2.06$ and vary $v,$ and in
the right panel we fix $v=-1.64$ and vary $e.$ (These values for $\left(
v,e\right)  $ are the $\alpha=0.05$ VaR and ES from a standard Normal
distribution.) As neither of these are the complete loss function, the minimum
is not zero in either panel. The left panel shows that the implied VaR loss
function resembles the \textquotedblleft tick\textquotedblright\ loss function
from quantile estimation, see Komunjer (2005) for example. In the right panel
we see that the implied ES loss function resembles the \textquotedblleft
QLIKE\textquotedblright\ loss function from volatility forecasting, see Patton
(2011) for example. In both panels, values of $\left(  v,e\right)  $ where
$v<e$ are presented with a dashed line, as by definition $\mathrm{ES}_{t}$ is
below $\mathrm{VaR}_{t},$ and so such values that would never be considered in
practice. In Figure \ref{figFZ0contour} we plot the contours of expected FZ0
loss for a standard Normal random variable. The minimum value, which is
attained when $\left(  v,e\right)  =\left(  -1.64,-2.06\right)  $, is marked
with a star, and we see that the \textquotedblleft iso-expected
loss\textquotedblright\ contours are convex. Fissler (2017) shows that
convexity of iso-expected loss contours holds more generally for the FZ0 loss
function under any distribution with finite first moments, unique $\alpha
$-quantiles, continuous densities, and negative ES.

\bigskip

[ INSERT FIGURES \ref{figFZ0loss} AND\ \ref{figFZ0contour} ABOUT HERE ]

\bigskip

With the FZ0 loss function in hand, it is then possible to consider
semiparametric dynamic models for ES and\ VaR:%
\begin{equation}
\left(  \mathrm{VaR}_{t},\mathrm{ES}_{t}\right)  =\left(  v\left(
\mathbf{Z}_{t-1};\mathbf{\theta}\right)  ,e\left(  \mathbf{Z}_{t-1}%
;\mathbf{\theta}\right)  \right)
\end{equation}
that is, where the true VaR and ES are some specified parametric functions of
elements of the information set, $\mathbf{Z}_{t-1}\in\mathcal{F}_{t-1}.$ The
parameters of this model are estimated via:%
\begin{equation}
\mathbf{\hat{\theta}}_{T}=\arg\min_{\mathbf{\theta}}~\frac{1}{T}%
{\displaystyle\sum\nolimits_{t=1}^{T}}
L_{FZ0}\left(  Y_{t},v\left(  \mathbf{Z}_{t-1};\mathbf{\theta}\right)
,e\left(  \mathbf{Z}_{t-1};\mathbf{\theta}\right)  ;\alpha\right)
\end{equation}
Such models impose a parametric structure on the dynamics of VaR and\ ES,
through their relationship with lagged information, but require no
assumptions, beyond regularity conditions, on the conditional distribution of
returns. In this sense, these models are semiparametric. Using theory for
M-estimators (see White (1994) and Newey and McFadden (1994) for example) we
establish in Section \ref{sTHEORY} below the asymptotic properties of such
estimators. Before doing so, we first consider some new dynamic specifications
for\ ES and\ VaR.

\subsection{\label{sGAS}A GAS model for ES and VaR}

One of the challenges in specifying a dynamic model for a risk measure, or any
other quantity of interest, is the mapping from lagged information to the
current value of the variable. Our first proposed specification for ES
and\ VaR draws on the work of Creal, \textit{et al.} (2013) and\ Harvey
(2013), who proposed a general class of models called \textquotedblleft
generalized autoregressive score\textquotedblright\ (GAS)\ models by the
former authors, and \textquotedblleft dynamic conditional
score\textquotedblright\ models by the latter author. In both cases the models
start from an assumption that the target variable has some parametric
conditional distribution, where the parameter (vector) of that distribution
follows a GARCH-like equation. The forcing variable in the model is the lagged
score of the log-likelihood, scaled by some positive definite matrix, a common
choice for which is the inverse Hessian. This specification nests many well
known models, including ARMA, GARCH (Bollerslev, 1986) and ACD (Engle and
Russell, 1998) models. See Koopman \textit{et al.} (2016) for an overview of
GAS\ and related models.

We adopt this modeling approach and apply it to our M-estimation problem. In
this application, the forcing variable is a function of the derivative and
Hessian of the $L_{FZ0}$ loss function rather than a log-likelihood. We will
consider the following GAS(1,1) model for ES and VaR:%
\begin{equation}
\left[
\begin{array}
[c]{c}%
v_{t+1}\\
e_{t+1}%
\end{array}
\right]  =\mathbf{w+B}\left[
\begin{array}
[c]{c}%
v_{t}\\
e_{t}%
\end{array}
\right]  +\mathbf{AH}_{t}^{-1}\mathbf{\nabla}_{t}%
\end{equation}
where $\mathbf{w}$ is a $\left(  2\times1\right)  $ vector and $\mathbf{B}$
and $\mathbf{A}$ are $\left(  2\times2\right)  $ matrices. The forcing
variable in this specification is comprised of two components, the first is
the score:\footnote{Note that the expression given for $\partial
L_{FZ0}/\partial v_{t}$ only holds for $Y_{t}\neq v_{t}.$ As we assume that
$Y_{t}$ is continuously distributed, this holds with probability one.}%
\begin{align}
\mathbf{\nabla}_{t}  & \equiv\left[
\begin{array}
[c]{c}%
\partial L_{FZ0}\left(  Y_{t},v_{t},e_{t};\alpha\right)  /\partial v_{t}\\
\partial L_{FZ0}\left(  Y_{t},v_{t},e_{t};\alpha\right)  /\partial e_{t}%
\end{array}
\right]  =\left[
\begin{array}
[c]{c}%
\frac{1}{\alpha v_{t}e_{t}}\lambda_{v,t}\\
\frac{-1}{\alpha e_{t}^{2}}\left(  \lambda_{v,t}+\alpha\lambda_{e,t}\right)
\end{array}
\right] \\
\text{where \ \ }\lambda_{v,t}  & \equiv-v_{t}\left(  \mathbf{1}\left\{
Y_{t}\leq v_{t}\right\}  -\alpha\right) \label{eqnLAMv}\\
\lambda_{e,t}  & \equiv\frac{1}{\alpha}\mathbf{1}\left\{  Y_{t}\leq
v_{t}\right\}  Y_{t}-e_{t}\label{eqnLAMe}%
\end{align}
The scaling matrix, $\mathbf{H}_{t},$ is related to the Hessian:%
\begin{equation}
\mathbf{I}_{t}\equiv\left[
\begin{array}
[c]{cc}%
\frac{\partial^{2}\mathbb{E}_{t-1}\left[  L_{FZ0}\left(  Y_{t},v_{t}%
,e_{t}\right)  \right]  }{\partial v_{t}^{2}} & \frac{\partial^{2}%
\mathbb{E}_{t-1}\left[  L_{FZ0}\left(  Y_{t},v_{t},e_{t}\right)  \right]
}{\partial v_{t}\partial e_{t}}\\
\bullet & \frac{\partial^{2}\mathbb{E}_{t-1}\left[  L_{FZ0}\left(  Y_{t}%
,v_{t},e_{t}\right)  \right]  }{\partial e_{t}^{2}}%
\end{array}
\right]  =\left[
\begin{array}
[c]{cc}%
-\frac{f_{t}\left(  v_{t}\right)  }{\alpha e_{t}} & 0\\
0 & \frac{1}{e_{t}^{2}}%
\end{array}
\right]
\end{equation}
The second equality above exploits the fact that $\partial^{2}\mathbb{E}%
_{t-1}\left[  L_{FZ0}\left(  Y_{t},v_{t},e_{t};\alpha\right)  \right]
/\partial v_{t}\partial e_{t}=0$ under the assumption that the dynamics for
VaR and\ ES are correctly specified. The first element of the matrix
$\mathbf{I}_{t}$ depends on the unknown conditional density of $Y_{t}.$ We
would like to avoid estimating this density, and we approximate the term
$f_{t}\left(  v_{t}\right)  $ as being proportional to $v_{t}^{-1}.$ This
approximation holds exactly if $Y_{t}$ is a zero-mean location-scale random
variable, $Y_{t}=\sigma_{t}\eta_{t}$, where\ $\eta_{t}\thicksim iid~F_{\eta
}\left(  0,1\right)  ,$ as in that case we have:%
\begin{equation}
f_{t}\left(  v_{t}\right)  =f_{t}\left(  \sigma_{t}v_{\alpha}\right)
=\frac{1}{\sigma_{t}}f_{\eta}\left(  v_{\alpha}\right)  \equiv k_{\alpha}%
\frac{1}{v_{t}}%
\end{equation}
where $k_{\alpha}\equiv v_{\alpha}f_{\eta}\left(  v_{\alpha}\right)  $ is a
constant with the same sign as $v_{t}$. We define $\mathbf{H}_{t}$ to equal
$\mathbf{I}_{t}$ with the first element replaced using the approximation in
the above equation.\footnote{Note that we do \textit{not }use the fact that
the scaling matrix is exactly the inverse Hessian (e.g., by invoking the
information matrix equality) in our empirical application or our theoretical
analysis. Also, note that if we considered a value of $\alpha$ for which
$v_{t}=0,$ then $v_{\alpha}=0$ and we cannot justify our approximation using
this approach. However, we focus on cases where $\alpha\ll1/2,$ and so we are
comfortable assuming $v_{t}\neq0,$ making $k_{\alpha}$ invertible.} The
forcing variable in our GAS model for VaR and\ ES then becomes:%
\begin{equation}
\mathbf{H}_{t}^{-1}\mathbf{\nabla}_{t}=\left[
\begin{array}
[c]{c}%
\frac{-1}{k_{\alpha}}\lambda_{v,t}\\
\frac{-1}{\alpha}\left(  \lambda_{v,t}+\alpha\lambda_{e,t}\right)
\end{array}
\right]
\end{equation}
Notice that the second term in the model is a linear combination of the two
elements of the forcing variable, and since the forcing variable is
premultiplied by a coefficient matrix, say $\mathbf{\tilde{A}},$ we can
equivalently use%
\begin{align}
\mathbf{\tilde{A}H}_{t}^{-1}\mathbf{\nabla}_{t}  & =\mathbf{A\lambda}_{t}\\
\text{where \ }\mathbf{\lambda}_{t}  & \equiv\left[  \lambda_{v,t}%
,\lambda_{e,t}\right]  ^{\prime}\nonumber
\end{align}
We choose to work with the $\mathbf{A\lambda}_{t}$ parameterization, as the
two elements of this forcing variable $\left(  \lambda_{v,t},\lambda
_{e,t}\right)  $ are not directly correlated, while the elements of
$\mathbf{H}_{t}^{-1}\mathbf{\nabla}_{t}$ are correlated due to the overlapping
term ($\lambda_{v,t}$) appearing in both elements. This aids the
interpretation of the results of the model without changing its fit.

To gain some intuition for how past returns affect current forecasts of ES and
VaR in this model, consider the \textquotedblleft news impact
curve\textquotedblright\ of this model, which presents $\left(  v_{t+1}%
,e_{t+1}\right)  $ as a function of $Y_{t}$ through its impact on
$\mathbf{\lambda}_{t}\equiv\left[  \lambda_{v,t},\lambda\,_{e,t}\right]
^{\prime},$ holding all other variables constant. Figure \ref{figNIC} shows
these two curves for $\alpha=0.05,$ using the estimated parameters for this
model when applied to daily returns on the S\&P 500 index (details are
presented in Section \ref{sAPPLICATION} below). We consider two values for the
\textquotedblleft current\textquotedblright\ value of $\left(  v,e\right)  $:
10\% above and below the long-run average for these variables. We see that for
values where~$Y_{t}>v_{t},$ the news impact curves are flat, reflecting the
fact that on those days the value of the realized return does not enter the
forcing variable. When $Y_{t}\leq v_{t},$ we see that ES and VaR react
linearly to $Y$ and this reaction is through the $\lambda_{e,t}$ forcing
variable; the reaction through the $\lambda_{v,t}$ forcing variable is a
simple step (down) in both of these risk measures.

\bigskip

[ INSERT FIGURE \ref{figNIC} ABOUT HERE ]

\subsection{\label{sFZ1F}A one-factor GAS\ model for ES and VaR}

The specification in Section \ref{sGAS} allows ES and VaR to evolve as two
separate, correlated, processes. In many risk forecasting applications, a
useful simpler model is one based on a structure with only one time-varying
risk measure, e.g. volatility. We will consider a one-factor model in this
section, and will name the model in Section \ref{sGAS} a \textquotedblleft
two-factor\textquotedblright\ GAS model.

Consider the following one-factor GAS model for ES and VaR, where both risk
measures are driven by a single variable, $\kappa_{t}$:%
\begin{align}
v_{t}  & =a\exp\left\{  \kappa_{t}\right\} \\
e_{t}  & =b\exp\left\{  \kappa_{t}\right\}  \text{, \ where }b<a<0\nonumber\\
\text{and \ }\kappa_{t}  & =\omega+\beta\kappa_{t-1}+\gamma H_{t-1}%
^{-1}s_{t-1}\nonumber
\end{align}
The forcing variable, $H_{t-1}^{-1}s_{t-1},$ in the evolution equation for
$\kappa_{t}$ is obtained from the FZ0 loss function, plugging in $\left(
a\exp\left\{  \kappa_{t}\right\}  ,b\exp\left\{  \kappa_{t}\right\}  \right)
$ for $\left(  v_{t},e_{t}\right)  $. \ Using details provided in Appendix
B.2, we find that the score and Hessian are:%
\begin{align}
s_{t}  & \equiv\frac{\partial L_{FZ0}\left(  Y_{t},a\exp\left\{  \kappa
_{t}\right\}  ,b\exp\left\{  \kappa_{t}\right\}  ;\alpha\right)  }%
{\partial\kappa}=-\frac{1}{e_{t}}\left(  \frac{1}{\alpha}\mathbf{1}\left\{
Y_{t}\leq v_{t}\right\}  Y_{t}-e_{t}\right) \\
\text{and \ }I_{t}  & \equiv\frac{\partial^{2}\mathbb{E}_{t-1}\left[
L_{FZ0}\left(  Y_{t},a\exp\left\{  \kappa_{t}\right\}  ,b\exp\left\{
\kappa_{t}\right\}  ;\alpha\right)  \right]  }{\partial\kappa_{t}^{2}}%
=\frac{\alpha-k_{\alpha}a_{\alpha}}{\alpha}%
\end{align}
where $k_{\alpha}$ is a negative constant and $a_{\alpha}$ lies between zero
and one. The Hessian, $I_{t}$, turns out to be a constant in this case, and
since we estimate a free coefficient on our forcing variable, we simply set
$H_{t}$ to one. Note that the VaR score, $\lambda_{v,t}=\partial L/\partial
v$, turns out to drop out from the forcing variable. Thus the one-factor GAS
model for ES and\ VaR becomes:%
\begin{equation}
\kappa_{t}=\omega+\beta\kappa_{t-1}+\gamma\frac{-1}{b\exp\left\{  \kappa
_{t-1}\right\}  }\left(  \frac{1}{\alpha}\mathbf{1}\left\{  Y_{t-1}\leq
a\exp\left\{  \kappa_{t-1}\right\}  \right\}  Y_{t-1}-b\exp\left\{
\kappa_{t-1}\right\}  \right) \label{eqnFZ1F}%
\end{equation}

Using the FZ loss function for estimation, we are unable to identify $\omega,$
as there exists $\left(  \tilde{\omega},\tilde{a},\tilde{b}\right)
\neq\left(  \omega,a,b\right)  $ such that both triplets yield identical
sequences of ES and VaR estimates, and thus identical values of the objective
function. We fix $\omega=0$ and forfeit identification of the level of the
series for $\kappa_{t}$, though we of course retain the ability to model and
forecast ES and VaR.\footnote{This one-factor model for ES and VaR can also be
obtained by considering a zero-mean volatility model for $Y_{t}$, with $iid$
standardized residuals, say denoted $\eta_{t}.$ In this case, $\kappa_{t}$ is
the log conditional standard deviation of $Y_{t}$, and $a=F_{\eta}^{-1}\left(
\alpha\right)  $ and $b=\mathbb{E}\left[  \eta|\eta\leq a\right]  .$ (We
exploit this interpretation when linking these models to GARCH models in
Section \ref{sGFZ} below.) The lack of identification of $\omega$ means that
we do not identify the level of log volatility.} Foreshadowing the empirical
results in Section \ref{sAPPLICATION}, we find that this one-factor GAS model
outperforms the two-factor GAS model in out-of-sample forecasts for most of
the asset return series that we study.

\subsection{Existing dynamic models for ES and VaR}

As noted in the introduction, there is a relative paucity of dynamic models
for ES and VaR, but there is not a complete absence of such models. The
simplest existing model is based on a simple rolling window estimate of these
quantities:
\begin{align}
\widehat{\mathrm{VaR}}_{t}  & =\widehat{\text{\textit{Quantile}}}\left\{
Y_{s}\right\}  _{s=t-m}^{t-1}\\
\widehat{\mathrm{ES}}_{t}  & =\frac{1}{\alpha m}%
{\displaystyle\sum\limits_{s=t-m}^{t-1}}
Y_{s}\mathbf{1}\left\{  Y_{s}\leq\widehat{\mathrm{VaR}}_{s}\right\} \nonumber
\end{align}
where $\widehat{\text{\textit{Quantile}}}\left\{  Y_{s}\right\}
_{s=t-m}^{t-1} $ denotes the sample quantile of $Y_{s}$ over the period
$s\in\left[  t-m,t-1\right]  .$ Common choices for the window size, $m,$
include 125, 250 and 500, corresponding to six months, one year and two years
of daily return observations respectively.

A more challenging competitor for the new ES and VaR models proposed in this
paper are those based on ARMA-GARCH dynamics for the conditional mean and
variance, accompanied by some assumption for the distribution of the
standardized residuals. These models all take the form:%
\begin{align}
Y_{t}  & =\mu_{t}+\sigma_{t}\eta_{t}\\
\eta_{t}  & \thicksim iid~F_{\eta}\left(  0,1\right) \nonumber
\end{align}
where $\mu_{t}$ and $\sigma_{t}^{2}$ are specified to follow some ARMA
and\ GARCH model, and $F_{\eta}\left(  0,1\right)  $ is some arbitrary,
strictly increasing, distribution with mean zero and variance one. What
remains is to specify a distribution for the standardized residual, $\eta_{t}%
$. Given a choice for $F_{\eta},$ VaR and\ ES forecasts are obtained as:%
\begin{align}
v_{t}  & =\mu_{t}+a\sigma_{t}\text{, \ \ where \ }a=F_{\eta}^{-1}\left(
\alpha\right) \\
e_{t}  & =\mu_{t}+b\sigma_{t}\text{, \ \ where \ }b=\mathbb{E}\left[  \eta
_{t}|\eta_{t}\leq a\right] \nonumber
\end{align}
Two parametric choices for $F_{\eta}$ are common in the literature:%
\begin{align}
\eta_{t}  & \thicksim iid~N\left(  0,1\right) \\
\eta_{t}  & \thicksim iid~Skew~t\left(  0,1,\nu,\lambda\right) \nonumber
\end{align}
There are various skew $t$ distributions used in the literature; in the
empirical analysis below we use that of Hansen (1994). A nonparametric
alternative is to estimate the distribution of $\eta_{t}$ using the empirical
distribution function (EDF), an approach that is also known as
\textquotedblleft filtered historical simulation,\textquotedblright\ and one
that is perhaps the best existing model for ES, see the survey by Engle and
Manganelli (2004b).\footnote{Some authors have also considered modeling the
tail of $F_{\eta}$ using extreme value theory, however for the relatively
non-extreme values of $\alpha$ we consider here, past work (e.g., Engle and
Manganelli (2004b), Nolde and Ziegel (2016) and Taylor (2017)) has found\ EVT
to perform no better than the EDF, and so we do not include it in our
analysis.} We consider all of these models in our empirical analysis
in\ Section \ref{sAPPLICATION}.

\subsection{\label{sGASGARCH}GARCH and ES/VaR estimation}

In this section we consider two extensions of the models presented above, in
an attempt to combine the success and parsimony of GARCH models with this
paper's focus on ES and\ VaR forecasting.

\subsubsection{\label{sGFZ}Estimating a GARCH model via FZ minimization}

If an ARMA-GARCH model, including the specification for the distribution of
standardized residuals, is correctly specified for the conditional
distribution of an asset return, then maximum likelihood is the most efficient
estimation method, and should naturally be adopted. If, on the other hand, we
consider an ARMA-GARCH model only as a useful approximation to the true
conditional distribution, then it is no longer clear that MLE is optimal.\ In
particular, if the application of the model is to ES and VaR forecasting, then
we might be able to improve the fitted ARMA-GARCH model by estimating the
parameters of that model via FZ loss minimization, as discussed in Section
\ref{sFZ0}. This estimation method is related to one discussed in Remark 1 of
Francq and Zako\"{\i}an (2015).

Consider the following model for asset returns:%
\begin{align}
Y_{t}  & =\kappa_{t}\eta_{t}\text{, \ \ }\eta_{t}\thicksim iid~F_{\eta}\left(
0,1\right) \\
\kappa_{t}^{2}  & =\omega+\beta\kappa_{t-1}^{2}+\gamma Y_{t-1}^{2}\nonumber
\end{align}
The variable $\kappa_{t}^{2}$ is the conditional variance and is assumed to
follow a GARCH(1,1) process. This model implies a structure analogous to the
one-factor GAS\ model presented in Section \ref{sFZ1F}, as we find:%
\begin{align}
v_{t}  & =a\cdot\kappa_{t}\text{, \ where }a=F_{\eta}^{-1}\left(
\alpha\right) \\
e_{t}  & =b\cdot\kappa_{t}\text{, \ where }b=\mathbb{E}\left[  \eta|\eta\leq
a\right] \nonumber
\end{align}
Some further results on VaR and ES in dynamic location-scale models are
presented in Appendix B.3. To apply this model to\ VaR and\ ES forecasting, we
also have to estimate the VaR and\ ES of the standardized residual, denoted
$\left(  a,b\right)  .$ Rather than estimating the parameters of this model
using (Q)MLE, we consider here estimating the via FZ loss minimization. As in
the one-factor GAS model, $\omega$ is unidentified and we set it to one, so
the parameter vector to be estimated is $\left(  \beta,\gamma,a,b\right)  $.
This estimation approach leads to a fitted GARCH model that is tailored to
provide the best-fitting ES and VaR forecasts, rather than the best-fitting
volatility forecasts.

\subsubsection{\label{sHYBRID}A hybrid GAS/GARCH model}

Finally, we consider a direct combination of the forcing variable suggested by
a GAS structure for a one-factor model of returns, described in equation
(\ref{eqnFZ1F}), with the successful GARCH model for volatility. We specify:%
\begin{align}
Y_{t}  & =\exp\left\{  \kappa_{t}\right\}  \eta_{t}\text{, \ \ }\eta
_{t}\thicksim iid~F_{\eta}\left(  0,1\right) \\
\kappa_{t}  & =\omega+\beta\kappa_{t-1}+\gamma\left(  -\frac{1}{e_{t-1}%
}\left(  \frac{1}{\alpha}\mathbf{1}\left\{  Y_{t-1}\leq v_{t-1}\right\}
Y_{t-1}-e_{t-1}\right)  \right)  +\delta\log\left\vert Y_{t-1}\right\vert
\nonumber
\end{align}
The variable~$\kappa_{t}$ is the log-volatility, identified up to scale. As
the latent variable in this model is log-volatility, we use the lagged log
absolute return rather than the lagged squared return, so that the units
remain in line for the evolution equation for $\kappa_{t}$. There are five
parameters in this model $\left(  \beta,\gamma,\delta,a,b\right)  ,$ and we
estimate them using FZ loss minimization.

\section{\label{sTHEORY}Estimation of dynamic models for ES and\ VaR}

This section presents asymptotic theory for the estimation of dynamic ES
and\ VaR models by minimizing FZ loss. Given a sample of observations $\left(
y_{1},\cdots,y_{T}\right)  $ and a constant $\alpha\in\left(  0,0.5\right)  $,
we are interested in estimating and forecasting the conditional $\alpha$
quantile (VaR) and corresponding expected shortfall of $Y_{t}$. Suppose
$Y_{t}$ is a real-valued random variable that has, conditional on information
set $\mathcal{F}_{t-1}$, distribution function $F_{t}\left(  \cdot
|\mathcal{F}_{t-1}\right)  \ $and corresponding density function $f_{t}\left(
\cdot|\mathcal{F}_{t-1}\right)  $. Let $v_{1}(\mathbf{\theta}^{0})$ and
$e_{1}(\mathbf{\theta}^{0})$ be some initial conditions for VaR and\ ES and
let $\mathcal{F}_{t-1}=\sigma\{Y_{t-1},\mathbf{X}_{t-1},\cdots,Y_{1}%
,\mathbf{X}_{1}\},$ where $\mathbf{X}_{t}$ is a vector of exogenous variables
or predetermined variables, be the information set available for forecasting
$Y_{t}$. The vector of unknown parameters to be estimated is $\mathbf{\theta
}^{0}\in\Theta\subset\mathbb{R}^{p}$.

The conditional VaR and ES of $Y_{t}$ at probability level $\alpha,$ that is
$\mathrm{VaR}_{\alpha}\left(  Y_{t}|\mathcal{F}_{t-1}\right)  $ and
$\mathrm{ES}_{\alpha}\left(  Y_{t}|\mathcal{F}_{t-1}\right)  $, are assumed to
follow some dynamic model:
\begin{equation}
\left[
\begin{array}
[c]{c}%
\mathrm{VaR}_{\alpha}\left(  Y_{t}|\mathcal{F}_{t-1}\right) \\
\mathrm{ES}_{\alpha}\left(  Y_{t}|\mathcal{F}_{t-1}\right)
\end{array}
\right]  =\left[
\begin{array}
[c]{c}%
v(Y_{t-1},\mathbf{X}_{t-1},\cdots,Y_{1},\mathbf{X}_{1};\mathbf{\theta}^{0})\\
e(Y_{t-1},\mathbf{X}_{t-1},\cdots,Y_{1},\mathbf{X}_{1};\mathbf{\theta}^{0})
\end{array}
\right]  \equiv\left[
\begin{array}
[c]{c}%
v_{t}(\mathbf{\theta}^{0})\\
e_{t}(\mathbf{\theta}^{0})
\end{array}
\right]  ,~t=1,\cdots,T,
\end{equation}
The unknown parameters are estimated as:%
\begin{align}
\mathbf{\hat{\theta}}_{T}  & \equiv\arg\min_{\mathbf{\theta}\in\Theta}%
~L_{T}(\mathbf{\theta})\\
\text{where \ }L_{T}(\mathbf{\theta})  & =\frac{1}{T}\sum_{t=1}^{T}%
L_{FZ0}\left(  Y_{t},v_{t}\left(  \mathbf{\theta}\right)  ,e_{t}\left(
\mathbf{\theta}\right)  ;\alpha\right) \nonumber
\end{align}
and the FZ loss function $L_{FZ0}$ is defined in equation (\ref{eqnFZ0}).
Below we provide conditions under which estimation of these parameters via FZ
loss minimization leads to a consistent and asymptotically normal estimator,
with standard errors that can be consistently estimated.

\begin{assumption}
(A) $L\left(  Y_{t},v_{t}\left(  \mathbf{\theta}\right)  ,e_{t}\left(
\mathbf{\theta}\right)  ;\alpha\right)  $ obeys the uniform law of large numbers.

(B)(i) $\Theta$ is a compact subset of $\mathbb{R}^{p}$ for $p<\infty.$
(ii)$\{Y_{t}\}_{t=1}^{\infty}$ is a strictly stationary process. Conditional
on all the past information $\mathcal{F}_{t-1}$, the distribution of $Y_{t}$
is $F_{t}\left(  \cdot|\mathcal{F}_{t-1}\right)  $ which, for all $t,$ belongs
to a class of distribution functions on $\mathbb{R}$ with finite first moments
and unique $\alpha$-quantiles. (iii) $\forall t$, both $v_{t}(\mathbf{\theta
})$ and $e_{t}(\mathbf{\theta})$ are $\mathcal{F}_{t-1}$-measurable and
continuous in $\mathbf{\theta}$. (iv) If $\Pr\left[  v_{t}(\mathbf{\theta
})=v_{t}(\mathbf{\theta}^{0})\cap e_{t}(\mathbf{\theta})=e_{t}(\mathbf{\theta
}^{0})\right]  =1~\forall~t$, then $\mathbf{\theta=\theta}^{0}.$
\end{assumption}

\begin{theorem}
[Consistency]\label{thmCONSISTENCY} Under Assumption 1, $\mathbf{\hat{\theta}%
}_{T}\overset{p}{\rightarrow}\mathbf{\theta}^{0}$ as $T\rightarrow\infty.$
\end{theorem}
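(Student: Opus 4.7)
The plan is to apply a standard M-estimator consistency result (e.g., Theorem 2.1 of Newey and McFadden, 1994) by verifying three ingredients: (i) compactness of the parameter space, (ii) uniform convergence of $L_T(\theta)$ to a continuous population criterion $Q_0(\theta)\equiv \mathbb{E}[L_{FZ0}(Y_t,v_t(\theta),e_t(\theta);\alpha)]$, and (iii) unique minimization of $Q_0$ at $\theta^0$. Compactness is given by Assumption 1(B)(i). Uniform convergence is precisely Assumption 1(A), and continuity of $Q_0$ follows from the continuity of $v_t(\cdot)$ and $e_t(\cdot)$ in Assumption 1(B)(iii) together with a dominated-convergence argument under the envelope implicit in the ULLN.

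The substantive step is identification. By the tower property and stationarity,
\begin{equation}
Q_0(\theta) \;=\; \mathbb{E}\bigl[\,\mathbb{E}_{t-1}[L_{FZ0}(Y_t,v_t(\theta),e_t(\theta);\alpha)]\,\bigr].
\end{equation}
By Proposition \ref{propFZ0} (drawing on Fissler and Ziegel, 2016), on the class of conditional distributions in Assumption 1(B)(ii) with strictly negative VaR and ES the FZ0 loss is \emph{strictly} consistent for the pair $(\mathrm{VaR}_t,\mathrm{ES}_t)$. Hence, conditional on $\mathcal{F}_{t-1}$, the map $(v,e)\mapsto \mathbb{E}_{t-1}[L_{FZ0}(Y_t,v,e;\alpha)]$ is uniquely minimized at $(v_t(\theta^0),e_t(\theta^0))$. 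Therefore $Q_0(\theta)\geq Q_0(\theta^0)$, with equality only if $(v_t(\theta),e_t(\theta))=(v_t(\theta^0),e_t(\theta^0))$ almost surely for every $t$. Assumption 1(B)(iv) then forces $\theta=\theta^0$, giving unique identification.

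Combining (i)–(iii) in the standard way: for any open neighborhood $N$ of $\theta^0$, continuity of $Q_0$ and compactness of $\Theta\setminus N$ imply $\inf_{\theta\in\Theta\setminus N} Q_0(\theta)>Q_0(\theta^0)$. Uniform convergence then guarantees that, with probability tending to one, $L_T$ is within $\frac{1}{2}\bigl[\inf_{\Theta\setminus N}Q_0-Q_0(\theta^0)\bigr]$ of $Q_0$ uniformly, so that $\hat\theta_T\in N$ eventually. Since $N$ was arbitrary, $\hat\theta_T\overset{p}{\to}\theta^0$.

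The main obstacle is identification, and specifically invoking strict consistency of the FZ0 loss: this tacitly requires $v_t(\theta),e_t(\theta)<0$ throughout the admissible parameter space (so that $\log(-e)$ is defined and Proposition \ref{propFZ0} applies), a condition the model specification must guarantee or the proof must build in. A secondary nuisance is that the ULLN in Assumption 1(A) is taken as a high-level primitive; verifying it from lower-level conditions would require ergodicity of $\{Y_t,\mathbf{X}_t\}$, an integrable envelope for $L_{FZ0}(Y_t,v_t(\theta),e_t(\theta);\alpha)$ over $\Theta$, and stochastic equicontinuity in $\theta$. Granting these, the remaining argument is a direct verification.
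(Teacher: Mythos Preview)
Your proof is correct and follows essentially the same route as the paper: both invoke Theorem 2.1 of Newey and McFadden (1994) and establish unique identification by combining the strict consistency of the FZ0 loss for $(\mathrm{VaR},\mathrm{ES})$ (the paper cites Corollary 5.5 of Fissler and Ziegel (2016) directly rather than Proposition \ref{propFZ0}) with Assumption 1(B)(iv). Your additional remarks on the negativity requirement for $(v_t(\theta),e_t(\theta))$ and on the primitive conditions underlying the ULLN are valid caveats that the paper leaves implicit.
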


The proof of Theorem \ref{thmCONSISTENCY}, provided in Appendix A, is
straightforward given Theorem 2.1 of Newey and\ McFadden (1994) and Corollary
5.5 of Fissler and Ziegel (2016). Note that a variety of uniform laws of large
numbers (our Assumption 1(A)) are available for the time series applications
we consider here, see Andrews (1987) and P\"{o}tscher and Prucha (1989) for
example. Zwingmann and Holzmann (2016) show that if the $\alpha$-quantile is
not unique (violating our Assumption 1(B)(iii)), then the convergence rate and
asymptotic distribution of $\left(  \hat{v}_{T},\hat{e}_{T}\right)  $ are
non-standard, even in a setting with $iid$ data. We do not consider such
problematic cases here.

We next turn to the asymptotic distribution of our parameter estimator. In the
assumptions below, $K$ denotes a finite constant that can change from line to
line, and we use $\left\Vert \mathbf{x}\right\Vert $ to denote the Euclidean
norm of a vector $\mathbf{x.}$

\begin{assumption}
(A) For all $t$, we have (i) $v_{t}(\mathbf{\theta})$ and $e_{t}%
(\mathbf{\theta})$ are twice continuously differentiable in $\mathbf{\theta}$,
(ii) $v_{t}(\mathbf{\theta}^{0})\leq0$.

(B) For all $t$, we have (i) Conditional on all the past information
$\mathcal{F}_{t-1}$, $Y_{t}$ has a continuous density $f_{t}\left(
\cdot|\mathcal{F}_{t-1}\right)  $ that satisfies $f_{t}(y|\mathcal{F}%
_{t-1})\leq K<\infty$ and $\left\vert f_{t}(y_{1}|\mathcal{F}_{t-1}%
)-f_{t}(y_{2}|\mathcal{F}_{t})\right\vert \leq K\left\vert y_{1}%
-y_{2}\right\vert $, (ii) $\mathbb{E}\left[  \left\vert Y_{t}\right\vert
^{4+\delta}\right]  \leq K<\infty$, for some $0<\delta<1$.

(C) There exists$~$a neighborhood of $\mathbf{\theta}^{0}$, $\mathcal{N}%
\left(  \mathbf{\theta}^{0}\right)  $, such that for all $t$ we have (i)
$|1/e_{t}(\theta)|\leq K<\infty$, $\forall~\theta\in\mathcal{N}\left(
\mathbf{\theta}^{0}\right)  ,$ (ii) there exist some (possibly stochastic)
$\mathcal{F}_{t-1}$-measurable functions $V(\mathcal{F}_{t-1})$,
$V_{1}(\mathcal{F}_{t-1})$, $H_{1}(\mathcal{F}_{t-1})$, $V_{2}(\mathcal{F}%
_{t-1})$, $H_{2}(\mathcal{F}_{t-1})$ which satisfy $\forall\mathbf{\theta}%
\in\mathcal{N}(\mathbf{\theta}^{0})$: $|v_{t}(\mathbf{\theta})|\leq
V(\mathcal{F}_{t-1})$, $\left\Vert \nabla v_{t}(\mathbf{\theta})\right\Vert
\leq V_{1}(\mathcal{F}_{t-1})$, $\left\Vert \nabla e_{t}(\mathbf{\theta
})\right\Vert \leq H_{1}(\mathcal{F}_{t-1})$, $\left\Vert \nabla^{2}%
v_{t}(\mathbf{\theta})\right\Vert \leq V_{2}(\mathcal{F}_{t-1})$, and
$\left\Vert \nabla^{2}e_{t}(\mathbf{\theta})\right\Vert \leq H_{2}%
(\mathcal{F}_{t-1})$.

(D) For some $0<\delta<1$ and for all $t$ we have (i) $\mathbb{E}\left[
V_{1}(\mathcal{F}_{t-1})^{3+\delta}\right]  $, $\mathbb{E}\left[
H_{1}(\mathcal{F}_{t-1})^{3+\delta}\right]  $, $\mathbb{E}\left[
V_{2}(\mathcal{F}_{t-1})^{\frac{3+\delta}{2}}\right]  $, $\mathbb{E}\left[
H_{2}(\mathcal{F}_{t-1})^{\frac{3+\delta}{2}}\right]  \leq K$, (ii)
$\mathbb{E}\left[  V(\mathcal{F}_{t-1})^{2+\delta}V_{1}(\mathcal{F}%
_{t-1})H_{1}(\mathcal{F}_{t-1})^{2+\delta}\right]  \leq K$,\linebreak(iii)
$\mathbb{E}\left[  H_{1}(\mathcal{F}_{t-1})^{1+\delta}H_{2}(\mathcal{F}%
_{t-1})\left\vert Y_{t}\right\vert ^{2+\delta}\right]  $, $\mathbb{E}\left[
H_{1}(\mathcal{F}_{t-1})^{3+\delta}\left\vert Y_{t}\right\vert ^{2+\delta
}\right]  \leq K.$

(E) The matrix $\mathbf{D}_{T}$ defined in Theorem \ref{thmASYMPNORM} has
eigenvalues bounded below by a positive constant for $T$ sufficiently large.

(F) The sequence $\{T^{-1/2}\sum_{t=1}^{T}g_{t}(\mathbf{\theta}^{0})\}$ obeys
the CLT, where
\begin{align}
g_{t}(\mathbf{\theta}) &  =\frac{\partial L\left(  Y_{t},v_{t}\left(
\mathbf{\theta}\right)  ,e_{t}\left(  \mathbf{\theta}\right)  ;\alpha\right)
}{\partial\mathbf{\theta}}\\
= &  \nabla v_{t}(\mathbf{\theta})^{\prime}\frac{1}{-e_{t}(\mathbf{\theta}%
)}\left(  \frac{1}{\alpha}\mathbf{1}\left\{  Y_{t}\leq v_{t}(\mathbf{\theta
})\right\}  -1\right) \\
+ &  \nabla e_{t}(\mathbf{\theta})^{\prime}\frac{1}{e_{t}(\mathbf{\theta}%
)^{2}}\left(  \frac{1}{\alpha}\mathbf{1}\left\{  Y_{t}\leq v_{t}%
(\mathbf{\theta})\right\}  (v_{t}(\mathbf{\theta})-Y_{t})-v_{t}(\mathbf{\theta
})+e_{t}(\mathbf{\theta})\right) \nonumber
\end{align}

(G) $\left\{  Y_{t}\right\}  $ is $\alpha$-mixing of size $-2q/\left(
q-2\right)  $ for some $q>2.$
\end{assumption}

Most of the above assumptions are standard. Assumption 2(A)(i) imposes that
the VaR is negative, but given our focus on the left-tail $\left(  \alpha
\leq0.5\right)  $ of asset returns, this is not likely a binding constraint.
Assumptions 2(B),(C) and (E) are similar to those in Engle and Manganelli
(2004a). Assumption 2(B)(ii) requires at least $4+\delta$ moments of returns
to exist, however 2(D) may actually increase the number of required moments,
depending on the VaR-ES model employed. For the familiar GARCH(1,1)\ process,
used in our simulation study, it can be shown that we only need to assume that
$4+\delta$ moments exist. Assumption 2(F) allows for some CLT for mixing data
to be invoked, and 2(G) is a standard assumption on the time series dependence
of the data.

\begin{theorem}
[Asymptotic Normality]\label{thmASYMPNORM} Under Assumptions 1 and 2, we have
\begin{equation}
\sqrt{T}\mathbf{A}_{T}^{-1/2}\mathbf{D}_{T}(\mathbf{\hat{\theta}}%
_{T}-\mathbf{\theta}^{0})\overset{d}{\rightarrow}N(0,I)\text{ as }%
T\rightarrow\infty
\end{equation}
where
\begin{align}
\mathbf{D}_{T} &  =\mathbb{E}\left[  T^{-1}\sum_{t=1}^{T}\frac{f_{t}\left(
v_{t}(\mathbf{\theta}^{0})|\mathcal{F}_{t-1}\right)  }{-e_{t}(\mathbf{\theta
}^{0})\alpha}\nabla v_{t}(\mathbf{\theta}^{0})^{\prime}\nabla v_{t}%
(\mathbf{\theta}^{0})+\frac{1}{e_{t}(\mathbf{\theta}^{0})^{2}}\nabla
e_{t}(\mathbf{\theta}^{0})^{\prime}\nabla e_{t}(\mathbf{\theta}^{0})\right] \\
\mathbf{A}_{T} &  =\mathbb{E}\left[  T^{-1}\sum_{t=1}^{T}g_{t}(\mathbf{\theta
}^{0})g_{t}(\mathbf{\theta}^{0})^{\prime}\right]
\end{align}
and $g_{t}$ is defined in Assumption 2(F).
\end{theorem}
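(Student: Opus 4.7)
The plan is to follow the standard argument for asymptotic normality of M-estimators with a non-smooth criterion, in the spirit of Huber (1967), Pakes and Pollard (1989), Newey and McFadden (1994, Theorem 7.1), and Engle and Manganelli (2004a) for CAViaR. Concretely, the target is the linear representation
\begin{equation*}
\sqrt{T}(\mathbf{\hat\theta}_T - \mathbf{\theta}^0) = -\mathbf{D}_T^{-1}\,T^{-1/2}\sum_{t=1}^T g_t(\mathbf{\theta}^0) + o_p(1),
\end{equation*}
from which the stated conclusion follows by combining the CLT in Assumption 2(F), the eigenvalue bound in Assumption 2(E), and Slutsky's theorem. Consistency (Theorem \ref{thmCONSISTENCY}) puts $\mathbf{\hat\theta}_T$ eventually in the interior neighborhood $\mathcal{N}(\mathbf{\theta}^0)$ in which all the smoothness and moment conditions of Assumption 2 apply.

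The first step is to identify $\mathbf{D}_T$ as the Hessian of the population objective at $\mathbf{\theta}^0$. Here I exploit that, although $L_{FZ0}$ contains the indicator $\mathbf{1}\{Y_t\le v_t(\mathbf{\theta})\}$, its conditional expectation is smooth: $\mathbb{E}[\mathbf{1}\{Y_t\le v_t(\mathbf{\theta})\}\mid\mathcal{F}_{t-1}] = F_t(v_t(\mathbf{\theta})\mid\mathcal{F}_{t-1})$. Differentiating $\mathbb{E}[L_{FZ0}(Y_t,v_t(\mathbf{\theta}),e_t(\mathbf{\theta});\alpha)\mid\mathcal{F}_{t-1}]$ twice and evaluating at $\mathbf{\theta}^0$, one uses the population first-order conditions $F_t(v_t(\mathbf{\theta}^0))=\alpha$ and $\alpha e_t(\mathbf{\theta}^0)=\int_{-\infty}^{v_t(\mathbf{\theta}^0)} y f_t(y)\,dy$ (i.e.\ the Fissler–Ziegel identification, Corollary 5.5) to show: (i) $\mathbb{E}[g_t(\mathbf{\theta}^0)]=0$; (ii) the $\nabla^2 v_t$ and $\nabla^2 e_t$ terms in the chain rule drop out because they are multiplied by the conditional expectations of $\partial L/\partial v$ and $\partial L/\partial e$, both zero at $\mathbf{\theta}^0$; and (iii) the off-diagonal block of the $(v,e)$-Hessian vanishes, leaving exactly the expression for $\mathbf{D}_T$ stated in the theorem.

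The main work is a stochastic-equicontinuity argument that replaces the standard mean-value expansion of $\nabla L_T$ (which is unavailable because of the indicator). Define
\begin{equation*}
\nu_T(\mathbf{\theta}) \equiv T^{-1/2}\sum_{t=1}^{T}\bigl[g_t(\mathbf{\theta}) - g_t(\mathbf{\theta}^0)\bigr] - \sqrt{T}\,\mathbb{E}\bigl[T^{-1}\sum_t\{g_t(\mathbf{\theta})-g_t(\mathbf{\theta}^0)\}\bigr].
\end{equation*}
Split $g_t(\mathbf{\theta})$ into a Lipschitz-in-$\mathbf{\theta}$ piece (containing $v_t(\mathbf{\theta})-Y_t$ inside the indicator product, which is continuous at $Y_t=v_t$) and a genuinely discontinuous piece of the form $\nabla v_t(\mathbf{\theta})'[-e_t(\mathbf{\theta})]^{-1}\mathbf{1}\{Y_t\le v_t(\mathbf{\theta})\}/\alpha$. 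The Lipschitz piece is handled with a standard Taylor expansion using the envelope bounds in Assumption 2(C),(D). For the indicator piece, since $v_t(\mathbf{\theta})$ is a parametric function and $\mathbf{\theta}\in\mathbb{R}^p$, the half-space class $\{Y_t\le v_t(\mathbf{\theta})\}$ is a VC-type family, and combining the envelope bounds in Assumption 2(D) with the $\alpha$-mixing rate in Assumption 2(G), the bracketing/chaining argument of Doukhan–Massart–Rio (as used by Engle and Manganelli, 2004a, Lemma A2) yields $\sup_{\|\mathbf{\theta}-\mathbf{\theta}^0\|\le\delta_T} \|\nu_T(\mathbf{\theta})\|=o_p(1)$ for any $\delta_T\downarrow 0$. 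The density bound and Lipschitz condition in Assumption 2(B)(i) control the drift $\mathbb{E}[g_t(\mathbf{\theta})-g_t(\mathbf{\theta}^0)]$, giving $\sqrt{T}\,\mathbb{E}[g_t(\mathbf{\theta})-g_t(\mathbf{\theta}^0)] = \mathbf{D}_T\sqrt{T}(\mathbf{\theta}-\mathbf{\theta}^0) + o(\sqrt{T}\|\mathbf{\theta}-\mathbf{\theta}^0\|)$ for $\mathbf{\theta}\in\mathcal{N}(\mathbf{\theta}^0)$.

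Finally, plug $\mathbf{\theta}=\mathbf{\hat\theta}_T$ and use that $\mathbf{\hat\theta}_T$ is an (approximate) argmin, so $T^{-1/2}\sum_t g_t(\mathbf{\hat\theta}_T)=o_p(1)$: since $L_T$ is piecewise smooth and $Y_t\neq v_t(\mathbf{\hat\theta}_T)$ almost surely (by continuity of $F_t$), the usual interior-minimum first-order condition holds on an event of probability one, or equivalently an $o_p(T^{-1/2})$ approximate-minimizer argument as in Pakes and Pollard (1989) can be used. Rearranging gives the linear representation above, and the theorem follows. I expect the stochastic-equicontinuity step to be the main obstacle, as it is where the non-differentiability of the FZ0 loss bites and where the interaction between the parametric dynamics $v_t(\mathbf{\theta}),e_t(\mathbf{\theta})$ and the mixing/moment assumptions must be controlled carefully.
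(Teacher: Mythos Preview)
Your overall strategy is correct and leads to the same linear representation, but the technical route you take differs from the paper's. The paper follows Huber (1967) as extended to dependent data by Weiss (1991): it works with the \emph{population} score $\lambda_T(\mathbf{\theta})=T^{-1}\sum_t\mathbb{E}[g_t(\mathbf{\theta})]$, takes a mean-value expansion of $\lambda_T(\mathbf{\hat\theta}_T)$ (which \emph{is} smooth), and then closes the gap by verifying Weiss's conditions N3--N4. These are moment bounds on the local oscillation $\mu_t(\mathbf{\theta},d)=\sup_{\|\tau-\theta\|\leq d}\|g_t(\tau)-g_t(\mathbf{\theta})\|$, namely $\mathbb{E}[\mu_t(\mathbf{\theta},d)]\leq bd$ and $\mathbb{E}[\mu_t(\mathbf{\theta},d)^q]\leq cd$ for some $q>2$, established term-by-term using the envelope functions in Assumption 2(C)--(D) and the density bound in 2(B)(i). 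Your approach instead goes through a Pakes--Pollard/Newey--McFadden stochastic-equicontinuity argument on $\nu_T(\mathbf{\theta})$.

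Two comments on the comparison. First, your claim that $\{Y_t\leq v_t(\mathbf{\theta})\}$ is a VC-type class ``since $v_t(\mathbf{\theta})$ is a parametric function and $\mathbf{\theta}\in\mathbb{R}^p$'' is not automatic for the nonlinear recursive specifications (GAS, hybrid) the paper has in mind; finite-dimensional parametrization does not by itself guarantee polynomial covering numbers. The paper's Huber/Weiss route sidesteps this entirely: it never invokes a complexity bound on the indicator class, only pointwise moment bounds on the oscillation, which is why Assumptions 2(C)--(D) are phrased as envelope conditions rather than entropy conditions. Second, Engle and Manganelli (2004a) do not in fact use Doukhan--Massart--Rio chaining; they too follow Weiss (1991), so your attribution there is off. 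Your bracketing program can be made to work, but you would need either a genuine entropy bound on the relevant function class or to replace it with the direct $\mu_t$-moment calculations the paper carries out---at which point the two arguments essentially coincide.
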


An outline of the proof of this theorem is given in Appendix A, and the
detailed lemmas underlying it are provided in the supplemental appendix. The
proof of Theorem \ref{thmASYMPNORM} builds on Huber (1967), Weiss (1991)
and\ Engle and\ Manganelli (2004a), who focused on the estimation of quantiles.

Finally, we present a result for estimating the asymptotic covariance matrix
of $\mathbf{\hat{\theta}}_{T},$ thereby enabling the reporting of standard
errors and confidence intervals.

\begin{assumption}
(A) The deterministic positive sequence $c_{T}$ satisfies $c_{T}=o(1)$ and
$c_{T}^{-1}=o(T^{1/2})$.

(B)(i) $T^{-1}\sum_{t=1}^{T}g_{t}(\mathbf{\theta}^{0})g_{t}(\mathbf{\theta
}^{0})^{^{\prime}}-\mathbf{A}_{T}\overset{p}{\rightarrow}{\mathbf{0}}$, where
$\mathbf{A}_{T}$ is defined in Theorem 2.

(ii) $T^{-1}\sum_{t=1}^{T}\frac{1}{e_{t}(\theta^{0})^{2}}\nabla e_{t}%
(\mathbf{\theta}^{0})^{\prime}\nabla e_{t}(\mathbf{\theta}^{0})-\mathbb{E}%
[T^{-1}\sum_{t=1}^{T}\frac{1}{e_{t}(\mathbf{\theta}^{0})^{2}}\nabla
e_{t}(\mathbf{\theta}^{0})^{\prime}\nabla e_{t}(\mathbf{\theta}^{0}%
)]\overset{p}{\rightarrow}{\mathbf{0}}$.

(iii) $T^{-1}\sum_{t=1}^{T}\frac{f_{t}(v_{t}(\mathbf{\theta}^{0}%
)|\mathcal{F}_{t-1})}{-e_{t}(\mathbf{\theta}^{0})\alpha}\nabla v_{t}%
(\mathbf{\theta}^{0})^{\prime}\nabla v_{t}(\mathbf{\theta}^{0})-\mathbb{E}%
[T^{-1}\sum_{t=1}^{T}\frac{f_{t}(v_{t}(\mathbf{\theta}^{0})|\mathcal{F}%
_{t-1})}{-e_{t}(\mathbf{\theta}^{0})\alpha}\nabla v_{t}(\mathbf{\theta}%
^{0})^{\prime}\nabla v_{t}(\mathbf{\theta}^{0})]\overset{p}{\rightarrow
}{\mathbf{0}}$.
\end{assumption}

\begin{theorem}
\label{thmVCV}Under Assumptions 1-3, $\mathbf{\hat{A}}_{T}-\mathbf{A}%
_{T}\overset{p}{\rightarrow}\mathbf{0}$ and $\mathbf{\hat{D}}_{T}%
-\mathbf{D}_{T}\overset{p}{\rightarrow}\mathbf{0}$, where
\begin{align*}
\mathbf{\hat{A}}_{T}= &  T^{-1}\sum_{t=1}^{T}g_{t}(\mathbf{\hat{\theta}}%
_{T})g_{t}(\mathbf{\hat{\theta}}_{T})^{\prime}\\
\mathbf{\hat{D}}_{T}= &  T^{-1}\sum_{t=1}^{T}\left\{  \frac{1}{2c_{T}%
}\mathbf{1}\left\{  \left\vert y_{t}-v_{t}\left(  \mathbf{\hat{\theta}}%
_{T}\right)  \right\vert <c_{T}\right\}  \frac{\nabla^{\prime}v_{t}\left(
\mathbf{\hat{\theta}}_{T}\right)  \nabla v_{t}\left(  \mathbf{\hat{\theta}%
}_{T}\right)  }{-\alpha e_{t}\left(  \mathbf{\hat{\theta}}_{T}\right)  }%
+\frac{\nabla^{\prime}e_{t}\left(  \mathbf{\hat{\theta}}_{T}\right)  \nabla
e_{t}\left(  \mathbf{\hat{\theta}}_{T}\right)  }{e_{t}^{2}\left(
\mathbf{\hat{\theta}}_{T}\right)  }\right\}
\end{align*}

\end{theorem}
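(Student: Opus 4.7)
The plan is to handle the two statements $\widehat{\mathbf{A}}_T - \mathbf{A}_T \overset{p}{\rightarrow} \mathbf{0}$ and $\widehat{\mathbf{D}}_T - \mathbf{D}_T \overset{p}{\rightarrow} \mathbf{0}$ separately and, in each case, decompose the target difference into (i) a sampling error at the truth $\mathbf{\theta}^0$ and (ii) a parameter-replacement error caused by using $\widehat{\mathbf{\theta}}_T$ in place of $\mathbf{\theta}^0$. Part (i) will be delivered directly by Assumption 3(B), while part (ii) will be controlled using the consistency result in Theorem \ref{thmCONSISTENCY} (and $\sqrt{T}$-consistency from Theorem \ref{thmASYMPNORM} where needed), together with the envelopes in Assumption 2(C)--(D), the bounded Lipschitz density in 2(B)(i), and the mixing condition 2(G).

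For $\widehat{\mathbf{A}}_T - \mathbf{A}_T$, add and subtract $T^{-1}\sum_{t=1}^T g_t(\mathbf{\theta}^0)g_t(\mathbf{\theta}^0)^{\prime}$. The sampling term vanishes by Assumption 3(B)(i). The parameter-replacement term is controlled by a mean-value-type argument for $g_t$: on smooth pieces, the gradient is bounded by products of the envelopes $V_1, H_1, V, 1/|e_t|$ from Assumption 2(C), whose moments are finite by 2(D); the only nonsmooth ingredient is $\mathbf{1}\{Y_t \le v_t(\mathbf{\theta})\}$, but under 2(B)(i) the probability that $Y_t$ lies within an $O_p(T^{-1/2})$ neighborhood of $v_t(\mathbf{\theta}^0)$ is itself $O_p(T^{-1/2})$, so the indicator can be replaced in $L^1$ by its value at $\mathbf{\theta}^0$ with negligible error.

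The same decomposition applies to $\widehat{\mathbf{D}}_T - \mathbf{D}_T$. The $\nabla e_t$ piece is handled exactly as above using Assumption 3(B)(ii), consistency of $\widehat{\mathbf{\theta}}_T$, and the $H_1, H_2$ envelopes. The substantive piece is the kernel estimator of the conditional density $f_t(v_t(\mathbf{\theta}^0)|\mathcal{F}_{t-1})$ given by $(2c_T)^{-1}\mathbf{1}\{|Y_t - v_t(\widehat{\mathbf{\theta}}_T)|<c_T\}$. I would first show, at $\mathbf{\theta}^0$,
\begin{equation*}
T^{-1}\sum_{t=1}^T \frac{1}{2c_T}\mathbf{1}\{|Y_t - v_t(\mathbf{\theta}^0)|<c_T\}\frac{\nabla^{\prime} v_t(\mathbf{\theta}^0)\nabla v_t(\mathbf{\theta}^0)}{-\alpha e_t(\mathbf{\theta}^0)} - T^{-1}\sum_{t=1}^T \frac{f_t(v_t(\mathbf{\theta}^0)|\mathcal{F}_{t-1})}{-\alpha e_t(\mathbf{\theta}^0)}\nabla^{\prime} v_t(\mathbf{\theta}^0)\nabla v_t(\mathbf{\theta}^0) \overset{p}{\rightarrow} \mathbf{0}.
\end{equation*}
The conditional bias is $O(c_T)=o(1)$ by the Lipschitz property of $f_t$ from Assumption 2(B)(i); the variance is $O((Tc_T)^{-1})=o(1)$ because Assumption 3(A) gives $c_T^{-1}=o(T^{1/2})$, i.e.\ $Tc_T \to \infty$, with serial dependence absorbed via the $\alpha$-mixing condition 2(G) and the moment bounds in 2(D). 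Combining with Assumption 3(B)(iii) delivers convergence to the corresponding component of $\mathbf{D}_T$ evaluated at $\mathbf{\theta}^0$.

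The main obstacle is the last step: showing that plugging in $\widehat{\mathbf{\theta}}_T$ rather than $\mathbf{\theta}^0$ in the kernel term is negligible \emph{after} the $(2c_T)^{-1}$ scaling. Writing $\Delta_t \le V_1(\mathcal{F}_{t-1})\,\|\widehat{\mathbf{\theta}}_T - \mathbf{\theta}^0\|$ via Assumption 2(C), the symmetric difference $|\mathbf{1}\{|Y_t - v_t(\widehat{\mathbf{\theta}}_T)|<c_T\} - \mathbf{1}\{|Y_t - v_t(\mathbf{\theta}^0)|<c_T\}|$ is bounded by the indicator that $|Y_t - v_t(\mathbf{\theta}^0)|$ lies in a band of width $2\Delta_t$ around $c_T$, whose conditional expectation is $O(\Delta_t)$ by the bounded density in 2(B)(i). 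After the $(2c_T)^{-1}$ scaling and multiplication by $\nabla^{\prime} v_t \nabla v_t/(-\alpha e_t)$ (which contributes a further $O_p(1)$ factor using the envelopes and moment bounds in 2(C)--(D)), the overall contribution is $O_p(c_T^{-1}\|\widehat{\mathbf{\theta}}_T-\mathbf{\theta}^0\|) = O_p(1/(c_T\sqrt{T})) = o_p(1)$ by Theorem \ref{thmASYMPNORM} and Assumption 3(A); the remaining replacement of $\nabla v_t$ and $e_t$ by their values at $\widehat{\mathbf{\theta}}_T$ is routine given the envelopes in 2(C)(ii), continuity in 2(A)(i), and 2(C)(i). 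This final step is the genuinely delicate one, because it is precisely the tension between the bandwidth rate $c_T \to 0$ and the parametric rate $\sqrt{T}$ of $\widehat{\mathbf{\theta}}_T$ that Assumption 3(A) is calibrated to resolve.
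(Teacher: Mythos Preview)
Your proposal is correct and follows essentially the same route as the paper: both arguments split $\widehat{\mathbf{A}}_T-\mathbf{A}_T$ and $\widehat{\mathbf{D}}_T-\mathbf{D}_T$ into a sampling-error piece at $\mathbf{\theta}^0$ (handled by Assumption 3(B)) and a parameter-replacement piece, with the latter for the kernel term controlled via the bounded density, the envelopes, and the rate condition $c_T^{-1}=o(T^{1/2})$ combined with $\sqrt{T}$-consistency of $\widehat{\mathbf{\theta}}_T$. The paper's proof is terser---it introduces the intermediate $\widetilde{\mathbf{D}}_T$ and defers the detailed indicator calculations to Engle and Manganelli (2004a)---but the decomposition and the key rate balancing are the same as yours.
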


This result extends Theorem 3 in Engle and\ Manganelli\ (2004a) from dynamic
VaR models to dynamic joint models for VaR and ES. The key choice in
estimating the asymptotic covariance matrix is the bandwidth parameter in
Assumption 3(A). In our simulation study below we set this to $T^{-1/3}$ and
we find that this leads to satisfactory finite-sample properties.

The results here extend some very recent work in the literature: Dimitriadis
and Bayer (2017) consider VaR-ES regression, but focus on $iid$ data and
linear specifications.\footnote{Dimitriadis and Bayer (2017) also consider a
variety of FZ loss functions, in contrast with our focus on the FZ0 loss
function, and they consider both~$M$ and GMM (or $Z$, in their
notation)\ estimation, while we focus only on $M $ estimation.} Barendse
(2017) considers \textquotedblleft interquantile expectation
regression,\textquotedblright\ which nests VaR-ES regression as a special
case. He allows for time series data, but imposes that the models are linear.
Our framework allows for time series data and nonlinear models.

\section{\label{sSIMULATION}Simulation study}

In this section we investigate the finite-sample accuracy of the asymptotic
theory for dynamic ES and\ VaR models presented in the previous section. For
ease of comparison with existing studies of related models, such as volatility
and VaR models, we consider a GARCH(1,1) for the DGP, and estimate the
parameters by FZ-loss minimization. Specifically, the DGP is%
\begin{align}
Y_{t}  & =\sigma_{t}\eta_{t}\\
\sigma_{t}^{2}  & =\omega+\beta\sigma_{t-1}^{2}+\gamma Y_{t-1}^{2}\nonumber\\
\eta_{t}  & \thicksim iid~F_{\eta}\left(  0,1\right)
\end{align}
We set the parameters of this DGP to $\left(  \omega,\beta,\gamma\right)
=\left(  0.05,0.9,0.05\right)  .$ We consider two choices for the distribution
of $\eta_{t}$: a standard Normal, and the standardized skew $t$ distribution
of Hansen (1994), with degrees of freedom and skewness parameters in the
latter set to $\left(  5,-0.5\right)  .$ Under this DGP, the ES and VaR are
proportional to $\sigma_{t}$, \ with
\begin{equation}
\left(  \mathrm{VaR}_{t}^{\alpha},\mathrm{ES}_{t}^{\alpha}\right)  =\left(
a_{\alpha},b_{\alpha}\right)  \sigma_{t}%
\end{equation}
We make the dependence of the coefficients of proportionality $\left(
a_{\alpha},b_{\alpha}\right)  $ on $\alpha$ explicit here, as we consider a
variety of values of $\alpha$ in this simulation study: $\alpha\in\left\{
0.01,0.025,0.05,0.10,0.20\right\}  .$ Interest in VaR and\ ES from regulators
focuses on the smaller of these values of $\alpha,$ but we also consider the
larger values to better understand the properties of the asymptotic
approximations at various points in the tail of the distribution.

For a standard Normal distribution, with CDF and PDF denoted $\Phi$ and
$\phi,$ we have:%
\begin{align}
a_{\alpha}  & =\Phi^{-1}\left(  \alpha\right) \\
b_{\alpha}  & =-\phi\left(  \Phi^{-1}\left(  \alpha\right)  \right)
/\alpha\nonumber
\end{align}
For Hansen's skew $t$ distribution we can obtain $a_{\alpha}$ from the inverse
CDF, but no closed-form expression for $b_{\alpha}$ is available; we instead
use a simulation of 10 million $iid$ draws to estimate it. As noted above, FZ
loss minimization does not allow us to identify $\omega$ in the GARCH model,
and in our empirical work we set this parameter to 1. To facilitate
comparisons of the accuracy of estimates of $\left(  a_{\alpha},b_{\alpha
}\right)  $ in our simulation study we instead set $\omega$ at its true value.
This is done without loss of generality and merely eases the presentation of
the results. To match our empirical application, we replace the parameter
$a_{\alpha}$ with $c_{\alpha}=a_{\alpha}/b_{\alpha},$ and so our parameter
vector becomes $\left[  \beta,\gamma,b_{\alpha},c_{\alpha}\right]  .$

We consider two sample sizes, $T\in\left\{  2500,5000\right\}  $ corresponding
to 10 and 20 years of daily returns respectively. These large sample sizes
enable us to consider estimating models for quantiles as low as 1\%, which are
often used in risk management. We repeat all simulations 1000 times.

Table 1 presents results for the estimation of this model on standard Normal
innovations, and Table 2 presents corresponding results for skew $t$
innovations. The top row of each panel present the true parameter values, with
the latter two parameters changing across $\alpha.$ The second row presents
the median estimated parameter across simulations, and the third row presents
the average bias in the estimated parameter. Both of these measures indicate
that the parameter estimates are nicely centered on the true parameter values.
The penultimate row presents the cross-simulation standard deviations of the
estimated parameters, and we observe that these decrease with the sample size
and increase as we move further into the tails (i.e., as $\alpha$ decreases),
both as expected. Comparing the standard deviations across Tables 1 and 2, we
also note that they are higher for skew $t$ innovations than Normal
innovations, again as expected.

The last row in each panel presents the coverage probabilities for 95\%
confidence intervals for each parameter, constructed using the estimated
standard errors, with bandwidth parameter $c_{T}=\left\lfloor T^{-1/3}%
\right\rfloor $. For $\alpha\geq0.05$ we see that the coverage is reasonable,
ranging from around 0.88 to 0.96. For $\alpha=0.025$ or $\alpha=0.01$ the
coverage tends to be too low, particularly for the smaller sample size. Thus
some caution is required when interpreting the standard errors for the models
with the smallest values of $\alpha.$ In Table S1 of the Supplemental Appendix
we present results for (Q)MLE for the GARCH model corresponding to the results
in Tables 1 and 2, using the theory of Bollerslev and Wooldridge (1992). In
Tables S2 and S3 we present results for CAViaR estimation of this model, using
the \textquotedblleft tick\textquotedblright\ loss function and the theory of
Engle and\ Manganelli\ (2004a).\footnote{In (Q)MLE, the parameters to be
estimated are $\left[  \omega,\beta,\gamma\right]  .$ In \textquotedblleft
CAViaR\textquotedblright\ estimation, which is done by minimizing the
\textquotedblleft tick\textquotedblright\ loss function, the parameters to be
estimated are $\left[  \beta,\gamma,a_{\alpha}\right]  ,$ since in this case
the parameter $\omega$ is again unidentified. As for the study of FZ
estimation, we set $\omega$ to its true value to facilitate interpretation of
the results.} We find that (Q)MLE has better finite sample properties than\ FZ
minimization, but CAViaR estimation has slightly worse properties than\ FZ\ minimization.

\bigskip

[INSERT TABLES 1 AND 2 ABOUT HERE ]

\bigskip

In Table 3 we compare the efficiency of FZ estimation relative to (Q)MLE and
to CAViaR estimation, for the parameters that all three estimation methods
have in common, namely $\left[  \beta,\gamma\right]  .$ As expected, when the
innovations are standard Normal, FZ estimation is substantially less efficient
than MLE, however when the innovations are skew $t$ the loss in efficiency
drops and for some values of $\alpha$ FZ estimation is actually more efficient
than QMLE. This switch in the ranking of the competing estimators is
qualitatively in line with results in\ Francq and Zako\"{\i}an (2015). In
Panel B of Table 3, we see that FZ estimation is generally, though not
uniformly, more efficient than CAViaR estimation.

In many applications, interest is more focused on the forecasted values of VaR
and\ ES than the estimated parameters of the models. To study this, Table 4
presents results on the accuracy of the fitted VaR and\ ES estimates for the
three estimation methods: (Q)MLE, CAViaR and\ FZ estimation. To obtain
estimates of VaR and\ ES from the (Q)ML estimates, we follow common empirical
practice and compute the sample\ VaR and\ ES of the estimated standardized
residuals. In the first column of each panel we present the mean absolute
error (MAE) from (Q)MLE, and in the next two columns we present the
\textit{relative MAE} of CAViaR and FZ to (Q)MLE. Table 4 reveals that (Q)MLE
is the most accurate estimation method. Averaging across values of $\alpha,$
CAViaR is about 40\% worse for Normal innovations, and 24\% worse for skew $t$
innovations, while FZ fares somewhat better, being about 30\% worse for Normal
innovations and 16\% worse for skew $t$ innovations. The superior performance
of (Q)MLE is not surprising when the innovations are Normal, as that
corresponds to (full) maximum likelihood, which has maximal efficiency.
Weighing against the loss in FZ estimation efficiency is the
\textit{robustness} that FZ estimation offers relative to\ QML. For
applications even further from Normality, e.g. with time-varying skewness or
kurtosis, the loss in efficiency of QML is likely even greater.

\bigskip

[INSERT TABLES 3\ AND 4 ABOUT HERE ]

\bigskip

Overall, these simulation results show that the asymptotic results of the
previous section provide reasonable approximations in finite samples, with the
approximations improving for larger sample sizes and less extreme values of
$\alpha.$ Compared with MLE, estimation by FZ loss minimization is generally
less accurate, while it is generally more accurate than estimation using
the\ CAViaR approach of Engle and Manganelli\ (2004a). The latter
outperformance is likely attributable to the fact that FZ estimation draws on
information from two tail measures, VaR and\ ES, while CAViaR was designed to
only model\ VaR.

\section{\label{sAPPLICATION}Forecasting equity index ES and VaR}

We now apply the models discussed in Section \ref{sMODELS} to the forecasting
of ES and\ VaR for daily returns on four international equity indices. We
consider the S\&P 500 index, the Dow\ Jones Industrial Average, the NIKKEI 225
index of Japanese stocks, and the FTSE 100 index of UK stocks. Our sample
period is 1 January 1990 to 31 December 2016, yielding between 6,630 and 6,805
observations per series (the exact numbers vary due to differences in holidays
and market closures). In our out-of-sample analysis, we use the first ten
years for estimation, and reserve the remaining 17 years for evaluation and
model comparison.

Table 5 presents full-sample summary statistics on these four return series.
Average annualized returns range from -2.7\% for the\ NIKKEI to 7.2\% for the
DJIA, and annualized standard deviations range from 17.0\% to 24.7\%. All
return series exhibit mild negative skewness (around -0.15) and substantial
kurtosis (around 10). The lower two panels of Table 5 present the sample VaR
and\ ES for four choices of $\alpha.$

Table 6 presents results from standard time series models estimated on these
return series over the in-sample period (Jan 1990 to Dec 1999). In the first
panel we present the estimated parameters of the optimal ARMA$\left(
p,q\right)  $ models, where the choice of $\left(  p,q\right)  $ is made using
the BIC. The $R^{2}$ values from the optimal models never rises above 1\%,
consistent with the well-known lack of predictability of these series. The
second panel presents the parameters of the\ GARCH(1,1) model for conditional
variance, and the lower panel presents the estimated parameters the skew $t$
distribution applied to the standardized residuals. All of these parameters
are broadly in line with values obtained by other authors for these or similar series.

\bigskip

[ INSERT TABLES 5 AND 6 ABOUT HERE ]

\subsection{In-sample estimation}

We now present estimates of the parameters of the models presented in\ Section
\ref{sMODELS}, along with standard errors computed using the theory from
Section \ref{sTHEORY}.\footnote{Computational details on the estimation of
these models are given in\ Appendix C.} In the interests of space, we only
report the parameter estimates for the S\&P 500 index for $\alpha=0.05$. The
two-factor GAS model based on the FZ0 loss function is presented in the left
panel of Table 7. This model allows for separate dynamics in\ VaR and\ ES, and
we present the parameters for each of these risk measures in separate columns.
We observe that the persistence of these processes is high, with the estimated
$b$ parameters equal to 0.973 and 0.977, similar to the persistence found in
GARCH\ models (e.g., see Table 6). The model-implied average values of VaR
and\ ES are -2.001 and -2.556, similar to the sample values of these measures
reported in Table 5. We also observe that in neither equation is the
coefficient on $\lambda_{v}$ statistically significant: the $t$-statistics on
$a_{v}$ are both well below one. The coefficients on $\lambda_{e}$ are both
larger, and more significant (the $t$-statistics are 1.58 and 1.75),
indicating that the forcing variable from the ES part of the FZ0 loss function
is the more informative component. However, the overall imprecision of the
four coefficients on the forcing variables is suggestive that this model is over-parameterized.

The right panel of Table 7 shows three one-factor models for ES and\ VaR. The
first is the one-factor\ GAS model, which is nested in the two-factor model
presented in the left panel. We see a slight loss in fit (the average loss is
slightly greater) but the parameters of this model are estimated with greater
precision. The one-factor GAS model fits slightly better than the GARCH model
estimated via FZ loss minimization (reported in the penultimate
column).\footnote{Recall that in all of the one-factor models, the intercept
$\left(  \omega\right)  $ in the GAS\ equation is unidentified. We fix it at
zero for the GAS-1F and\ Hybrid models, and at one for the GARCH-FZ model.
This has no impact on the fit of these models for VaR and ES, but it means
that we cannot interpret the estimated $\left(  a,b\right)  $ parameters as
the VaR and\ ES of the standardized residuals, and we no longer expect the
estimated values to match the sample estimates in Table 5.} The
\textquotedblleft hybrid\textquotedblright\ model, augmenting the one-factor
GAS model with a GARCH-type forcing variable, fits better than the other
one-factor models, and also better than the larger two-factor GAS\ model, and
we observe that the coefficient on the GARCH forcing variable $\left(
\delta\right)  $ is significantly different from zero (with a $t$-statistic of 2.07).

\bigskip

[ INSERT TABLE 7 ABOUT HERE ]

\subsection{Out-of-sample forecasting}

We now turn to the out-of-sample (OOS) forecast performance of the models
discussed above, as well as some competitor models from the existing
literature. We will focus initially on the results for $\alpha=0.05,$ given
the focus on that percentile in the extant VaR literature. (Results for other
values of $\alpha$ are considered later, with details provided in the
supplemental appendix.) We will consider a total of ten models for forecasting
ES and\ VaR. Firstly, we consider three rolling window methods, using window
lengths of 125, 250 and 500 days. We next consider ARMA-GARCH models, with the
ARMA\ model orders selected using the BIC, and assuming that the distribution
of the innovations is standard Normal or skew $t,$ or estimating it
nonparametrically using the sample\ ES and VaR of the estimated standardized
residuals. Finally we consider four new semiparametric dynamic models for ES
and VaR: the two-factor GAS model presented in\ Section \ref{sGAS}, the
one-factor GAS model presented in Section \ref{sFZ1F}, a GARCH model estimated
using FZ loss minimization, and the \textquotedblleft hybrid\textquotedblright%
\ GAS/GARCH model presented in Section \ref{sGASGARCH}. We estimate these
models using the first ten years as our in-sample period, and retain those
parameter estimates throughout the OOS period.

In Figure \ref{figSPve1} below we plot the fitted 5\% ES and VaR for the S\&P
500 return series, using three models: the rolling window model using a window
of 125 days, the GARCH-EDF model, and the one-factor GAS model. This figure
covers both the in-sample and out-of-sample periods. The figure shows that the
average ES was estimated at around -2\%, rising as high as around -1\% in the
mid 90s and mid 00s, and falling to its most extreme values of around -10\%
during the financial crisis in late 2008. Thus, like volatility, ES fluctuates
substantially over time.

Figure \ref{figSPve2} zooms in on the last two years of our sample period, to
better reveal the differences in the estimates from these models. We observe
the usual step-like movements in the rolling window estimate of VaR and ES, as
the more extreme observations enter and leave the estimation window. Comparing
the GARCH and GAS estimates, we see how they differ in reacting to returns:
the GARCH estimates are driven by lagged squared returns, and thus move
stochastically each day.\ The GAS estimates, on the other hand, only use
information from returns when the VaR is violated, and on other days the
estimates revert deterministically to the long-run mean. This generates a
smoother time series of VaR and\ ES estimates. We investigate below which of
these estimates provides a better fit to the data.

\bigskip

[ INSERT FIGURES \ref{figSPve1} AND \ref{figSPve2} ABOUT HERE ]

\bigskip

The left panel of Table 8 presents the average OOS losses, using the FZ0 loss
function from equation (\ref{eqnFZ0}), for each of the ten models, for the
four equity return series. The lowest values in each column are highlighted in
bold, and the second-lowest are in italics. We observe that the one-factor
GAS\ model, labelled FZ1F, is the preferred model for the two US equity
indices, while the Hybrid model is the preferred model for the\ NIKKEI and
FTSE indices. The worst model is the rolling window with a window length of
500 days.

While average losses are useful for an initial look at OOS forecast
performance, they do not reveal whether the gains are statistically
significant. Table 9 presents Diebold-Mariano t-statistics on the loss
differences, for the S\&P 500 index. Corresponding tables for the other three
equity return series are presented in Table S4 of the supplemental appendix.
The tests are conducted as \textquotedblleft row model minus column
model\textquotedblright\ and so a positive number indicates that the column
model outperforms the row model. The column \textquotedblleft
FZ1F\textquotedblright\ corresponding to the one-factor GAS model contains all
positive entries, revealing that this model out-performed all competing
models. This outperformance is strongly significant for the comparisons to the
rolling window forecasts, as well as the GARCH model with Normal innovations.
The gains relative to the GARCH model with skew $t$ or nonparametric
innovations are not significant, with DM $t$-statistics of 1.48 and 1.16
respectively. Similar results are found for the best models for each of the
other three equity return series. Thus the worst models are easily separated
from the better models, but the best few models are generally not
significantly different.\footnote{Table S5 in the supplemental appendix
presents results analogous to\ Table 8, but with alpha=0.025, which is the
value for ES that is the focus of the Basel III accord. The rankings and
results are qualitatively similar to those for alpha=0.05 discussed here.}

\bigskip

[ INSERT TABLES 8 AND\ 9 ABOUT HERE ]

\bigskip

To complement the study of the relative performance of these models for ES
and\ VaR, we now consider goodness-of-fit tests for\ the OOS forecasts of\ VaR
and\ ES. Under correct specification of the model for VaR and\ ES, we know
that
\begin{equation}
\mathbb{E}_{t-1}\left[
\begin{array}
[c]{c}%
\partial L_{FZ0}\left(  Y_{t},v_{t},e_{t};\alpha\right)  /\partial v_{t}\\
\partial L_{FZ0}\left(  Y_{t},v_{t},e_{t};\alpha\right)  /\partial e_{t}%
\end{array}
\right]  =0
\end{equation}
and we note that this implies that $\mathbb{E}_{t-1}\left[  \lambda
_{v,t}\right]  =\mathbb{E}_{t-1}\left[  \lambda_{e,t}\right]  =0,$ where
$\left(  \lambda_{v,t},\lambda_{e,t}\right)  $ are defined in equations
(\ref{eqnLAMv})-(\ref{eqnLAMe}). Thus the variables $\lambda_{v,t}$ and
$\lambda_{e,t}$ can be considered as a form of \textquotedblleft generalized
residual\textquotedblright\ for this model. To mitigate the impact of serial
correlation in these measures (which comes through the persistence of $v_{t}$
and $e_{t}$) we use standardized versions of these residuals:%
\begin{align}
\lambda_{v,t}^{s}  & \equiv\frac{\lambda_{v,t}}{v_{t}}=\mathbf{1}\left\{
Y_{t}\leq v_{t}\right\}  -\alpha\\
\lambda_{e,t}^{s}  & \equiv\frac{\lambda_{e,t}}{e_{t}}=\frac{1}{\alpha
}\mathbf{1}\left\{  Y_{t}\leq v_{t}\right\}  \frac{Y_{t}}{e_{t}}-1\nonumber
\end{align}
These standardized generalized residuals are also conditionally mean zero
under correct specification, and we note that the standardized residual for
VaR is simply the demeaned \textquotedblleft hit\textquotedblright\ variable,
which is the focus of well-known tests from the VaR literature, see
Christoffersen (1998) and Engle and\ Manganelli (2004a). We adopt the
\textquotedblleft dynamic quantile (DQ)\textquotedblright\ testing approach of
Engle and\ Manganelli (2004a), which is based on simple regressions of these
generalized residuals on elements of the information set available at the time
the forecast was made. Consider, then the following \textquotedblleft
DQ\textquotedblright\ and \textquotedblleft DES\textquotedblright%
\ regressions:%
\begin{align}
\lambda_{v,t}^{s}  & =a_{0}+a_{1}\lambda_{v,t-1}^{s}+a_{2}v_{t}+u_{v,t}\\
\lambda_{e,t}^{s}  & =b_{0}+b_{1}\lambda_{e,t-1}^{s}+b_{2}e_{t}+u_{e,t}%
\nonumber
\end{align}
We test forecast optimality by testing that all terms ($\mathbf{a=}\left[
a_{0},a_{1},a_{2}\right]  ^{\prime}$\ and $\mathbf{b=}\left[  b_{0}%
,b_{1},b_{2}\right]  ^{\prime}$) in these regressions are zero, against the
usual two-sided alternative. Similar \textquotedblleft conditional
calibration\textquotedblright\ tests are presented in Nolde and\ Ziegel
(2017). One could also consider a joint test of both of the above null
hypotheses, however we will focus on these separately so that we can determine
which variable is well/poorly specified.

The right two panels of Table 8 present the $p$-values from the tests of the
goodness-of-fit of the VaR and\ ES forecasts. Entries greater than 0.10
(indicating no evidence against optimality at the 0.10 level) are in bold, and
entries between 0.05 and 0.10 are in italics. For the S\&P 500 index and the
DJIA, we see that only one model passes both the VaR and\ ES tests: the
one-factor GAS model. For the NIKKEI we see that all of the dynamic models
pass these two tests, while all three of the rolling window models fail. For
the FTSE index, on the other hand, we see that all ten models considered here
fail both the goodness-of-fit tests. The outcomes for the NIKKEI and the\ FTSE
each, in different ways, present good examples of the problem highlighted
in\ Nolde and Ziegel (2017), that many different models may pass a
goodness-of-fit test, or all models may fail, which makes discussing their
relative performance difficult. To do so, one can look at Diebold-Mariano
tests of differences in average loss, as we do in Table 9.

Finally, in Table 10 we look at the performance of these models across four
values of $\alpha,$ to see whether the best-performing models change with how
deep in the tails we are. We find that this is indeed the case: for
$\alpha=0.01,$ the best-performing model across the four return series is the
GARCH model estimated by FZ loss minimization, followed by the GARCH model
with nonparametric residuals. These two models are also the (equal) best two
models for $\alpha=0.025$. \ For $\alpha=0.05$ and $\alpha=0.10$ the two best
models are the one-factor GAS model and the Hybrid model. These rankings are
perhaps related to the fact that the forcing variable in the GAS model depends
on observing a violation of the VaR, and for very small values of $\alpha$
these violations occur only infrequently. In contrast, the GARCH\ model uses
the information from the squared residual, and so information from the data
moves the risk measures whether a VaR violation was observed or not. When
$\alpha$ is not so small, the forcing variable suggested by the GAS model
applied to the FZ loss function starts to out-perform.

\bigskip

[ INSERT TABLE 10 ABOUT HERE ]

\section{\label{sCONCLUSION}Conclusion}

With the implementation of the Third Basel Accord in the next few years, risk
managers and regulators will place greater focus on expected shortfall\ (ES)
as a measure of risk, complementing and partly substituting previous emphasis
on\ Value-at-Risk (VaR). We draw on recent results from statistical decision
theory (Fissler and\ Ziegel, 2016) to propose new dynamic models for ES and
VaR. The models proposed are semiparametric, in that they impose parametric
structures for the dynamics of ES and VaR, but are agnostic about the
conditional distribution of returns. We also present asymptotic distribution
theory for the estimation of these models, and we verify that the theory
provides a good approximation in finite samples. We apply the new models and
methods to daily returns on four international equity indices, over the period
1990 to 2016, and find the proposed new ES-VaR models outperform forecasts
based on GARCH or rolling window models.

The asymptotic theory presented in this paper facilitates considering a large
number of extensions of the models presented here. Our models all focus on a
single value for the tail probability $\left(  \alpha\right)  ,$ and extending
these to consider multiple values simultaneously could prove fruitful. For
example, one could consider the values 0.01, 0.025 and 0.05, to capture
various points in the left tail, or one could consider 0.05 and 0.95 to
capture both the left and right tails simultaneously. Another natural
extension is to make use of exogenous information in the model; the models
proposed here are all univariate, and one might expect that information from
options markets, high frequency data, or news announcements to also help
predict VaR and\ ES. We leave these interesting extensions to future research.

\bigskip

\bigskip

\pagebreak

{\LARGE Appendix A:\ Proofs}

\bigskip

\begin{proof}
[Proof of Proposition \ref{propFZ0}]Theorem C.3 of Nolde and\ Ziegel (2017)
shows that under the assumption that ES is strictly negative, the loss
differences generated by a FZ loss function are homogeneous of degree zero iff
$G_{1}(x)=\varphi_{1}\mathbf{1}\left\{  x\geq0\right\}  $ and $G_{2}%
(x)=-\varphi_{2}/x$ with $\varphi_{1}\geq0$ and $\varphi_{2}>0$. Denote the
resulting loss function as $L_{FZ0}^{\ast}\left(  Y,v,e;\alpha,\varphi
_{1},\varphi_{2}\right)  ,$ and notice that:%
\begin{align*}
L_{FZ0}^{\ast}\left(  Y,v,e;\alpha,\varphi_{1},\varphi_{2}\right)   &
=\varphi_{1}\left(  \mathbf{1}\left\{  Y\leq v\right\}  -\alpha\right)
\left(  \mathbf{1}\left\{  v\geq0\right\}  -\mathbf{1}\left\{  Y\geq0\right\}
\right) \\
& +\varphi_{2}\left\{  -\left(  \mathbf{1}\left\{  Y\leq v\right\}
-\alpha\right)  \frac{1}{\alpha}\frac{v}{e}+\frac{1}{e}\left(  \frac{1}%
{\alpha}\mathbf{1}\left\{  Y\leq v\right\}  Y-e\right)  +\log\left(
-e\right)  \right\} \\
& =\varphi_{1}\left(  \mathbf{1}\left\{  Y\leq v\right\}  -\alpha\right)
\left(  \mathbf{1}\left\{  v\geq0\right\}  -\mathbf{1}\left\{  Y\geq0\right\}
\right)  +\varphi_{2}L_{FZ0}\left(  y,v,e;\alpha\right) \\
& =\varphi_{2}L_{FZ0}\left(  Y,v,e;\alpha\right)  +\varphi_{1}\alpha
\mathbf{1}\left\{  Y\geq0\right\}  +\varphi_{1}\left(  1-\alpha-\mathbf{1}%
\left\{  Y\geq0\right\}  \right)  \mathbf{1}\left\{  v\geq0\right\}
\end{align*}
Under the assumption that $v<0,$ the third term vanishes. The second term is
purely a function of $Y$ and so can be disregarded; we can set $\varphi_{1}=0$
without loss of generality. The first term is affected by a scaling parameter
$\varphi_{2}>0,$ and we can set $\varphi_{2}=1$ without loss of generality.
Thus we obtain the $L_{FZ0}$ given in equation (\ref{eqnFZ0}). If $v$ can be
positive, then setting $\varphi_{1}=0$ is interpretable as fixing this shape
parameter value at a particular value.
\end{proof}

\bigskip

\begin{proof}
[Proof of Theorem \ref{thmCONSISTENCY}]The proof is based on Theorem 2.1 of
Newey and McFadden (1994). We only need to show that $E[L_{T}(\cdot)]$ is
uniquely minimized at $\mathbf{\theta}^{0}$, because the other assumptions of
Newey and McFadden's theorem are clearly satisfied. By Corollary (5.5) of
Fissler and Ziegel (2016), given Assumption 1(B)(iii) and the fact that our
choice of the objective function $L_{FZ0}$ satisfies the condition as in
Corollary (5.5) of Fissler and Ziegel (2016), we know that $\mathbb{E}\left[
L\left(  Y_{t},v_{t}\left(  \mathbf{\theta}\right)  ,e_{t}\left(
\mathbf{\theta}\right)  ;\alpha\right)  |\mathcal{F}_{t-1}\right]  $ is
uniquely minimized at $\left(  \mathrm{VaR}_{\alpha}(Y_{t}|\mathcal{F}%
_{t-1}),\mathrm{ES}_{\alpha}(Y_{t}|\mathcal{F}_{t-1})\right)  ,$ which equals
$\left(  v_{t}(\mathbf{\theta}^{0}),e_{t}(\mathbf{\theta}^{0})\right)  $ under
correct specification. Combining this assumption and Assumption 1(B)(iv), we
know that $\mathbf{\theta}^{0}$ is a unique minimizer of $\mathbb{E}%
[L_{T}(\cdot)]$, completing the proof.
\end{proof}

\bigskip

\begin{proof}
[Outline of proof of Theorem \ref{thmASYMPNORM}]We consider the population
objective function $\lambda_{T}(\mathbf{\theta})=T^{-1}\sum_{t=1}%
^{T}\mathbb{E}\left[  g_{t}(\mathbf{\theta})\right]  ,$ and take a mean-value
expansion of $\lambda_{T}(\mathbf{\hat{\theta}})$ around $\mathbf{\theta}%
^{0}.$ We show in Lemma \ref{lemmaLAM} that:%
\begin{align*}
\sqrt{T}(\mathbf{\hat{\theta}-\theta}^{0})  & =-\Lambda_{T}^{-1}%
(\mathbf{\theta}^{0})\frac{1}{\sqrt{T}}\sum_{t=1}^{T}g_{t}(\mathbf{\theta}%
^{0})+o_{p}(1)\\
\text{where \ \ }\Lambda_{T}(\mathbf{\theta}^{\ast})  & =T^{-1}\sum_{t=1}%
^{T}\left.  \frac{\partial\mathbb{E}\left[  g_{t}(\mathbf{\theta})\right]
}{\partial\mathbf{\theta}}\right\vert _{\mathbf{\theta=\theta}^{\ast}}%
\end{align*}
In the supplemental appendix we prove Lemma \ref{lemmaLAM} by building on and
extending Weiss (1991), who extends Huber (1967) to non-\textit{iid} data. We
draw on Weiss' Lemma A.1, and we verify that all five assumptions (N1-N5 in
his notation) for that lemma are satisfied: N1, N2 and N5 are obviously
satisfied given our Assumptions 1-2, and we show in Lemmas \ref{lemmaN3i} -
\ref{lemmaN4} that assumptions N3 and N4 are satistfied. Assumption 2(F)
allows a CLT to be applied: the asymptotic covariance matrix is $\mathbf{A}%
_{T}=\mathbb{E}\left[  T^{-1}\sum_{t=1}^{T}g_{t}(\mathbf{\theta}^{0}%
)g_{t}(\mathbf{\theta}^{0})^{\prime}\right]  ,$ and we denote $\Lambda
_{T}(\mathbf{\theta}^{0})$ as $\mathbf{D}_{T},$ leading to the stated result.
\end{proof}

\bigskip

\begin{proof}
[Proof of Theorem \ref{thmVCV}]Given Assumption 3B(i) and the result in
Theorem 1, the proof that $\mathbf{\hat{A}}_{T}-\mathbf{A}_{T}\overset
{p}{\rightarrow}\mathbf{0}$ is standard and omitted. Next, define
\[
\mathbf{\tilde{D}}_{T}=T^{-1}\sum_{t=1}^{T}\{(2c_{T})^{-1}\mathbf{1}%
\{|y_{t}-v_{t}(\mathbf{\theta}^{0})|<c_{T}\}\frac{1}{-e_{t}(\mathbf{\theta
}^{0})\alpha}\nabla v_{t}(\mathbf{\theta}^{0})^{\prime}\nabla v_{t}%
(\mathbf{\theta}^{0})+\frac{1}{e_{t}(\mathbf{\theta}^{0})^{2}}\nabla
e_{t}(\mathbf{\theta}^{0})^{\prime}\nabla e_{t}(\mathbf{\theta}^{0})\}
\]
To prove the result we will show that $\mathbf{\hat{D}}_{T}-\mathbf{\tilde{D}%
}_{T}=o_{p}(1)$ and $\mathbf{\tilde{D}}_{T}-\mathbf{D}_{T}=o_{p}(1)$.
\ Firstly, consider%
\begin{align*}
\Vert\mathbf{\hat{D}}_{T}-\mathbf{\tilde{D}}_{T}\Vert &  \leq\left\Vert
(2Tc_{T})^{-1}\right. \\
\times\sum_{t=1}^{T}\{ &  (\mathbf{1}\{|y_{t}-v_{t}(\mathbf{\hat{\theta}}%
_{T})|<c_{T}\}-\mathbf{1}\{|y_{t}-v_{t}(\mathbf{\theta}^{0})|<c_{T}\})\frac
{1}{-e_{t}(\mathbf{\hat{\theta}}_{T})\alpha}\nabla v_{t}(\mathbf{\hat{\theta}%
}_{T})^{\prime}\nabla v_{t}(\mathbf{\hat{\theta}}_{T})\\
+ &  \mathbf{1}\left\{  |y_{t}-v_{t}(\mathbf{\theta}^{0})|<c_{T}\right\}
\frac{1}{-e_{t}(\mathbf{\hat{\theta}}_{T})\alpha}\left(  \nabla v_{t}%
(\mathbf{\hat{\theta}}_{T})-\nabla v_{t}(\mathbf{\theta}^{0})\right)
^{\prime}\nabla v_{t}(\mathbf{\hat{\theta}}_{T})\\
+ &  \mathbf{1}\{|y_{t}-v_{t}(\mathbf{\theta}^{0})|<c_{T}\}\left(  \frac
{1}{-\alpha e_{t}(\mathbf{\hat{\theta}}_{T})}-\frac{1}{-\alpha e_{t}%
(\mathbf{\theta}^{0})}\right)  \nabla v_{t}\mathbf{(\theta}^{0})^{\prime
}\nabla v_{t}(\mathbf{\hat{\theta}}_{T})\\
+ &  \mathbf{1}\{|y_{t}-v_{t}(\mathbf{\theta}^{0})|<c_{T}\}\frac{1}{-\alpha
e_{t}(\mathbf{\theta}^{0})}\nabla v_{t}(\mathbf{\theta}^{0})^{\prime}(\nabla
v_{t}(\mathbf{\hat{\theta}}_{T})-\nabla v_{t}(\mathbf{\theta}^{0}))\\
&  \left.  +\frac{c_{T}-\hat{c}_{T}}{c_{T}}\mathbf{1}\{|y_{t}-v_{t}%
(\mathbf{\theta}^{0})|<c_{T}\}\frac{1}{-e_{t}(\mathbf{\theta}^{0})\alpha
}\nabla v_{t}(\mathbf{\theta}^{0})^{\prime}\nabla v_{t}(\mathbf{\theta}%
^{0})\}\right\Vert \\
&  +T^{-1}\sum_{t=1}^{T}\left\Vert \frac{1}{e_{t}(\mathbf{\hat{\theta}}%
_{T})^{2}}\nabla e_{t}(\mathbf{\hat{\theta}}_{T})^{\prime}\nabla
e_{t}(\mathbf{\hat{\theta}}_{T})-\frac{1}{e_{t}\mathbf{(\theta}^{0})^{2}%
}\nabla e_{t}(\mathbf{\theta}^{0})^{\prime}\nabla e_{t}(\mathbf{\theta}%
^{0})\right\Vert
\end{align*}
The last line above was shown to be $o_{p}(1)$ in the proof of Theorem 2. The
difficult quantity in the first term (over the first six lines above) is the
indicator, and following the same steps as in Engle and Manganelli (2004a),
that term is also $o_{p}(1).$ Next, consider $\mathbf{\tilde{D}}%
_{T}\mathbf{-D}_{T}$:%
\begin{align*}
\mathbf{\tilde{D}}_{T}\mathbf{-D}_{T}  & =\frac{1}{2Tc_{T}}\sum_{t=1}%
^{T}\left(  \mathbf{1}\left\{  \left\vert Y_{t}-v_{t}(\mathbf{\theta}%
^{0})\right\vert <c_{T}\right\}  -\mathbb{E}\left[  \mathbf{1}\left\{
\left\vert Y_{t}-v_{t}\left(  \mathbf{\theta}^{0}\right)  \right\vert
<c_{T}\right\}  |\mathcal{F}_{t-1}\right]  \right) \\
& \times\frac{\nabla^{\prime}v_{t}(\mathbf{\theta}^{0})\nabla v_{t}%
(\mathbf{\theta}^{0})}{-e_{t}(\mathbf{\theta}^{0})\alpha}\\
& +\frac{1}{T}\sum_{t=1}^{T}\left\{  \frac{1}{2c_{T}}\mathbb{E}[\mathbf{1}%
\{|Y_{t}-v_{t}(\mathbf{\theta}^{0})|<c_{T}\}|\mathcal{F}_{t-1}]\frac{1}%
{-e_{t}(\mathbf{\theta}^{0})\alpha}\nabla^{\prime}v_{t}(\mathbf{\theta}%
^{0})\nabla v_{t}(\mathbf{\theta}^{0})\right. \\
& \left.  -\mathbb{E}\left[  \frac{f_{t}(v_{t}(\mathbf{\theta}^{0}))}%
{-e_{t}(\mathbf{\theta}^{0})\alpha}\nabla^{\prime}v_{t}(\mathbf{\theta}%
^{0})\nabla v_{t}(\mathbf{\theta}^{0})\right]  \right\}
\end{align*}
Following Engle and Manganelli (2004a), assumptions 1-3 are sufficient to
show\textbf{\ }$\mathbf{\tilde{D}}_{T}-\mathbf{D}_{T}=o_{p}(1)$ and the result follows.
\end{proof}

\bigskip

{\LARGE Appendix B:\ Derivations}

\bigskip

{\Large Appendix B.1: Generic calculations for the FZ0 loss function}

The FZ0 loss function is:%

\begin{equation}
L_{FZ0}\left(  Y,v,e;\alpha\right)  =-\frac{1}{\alpha e}\mathbf{1}\left\{
Y\leq v\right\}  \left(  v-Y\right)  +\frac{v}{e}+\log\left(  -e\right)  -1
\end{equation}
Note that this is \textit{not} homogeneous, as for any $k>0,~L_{FZ0}\left(
kY,kv,ke;\alpha\right)  =L_{FZ0}\left(  Y,v,e;\alpha\right)  +\log\left(
k\right)  $, but this loss function generates loss \textit{differences }that
are homogenous of degree zero, as the additive additional term above drops out.

We will frequently use the first derivatives of this loss function, and the
second derivatives of the expected loss for an absolutely continuous random
variable with density $f$ and CDF $F$. These are (for $v\neq y$):%
\begin{align}
\nabla_{v}  & \equiv\frac{\partial L_{FZ0}\left(  Y,v,e;\alpha\right)
}{\partial v}=-\frac{1}{\alpha e}\left(  \mathbf{1}\left\{  Y\leq v\right\}
-\alpha\right)  \equiv\frac{1}{\alpha ve}\lambda_{v}\\
\nabla_{e}  & \equiv\frac{\partial L_{FZ0}\left(  Y,v,e;\alpha\right)
}{\partial e}\\
& =\frac{1}{\alpha e^{2}}\mathbf{1}\left\{  Y\leq v\right\}  \left(
v-Y\right)  -\frac{v}{e^{2}}+\frac{1}{e}\nonumber\\
& =\frac{v}{\alpha e^{2}}\left(  \mathbf{1}\left\{  Y\leq v\right\}
-\alpha\right)  -\frac{1}{e^{2}}\left(  \frac{1}{\alpha}\mathbf{1}\left\{
Y\leq v\right\}  Y-e\right) \nonumber\\
& \equiv\frac{-1}{\alpha e^{2}}\left(  \lambda_{v}+\alpha\lambda_{e}\right)
\nonumber
\end{align}
where%
\begin{align}
\lambda_{v}  & \equiv-v\left(  \mathbf{1}\left\{  Y\leq v\right\}
-\alpha\right) \\
\lambda_{e}  & \equiv\frac{1}{\alpha}\mathbf{1}\left\{  Y\leq v\right\}  Y-e
\end{align}
and%
\begin{align}
\frac{\partial^{2}\mathbb{E}\left[  L_{FZ0}\left(  Y,v,e;\alpha\right)
\right]  }{\partial v^{2}}  & =-\frac{1}{\alpha e}f\left(  v\right) \\
\frac{\partial^{2}\mathbb{E}\left[  L_{FZ0}\left(  Y,v,e;\alpha\right)
\right]  }{\partial v\partial e}  & =\frac{1}{\alpha e^{2}}\left(  F\left(
v\right)  -\alpha\right) \\
& =0\text{, at the true value of }\left(  v,e\right) \nonumber\\
\frac{\partial^{2}\mathbb{E}\left[  L_{FZ0}\left(  Y,v,e;\alpha\right)
\right]  }{\partial e^{2}}  & =\frac{1}{e^{2}}-\frac{2}{\alpha e^{3}}\left\{
\left(  F\left(  v\right)  -\alpha\right)  v-\left(  \mathbb{E}\left[
\mathbf{1}\left\{  Y\leq v\right\}  Y\right]  -\alpha e\right)  \right\} \\
& =\frac{1}{e^{2}}\text{, at the true value of }\left(  v,e\right) \nonumber
\end{align}

\bigskip

{\Large Appendix B.2: Derivations for the one-factor GAS\ model for ES and
VaR}

Here we present the calculations to compute $s_{t}$ and $I_{t}$ for this
model. Below we use:%
\begin{align}
\frac{\partial v}{\partial\kappa}  & =\frac{\partial^{2}v}{\partial\kappa^{2}%
}=a\exp\left\{  \kappa\right\}  =v\\
\frac{\partial e}{\partial\kappa}  & =\frac{\partial^{2}e}{\partial\kappa^{2}%
}=b\exp\left\{  \kappa\right\}  =e
\end{align}
And so we find (for $v_{t}\neq Y_{t}$)%
\begin{align}
s_{t}  & \equiv\frac{\partial L_{FZ0}\left(  Y_{t},v_{t},e_{t};\alpha\right)
}{\partial\kappa_{t}}\\
& =\frac{\partial L_{FZ0}\left(  Y_{t},v_{t},e_{t};\alpha\right)  }{\partial
v_{t}}\frac{\partial v_{t}}{\partial\kappa_{t}}+\frac{\partial L_{FZ0}\left(
Y_{t},v_{t},e_{t};\alpha\right)  }{\partial e_{t}}\frac{\partial e_{t}%
}{\partial\kappa_{t}}\nonumber\\
& =\left\{  -\frac{1}{\alpha e_{t}}\left(  \mathbf{1}\left\{  Y_{t}\leq
v_{t}\right\}  -\alpha\right)  \right\}  v_{t}\nonumber\\
& +\left\{  -\frac{1}{e_{t}^{2}}\left(  \frac{1}{\alpha}\mathbf{1}\left\{
Y_{t}\leq v_{t}\right\}  Y_{t}-e_{t}\right)  +\frac{v_{t}}{e_{t}^{2}}\frac
{1}{\alpha}\left(  \mathbf{1}\left\{  Y_{t}\leq v_{t}\right\}  -\alpha\right)
\right\}  e_{t}\nonumber\\
& =-\frac{1}{e_{t}}\left(  \frac{1}{\alpha}\mathbf{1}\left\{  Y_{t}\leq
v_{t}\right\}  Y_{t}-e_{t}\right) \\
& \equiv-\lambda_{et}/e_{t}%
\end{align}
Thus, the $\lambda_{vt}$ term drops out of $s_{t}$ and we are left with
$-\lambda_{et}/e_{t}.$

Next we calculate $I_{t}:$%
\begin{align}
I_{t}  & \equiv\frac{\partial^{2}\mathbb{E}_{t-1}\left[  L_{FZ0}\left(
Y_{t},v_{t},e_{t};\alpha\right)  \right]  }{\partial\kappa_{t}^{2}}\\
& =\frac{\partial^{2}\mathbb{E}_{t-1}\left[  L_{FZ0}\left(  Y_{t},v_{t}%
,e_{t};\alpha\right)  \right]  }{\partial v_{t}^{2}}\left(  \frac{\partial
v_{t}}{\partial\kappa_{t}}\right)  ^{2}+\frac{\partial^{2}\mathbb{E}%
_{t-1}\left[  L_{FZ0}\left(  Y_{t},v_{t},e_{t};\alpha\right)  \right]
}{\partial v_{t}\partial e_{t}}\frac{\partial v_{t}}{\partial\kappa_{t}%
}\nonumber\\
& +\frac{\partial^{2}\mathbb{E}_{t-1}\left[  L_{FZ0}\left(  Y_{t},v_{t}%
,e_{t};\alpha\right)  \right]  }{\partial e_{t}^{2}}\left(  \frac{\partial
e_{t}}{\partial\kappa_{t}}\right)  ^{2}+\frac{\partial^{2}\mathbb{E}%
_{t-1}\left[  L_{FZ0}\left(  Y_{t},v_{t},e_{t};\alpha\right)  \right]
}{\partial v_{t}\partial e_{t}}\frac{\partial e_{t}}{\partial\kappa_{t}%
}\nonumber\\
& +\frac{\partial\mathbb{E}_{t-1}\left[  L_{FZ0}\left(  Y_{t},v_{t}%
,e_{t};\alpha\right)  \right]  }{\partial v_{t}}\frac{\partial^{2}v_{t}%
}{\partial\kappa_{t}^{2}}+\frac{\partial\mathbb{E}_{t-1}\left[  L_{FZ0}\left(
Y_{t},v_{t},e_{t};\alpha\right)  \right]  }{\partial e_{t}}\frac{\partial
^{2}e_{t}}{\partial\kappa_{t}^{2}}\nonumber
\end{align}
But note that under correct specification,
\begin{equation}
\frac{\partial^{2}\mathbb{E}_{t-1}\left[  L\left(  Y_{t},v_{t},e_{t}%
;\alpha\right)  \right]  }{\partial v_{t}\partial e_{t}}=\frac{\partial
\mathbb{E}_{t-1}\left[  L\left(  Y_{t},v_{t},e_{t};\alpha\right)  \right]
}{\partial v_{t}}=\frac{\partial\mathbb{E}_{t-1}\left[  L\left(  Y_{t}%
,v_{t},e_{t};\alpha\right)  \right]  }{\partial e_{t}}=0
\end{equation}
and so the Hessian simplifies to:%
\begin{align}
I_{t}  & =\frac{\partial^{2}\mathbb{E}_{t-1}\left[  L_{FZ0}\left(  Y_{t}%
,v_{t},e_{t};\alpha\right)  \right]  }{\partial v_{t}^{2}}\left(
\frac{\partial v_{t}}{\partial\kappa_{t}}\right)  ^{2}+\frac{\partial
^{2}\mathbb{E}_{t-1}\left[  L_{FZ0}\left(  Y_{t},v_{t},e_{t};\alpha\right)
\right]  }{\partial e_{t}^{2}}\left(  \frac{\partial e_{t}}{\partial\kappa
_{t}}\right)  ^{2}\\
& =-\frac{1}{\alpha e_{t}}f_{t}\left(  v_{t}\right)  v_{t}^{2}+1\\
& =\frac{\alpha-k_{\alpha}a_{\alpha}}{\alpha}\text{, \ since }f_{t}\left(
v_{t}\right)  =\frac{k_{\alpha}}{v_{t}}\text{ and }\frac{v_{t}}{e_{t}%
}=a_{\alpha}\text{, for this DGP.}%
\end{align}
Thus although the Hessian could\textit{\ }vary with time, as it is a
derivative of the conditional expected loss, in this specification it
simplifies to a constant. \bigskip

{\Large Appendix B.3: ES and VaR in location-scale models}

Dynamic location-scale models are widely used for asset returns and in this
section we consider what such a specification implies for the dynamics of ES
and VaR. Consider the following:%
\begin{equation}
Y_{t}=\mu_{t}+\sigma_{t}\eta_{t}\text{, \ \ }\eta_{t}\thicksim iid~F_{\eta
}\left(  0,1\right) \label{eqnIID}%
\end{equation}
where, for example, $\mu_{t}$ is some ARMA\ model and $\sigma_{t}^{2}$ is some
GARCH model. For asset returns that follow equation (\ref{eqnIID}) we have:%
\begin{align}
v_{t}  & =\mu_{t}+a\sigma_{t}\text{, \ \ where \ }a=F_{\eta}^{-1}\left(
\alpha\right) \\
e_{t}  & =\mu_{t}+b\sigma_{t}\text{, \ \ where \ }b=\mathbb{E}\left[  \eta
_{t}|\eta_{t}\leq a\right] \nonumber
\end{align}
and we we can recover $\left(  \mu_{t},\sigma_{t}\right)  $ from $\left(
v_{t},e_{t}\right)  $:%
\begin{equation}
\left[
\begin{array}
[c]{c}%
\mu_{t}\\
\sigma_{t}%
\end{array}
\right]  =\frac{1}{b-a}\left[
\begin{array}
[c]{cc}%
b & -a\\
-1 & 1
\end{array}
\right]  \left[
\begin{array}
[c]{c}%
v_{t}\\
e_{t}%
\end{array}
\right]
\end{equation}
Thus under the conditional location-scale assumption, we can back out the
conditional mean and variance from the VaR and\ ES. Next note that if $\mu
_{t}=0~\forall$~$t,$ then $v_{t}=c\cdot e_{t}$, \ where $c=a/b\in\left(
0,1\right)  $. Daily asset returns often have means that are close to zero,
and so this restriction is one that may be plausible in the data. A related,
though less plausible, restriction is that $\sigma_{t}=\bar{\sigma}~\forall
$~$t,$ and in that case we have the simplification that $v_{t}=d+e_{t}$, where
$d=\left(  a-b\right)  \bar{\sigma}>0.\bigskip$

{\LARGE Appendix C:\ Estimation using the FZ0 loss function}

\bigskip

The FZ0 loss function, equation (\ref{eqnFZ0}), involves the indicator
function $\mathbf{1}\left\{  Y_{t}\leq v_{t}\right\}  $ and so necessitates
the use of a numerical search algorithm that does not rely on
differentiability of the objective function; we use the function
\texttt{fminsearch} in Matlab. However, in preliminary simulation analyses we
found that this algorithm was sensitive to the starting values used in the
search. To overcome this, we initially consider a \textquotedblleft
smoothed\textquotedblright\ version of the FZ0 loss function, where we replace
the indicator variable with a Logistic function:%
\begin{align}
\tilde{L}_{FZ0}\left(  Y,v,e;\alpha,\tau\right)   & =-\frac{1}{\alpha e}%
\Gamma\left(  Y_{t},v_{t};\tau\right)  \left(  v-Y\right)  +\frac{v}{e}%
+\log\left(  -e\right)  -1\\
\text{where \ \ }\Gamma\left(  Y_{t},v_{t};\tau\right)   & \equiv\frac
{1}{1+\exp\left\{  \tau\left(  Y_{t}-v_{t}\right)  \right\}  }\text{, \ for
}\tau>0
\end{align}
where $\tau$ is the smoothing parameter, and the smoothing function $\Gamma$
converges to the indicator function as $\tau\rightarrow\infty.$ In GAS\ models
that involve an indicator function in the forcing variable, we alter the
forcing variable in the same way, to ensure that the objective function as a
function of $\mathbf{\theta}$ is differentiable. In these cases the loss
function \textit{and} the model itself are slightly altered through this smoothing.

In our empirical implementation, we obtain \textquotedblleft
smart\textquotedblright\ starting values by first estimating the model using
the \textquotedblleft smoothed FZ0\textquotedblright\ loss function with
$\tau=5.$ This choice of $\tau$ gives some smoothing for values of $Y_{t}$
that are roughly within $\pm1$ of $v_{t}.$ Call the resulting parameter
estimate $\mathbf{\tilde{\theta}}_{T}^{\left(  5\right)  }.$ Since this
objective function is differentiable, we can use more familiar gradient-based
numerical search algorithms, such as \texttt{fminunc}\ or \texttt{fmincon} in
Matlab, which are often less sensitive to starting values. We then re-estimate
the model, using $\mathbf{\tilde{\theta}}_{T}^{\left(  5\right)  }$ as the
starting value, setting $\tau=20$ and obtain $\mathbf{\tilde{\theta}}%
_{T}^{\left(  20\right)  }.$ This value of $\tau$ smoothes values of $Y_{t}$
within roughly $\pm0.25$ of $v_{t},$ and so this objective function is closer
to the true objective function. Finally, we use $\mathbf{\tilde{\theta}}%
_{T}^{\left(  20\right)  }$ as the starting value in the optimization of the
actual FZ0 objective function, with no artificial smoothing, using the
function \texttt{fminsearch}, and obtain $\mathbf{\hat{\theta}}_{T}$. We found
that this approach largely eliminated the sensitivity to starting values.

\bigskip\bigskip\bigskip%

\def\baselinestretch{1.0}\small\normalsize

\bigskip

\pagebreak

\begin{center}
\textbf{Table 1: Simulation results for Normal innovations}

\bigskip%

\begin{tabular}
[c]{rccccrcccc}\hline
& \multicolumn{4}{c}{$T=2500$} &  & \multicolumn{4}{c}{$T=5000$}%
\\\cline{2-5}\cline{7-10}
& $\beta$ & $\gamma$ & $b_{\alpha}$ & $c_{\alpha}$ &  & $\beta$ & $\gamma$ &
$b_{\alpha}$ & $c_{\alpha}$\\\cline{2-5}\cline{7-10}%
\multicolumn{1}{l}{} &  &  &  &  &  &  &  & \multicolumn{1}{l}{} & \\
\multicolumn{1}{l}{} & \multicolumn{9}{c}{$\alpha=0.01$}\\
True & 0.900 & 0.050 & -2.665 & 0.873 &  & 0.900 & 0.050 & -2.665 & 0.873\\
Median & 0.901 & 0.049 & -2.615 & 0.882 &  & 0.899 & 0.049 & -2.671 & 0.877\\
Avg bias & -0.017 & 0.015 & -0.108 & 0.008 &  & -0.011 & 0.006 & -0.089 &
0.004\\
St dev & 0.077 & 0.076 & 1.095 & 0.022 &  & 0.049 & 0.033 & 0.805 & 0.015\\
Coverage & 0.868 & 0.827 & 0.875 & 0.919 &  & 0.884 & 0.876 & 0.888 &
0.937\\\hline
&  &  &  &  &  &  &  &  & \\
\multicolumn{1}{l}{} & \multicolumn{9}{c}{$\alpha=0.025$}\\
True & 0.900 & 0.050 & -2.338 & 0.838 &  & 0.900 & 0.050 & -2.338 & 0.838\\
Median & 0.899 & 0.047 & -2.329 & 0.842 &  & 0.897 & 0.048 & -2.392 & 0.841\\
Avg bias & -0.017 & 0.007 & -0.137 & 0.004 &  & -0.011 & 0.002 & -0.111 &
0.002\\
St dev & 0.066 & 0.044 & 0.852 & 0.017 &  & 0.050 & 0.024 & 0.656 & 0.012\\
Coverage & 0.898 & 0.870 & 0.911 & 0.931 &  & 0.912 & 0.888 & 0.925 &
0.923\\\hline
&  &  &  &  &  &  &  &  & \\
\multicolumn{1}{l}{} & \multicolumn{9}{c}{$\alpha=0.05$}\\
True & 0.900 & 0.050 & -2.063 & 0.797 &  & 0.900 & 0.050 & -2.063 & 0.797\\
Median & 0.901 & 0.048 & -2.051 & 0.800 &  & 0.899 & 0.049 & -2.094 & 0.799\\
Avg bias & -0.013 & 0.005 & -0.097 & 0.002 &  & -0.008 & 0.002 & -0.081 &
0.001\\
St dev & 0.062 & 0.046 & 0.707 & 0.015 &  & 0.041 & 0.021 & 0.511 & 0.010\\
Coverage & 0.913 & 0.874 & 0.916 & 0.947 &  & 0.923 & 0.907 & 0.927 &
0.948\\\hline
&  &  &  &  &  &  &  &  & \\
\multicolumn{1}{l}{} & \multicolumn{9}{c}{$\alpha=0.10$}\\
True & 0.900 & 0.050 & -1.755 & 0.730 &  & 0.900 & 0.050 & -1.755 & 0.730\\
Median & 0.900 & 0.048 & -1.769 & 0.730 &  & 0.898 & 0.048 & -1.778 & 0.730\\
Avg bias & -0.015 & 0.006 & -0.103 & 0.000 &  & -0.009 & 0.001 & -0.072 &
0.000\\
St dev & 0.065 & 0.052 & 0.623 & 0.013 &  & 0.040 & 0.020 & 0.435 & 0.009\\
Coverage & 0.917 & 0.883 & 0.925 & 0.954 &  & 0.922 & 0.902 & 0.934 &
0.960\\\hline
&  &  &  &  &  &  &  &  & \\
\multicolumn{1}{l}{} & \multicolumn{9}{c}{$\alpha=0.20$}\\
True & 0.900 & 0.050 & -1.400 & 0.601 &  & 0.900 & 0.050 & -1.400 & 0.601\\
Median & 0.898 & 0.048 & -1.391 & 0.602 &  & 0.899 & 0.048 & -1.417 & 0.602\\
Avg bias & -0.017 & 0.008 & -0.091 & 0.000 &  & -0.010 & 0.002 & -0.064 &
0.000\\
St dev & 0.078 & 0.072 & 0.547 & 0.014 &  & 0.044 & 0.022 & 0.374 & 0.010\\
Coverage & 0.925 & 0.881 & 0.934 & 0.948 &  & 0.941 & 0.923 & 0.945 &
0.954\\\hline
\end{tabular}

\end{center}

\textit{Notes:} This table presents results from 1000 replications of the
estimation of VaR and\ ES from a GARCH(1,1)\ DGP with standard Normal
innovations. Details are described in\ Section \ref{sSIMULATION}. The top row
of each panel presents the true values of the parameters. The second, third,
and fourth rows present the median estimated parameters, the average bias, and
the standard deviation (across simulations) of the estimated parameters. The
last row of each panel presents the coverage rates for 95\% confidence
intervals constructed using estimated standard errors.

\bigskip

\begin{center}
\pagebreak

\textbf{Table 2: Simulation results for skew }$\mathbf{t}$%
\textbf{\ innovations}

\bigskip%

\begin{tabular}
[c]{rccccrcccc}\hline
& \multicolumn{4}{c}{$T=2500$} &  & \multicolumn{4}{c}{$T=5000$}%
\\\cline{2-5}\cline{7-10}
& $\beta$ & $\gamma$ & $b_{\alpha}$ & $c_{\alpha}$ &  & $\beta$ & $\gamma$ &
$b_{\alpha}$ & $c_{\alpha}$\\\cline{2-5}\cline{7-10}%
\multicolumn{1}{l}{} &  &  &  &  &  &  &  & \multicolumn{1}{l}{} & \\
\multicolumn{1}{l}{} & \multicolumn{9}{c}{$\alpha=0.01$}\\
True & 0.900 & 0.050 & -4.506 & 0.730 &  & 0.900 & 0.050 & -4.506 & 0.730\\
Median & 0.893 & 0.049 & -4.376 & 0.750 &  & 0.895 & 0.048 & -4.562 & 0.741\\
Avg bias & -0.047 & 0.038 & -0.399 & 0.018 &  & -0.028 & 0.014 & -0.340 &
0.009\\
St dev & 0.150 & 0.134 & 2.687 & 0.048 &  & 0.094 & 0.065 & 1.983 & 0.034\\
Coverage & 0.797 & 0.797 & 0.809 & 0.894 &  & 0.837 & 0.853 & 0.839 &
0.936\\\hline
&  &  &  &  &  &  &  &  & \\
\multicolumn{1}{l}{} & \multicolumn{9}{c}{$\alpha=0.025$}\\
True & 0.900 & 0.050 & -3.465 & 0.695 &  & 0.900 & 0.050 & -3.465 & 0.695\\
Median & 0.895 & 0.047 & -3.448 & 0.705 &  & 0.896 & 0.048 & -3.520 & 0.701\\
Avg bias & -0.028 & 0.014 & -0.254 & 0.008 &  & -0.017 & 0.005 & -0.198 &
0.004\\
St dev & 0.101 & 0.069 & 1.591 & 0.034 &  & 0.068 & 0.033 & 1.192 & 0.023\\
Coverage & 0.855 & 0.835 & 0.877 & 0.921 &  & 0.874 & 0.893 & 0.887 &
0.939\\\hline
&  &  &  &  &  &  &  &  & \\
\multicolumn{1}{l}{} & \multicolumn{9}{c}{$\alpha=0.05$}\\
True & 0.900 & 0.050 & -2.767 & 0.651 &  & 0.900 & 0.050 & -2.767 & 0.651\\
Median & 0.896 & 0.048 & -2.760 & 0.656 &  & 0.898 & 0.048 & -2.795 & 0.654\\
Avg bias & -0.021 & 0.007 & -0.187 & 0.005 &  & -0.011 & 0.003 & -0.114 &
0.003\\
St dev & 0.081 & 0.049 & 1.085 & 0.025 &  & 0.053 & 0.025 & 0.782 & 0.017\\
Coverage & 0.906 & 0.883 & 0.921 & 0.937 &  & 0.916 & 0.904 & 0.922 &
0.951\\\hline
&  &  &  &  &  &  &  &  & \\
\multicolumn{1}{l}{} & \multicolumn{9}{c}{$\alpha=0.10$}\\
True & 0.900 & 0.050 & -2.122 & 0.577 &  & 0.900 & 0.050 & -2.122 & 0.577\\
Median & 0.897 & 0.048 & -2.121 & 0.579 &  & 0.898 & 0.048 & -2.140 & 0.578\\
Avg bias & -0.017 & 0.006 & -0.125 & 0.003 &  & -0.008 & 0.002 & -0.069 &
0.002\\
St dev & 0.066 & 0.045 & 0.745 & 0.020 &  & 0.040 & 0.022 & 0.510 & 0.014\\
Coverage & 0.931 & 0.900 & 0.937 & 0.949 &  & 0.926 & 0.925 & 0.927 &
0.947\\\hline
&  &  &  &  &  &  &  &  & \\
\multicolumn{1}{l}{} & \multicolumn{9}{c}{$\alpha=0.20$}\\
True & 0.900 & 0.050 & -1.514 & 0.431 &  & 0.900 & 0.050 & -1.514 & 0.431\\
Median & 0.899 & 0.050 & -1.485 & 0.432 &  & 0.899 & 0.049 & -1.503 & 0.432\\
Avg bias & -0.019 & 0.006 & -0.089 & 0.001 &  & -0.008 & 0.002 & -0.049 &
0.001\\
St dev & 0.089 & 0.047 & 0.618 & 0.018 &  & 0.042 & 0.022 & 0.380 & 0.012\\
Coverage & 0.916 & 0.888 & 0.922 & 0.938 &  & 0.929 & 0.916 & 0.940 &
0.944\\\hline
\end{tabular}

\end{center}

\textit{Notes:} This table presents results from 1000 replications of the
estimation of VaR and\ ES from a GARCH(1,1)\ DGP with skew $t$ innovations.
Details are described in\ Section \ref{sSIMULATION}. The top row of each panel
presents the true values of the parameters. The second, third, and fourth rows
present the median estimated parameters, the average bias, and the standard
deviation (across simulations) of the estimated parameters. The last row of
each panel presents the coverage rates for 95\% confidence intervals
constructed using estimated standard errors.

\bigskip

\pagebreak

\begin{center}
\textbf{Table 3: Sampling variation of FZ estimation }

\textbf{relative to (Q)MLE and CAViaR}

\bigskip%

\begin{tabular}
[c]{rccccccccccc}\hline
& \multicolumn{5}{c}{\textit{Normal innovations}} &  &
\multicolumn{5}{c}{\textit{Skew t innovations}}\\\cline{2-6}\cline{8-12}
& \multicolumn{2}{c}{$T=2500$} &  & \multicolumn{2}{c}{$T=5000$} &  &
\multicolumn{2}{c}{$T=2500$} &  & \multicolumn{2}{c}{$T=5000$}\\\cline{2-3}%
\cline{5-6}\cline{8-9}\cline{11-12}%
\multicolumn{1}{l}{$\alpha$} & $\beta$ & $\gamma$ &  & $\beta$ & $\gamma$ &  &
$\beta$ & $\gamma$ &  & $\beta$ & $\gamma$\\\hline
\multicolumn{1}{l}{} & \multicolumn{1}{r}{} & \multicolumn{1}{r}{} &
\multicolumn{1}{r}{} & \multicolumn{1}{r}{} & \multicolumn{1}{r}{} &
\multicolumn{1}{r}{} & \multicolumn{1}{r}{} & \multicolumn{1}{r}{} &
\multicolumn{1}{r}{} & \multicolumn{1}{r}{} & \multicolumn{1}{r}{}\\
\multicolumn{5}{l}{\textbf{Panel A:\ FZ/(Q)ML}} & \multicolumn{1}{r}{} &
\multicolumn{1}{r}{} & \multicolumn{1}{r}{} & \multicolumn{1}{r}{} &
\multicolumn{1}{r}{} & \multicolumn{1}{r}{} & \multicolumn{1}{r}{}\\
\multicolumn{1}{l}{0.01} & 1.209 & 5.940 &  & 1.701 & 3.731 &  & 1.577 &
4.830 &  & 2.533 & 3.723\\
\multicolumn{1}{l}{0.025} & 1.034 & 3.394 &  & 1.764 & 2.694 &  & 1.055 &
2.485 &  & 1.853 & 1.905\\
\multicolumn{1}{l}{0.05} & 0.980 & 3.576 &  & 1.431 & 2.377 &  & 0.850 &
1.784 &  & 1.426 & 1.458\\
\multicolumn{1}{l}{0.10} & 1.021 & 4.074 &  & 1.406 & 2.302 &  & 0.698 &
1.627 &  & 1.095 & 1.250\\
\multicolumn{1}{l}{0.20} & 1.224 & 5.558 &  & 1.543 & 2.497 &  & 0.939 &
1.710 &  & 1.145 & 1.242\\\hline
\multicolumn{1}{l}{} & \multicolumn{1}{r}{} & \multicolumn{1}{r}{} &
\multicolumn{1}{r}{} & \multicolumn{1}{r}{} & \multicolumn{1}{r}{} &
\multicolumn{1}{r}{} & \multicolumn{1}{r}{} & \multicolumn{1}{r}{} &
\multicolumn{1}{r}{} & \multicolumn{1}{r}{} & \multicolumn{1}{r}{}\\
\multicolumn{5}{l}{\textbf{Panel B:\ FZ/CAViaR}} & \multicolumn{1}{r}{} &
\multicolumn{1}{r}{} & \multicolumn{1}{r}{} & \multicolumn{1}{r}{} &
\multicolumn{1}{r}{} & \multicolumn{1}{r}{} & \multicolumn{1}{r}{}\\
\multicolumn{1}{l}{0.01} & 0.982 & 1.162 &  & 0.951 & 0.975 &  & 1.062 &
1.384 &  & 0.912 & 1.465\\
\multicolumn{1}{l}{0.025} & 0.965 & 1.139 &  & 0.971 & 1.042 &  & 0.976 &
1.030 &  & 0.974 & 0.997\\
\multicolumn{1}{l}{0.05} & 0.925 & 1.238 &  & 0.910 & 0.930 &  & 0.885 &
0.819 &  & 0.920 & 0.903\\
\multicolumn{1}{l}{0.10} & 0.940 & 1.283 &  & 0.847 & 0.827 &  & 0.831 &
0.903 &  & 0.816 & 0.819\\
\multicolumn{1}{l}{0.20} & 0.855 & 0.671 &  & 0.703 & 0.510 &  & 0.736 &
0.437 &  & 0.503 & 0.515\\\hline
\end{tabular}

\end{center}

\bigskip

\textit{Notes:} This table presents the ratio of cross-simulation standard
deviations of parameter estimates obtained by FZ loss minimization and (Q)MLE
(Panel A), and CAViaR (Panel B). We consider only the parameters that are
common to these three estimation methods, namely the GARCH(1,1) parameters
$\beta$ and $\gamma.$ Ratios greater than one indicate the FZ estimator is
more variable than the alternative estimation method; ratios less than one
indicate the opposite.

\bigskip

\pagebreak%

\begin{landscape}%

\begin{center}
\textbf{Table 4: Mean absolute errors for VaR and ES estimates}

\bigskip%

\begin{tabular}
[c]{rccccccccccccccc}\hline
& \multicolumn{7}{c}{\textit{Normal innovations}} &  &
\multicolumn{7}{c}{\textit{Skew t innovations}}\\\cline{2-8}\cline{10-16}
& \multicolumn{3}{c}{\textbf{VaR}} &  & \multicolumn{3}{c}{\textbf{ES}} &  &
\multicolumn{3}{c}{\textbf{VaR}} &  & \multicolumn{3}{c}{\textbf{ES}%
}\\\cline{2-4}\cline{6-8}\cline{10-12}\cline{14-16}%
\multicolumn{1}{l}{} & \textit{MAE} & \multicolumn{2}{c}{\textit{MAE ratio}} &
& \textit{MAE} & \multicolumn{2}{c}{\textit{MAE ratio}} &  & \textit{MAE} &
\multicolumn{2}{c}{\textit{MAE ratio}} &  & \textit{MAE} &
\multicolumn{2}{c}{\textit{MAE ratio}}\\\cline{2-4}\cline{6-8}\cline{10-12}%
\cline{14-16}%
\multicolumn{1}{l}{$\alpha$} & MLE & CAViaR & FZ &  & MLE & CAViaR & FZ &  &
QMLE & CAViaR & FZ &  & QMLE & CAViaR & FZ\\\hline
\multicolumn{1}{l}{} &  &  &  &  &  &  &  &  &  &  &  &  &  &  & \\
\multicolumn{3}{l}{\textbf{Panel A: }$T=2500$} & \multicolumn{1}{r}{} &
\multicolumn{1}{r}{} & \multicolumn{1}{r}{} & \multicolumn{1}{r}{} &
\multicolumn{1}{r}{} & \multicolumn{1}{r}{} & \multicolumn{1}{r}{} &
\multicolumn{1}{r}{} & \multicolumn{1}{r}{} & \multicolumn{1}{r}{} &
\multicolumn{1}{r}{} & \multicolumn{1}{r}{} & \multicolumn{1}{r}{}\\
\multicolumn{1}{l}{0.01} & 0.069 & 1.368 & 1.369 &  & 0.084 & 1.487 & 1.345 &
& 0.196 & 1.327 & 1.381 &  & 0.342 & 1.249 & 1.252\\
\multicolumn{1}{l}{0.025} & 0.055 & 1.305 & 1.288 &  & 0.064 & 1.341 & 1.290 &
& 0.120 & 1.228 & 1.244 &  & 0.205 & 1.166 & 1.166\\
\multicolumn{1}{l}{0.05} & 0.043 & 1.302 & 1.271 &  & 0.051 & 1.332 & 1.289 &
& 0.084 & 1.193 & 1.166 &  & 0.141 & 1.154 & 1.129\\
\multicolumn{1}{l}{0.10} & 0.034 & 1.322 & 1.253 &  & 0.042 & 1.394 & 1.302 &
& 0.056 & 1.168 & 1.089 &  & 0.098 & 1.160 & 1.083\\
\multicolumn{1}{l}{0.20} & 0.026 & 1.443 & 1.257 &  & 0.033 & 1.652 & 1.377 &
& 0.034 & 1.301 & 1.087 &  & 0.066 & 1.404 & 1.121\\\hline
& \multicolumn{1}{r}{} & \multicolumn{1}{r}{} & \multicolumn{1}{r}{} &
\multicolumn{1}{r}{} & \multicolumn{1}{r}{} & \multicolumn{1}{r}{} &
\multicolumn{1}{r}{} & \multicolumn{1}{r}{} & \multicolumn{1}{r}{} &
\multicolumn{1}{r}{} & \multicolumn{1}{r}{} & \multicolumn{1}{r}{} &
\multicolumn{1}{r}{} & \multicolumn{1}{r}{} & \multicolumn{1}{r}{}\\
\multicolumn{3}{l}{\textbf{Panel B: }$T=5000$} & \multicolumn{1}{r}{} &
\multicolumn{1}{r}{} & \multicolumn{1}{r}{} & \multicolumn{1}{r}{} &
\multicolumn{1}{r}{} & \multicolumn{1}{r}{} & \multicolumn{1}{r}{} &
\multicolumn{1}{r}{} & \multicolumn{1}{r}{} & \multicolumn{1}{r}{} &
\multicolumn{1}{r}{} & \multicolumn{1}{r}{} & \multicolumn{1}{r}{}\\
\multicolumn{1}{l}{0.01} & 0.049 & 1.404 & 1.387 &  & 0.060 & 1.443 & 1.344 &
& 0.138 & 1.369 & 1.375 &  & 0.245 & 1.256 & 1.248\\
\multicolumn{1}{l}{0.025} & 0.038 & 1.306 & 1.291 &  & 0.044 & 1.348 & 1.313 &
& 0.087 & 1.245 & 1.234 &  & 0.145 & 1.197 & 1.185\\
\multicolumn{1}{l}{0.05} & 0.031 & 1.314 & 1.264 &  & 0.036 & 1.350 & 1.290 &
& 0.061 & 1.184 & 1.143 &  & 0.101 & 1.164 & 1.119\\
\multicolumn{1}{l}{0.10} & 0.024 & 1.365 & 1.265 &  & 0.029 & 1.449 & 1.320 &
& 0.041 & 1.155 & 1.067 &  & 0.071 & 1.158 & 1.069\\
\multicolumn{1}{l}{0.20} & 0.018 & 1.458 & 1.241 &  & 0.023 & 1.706 & 1.377 &
& 0.024 & 1.316 & 1.066 &  & 0.048 & 1.409 & 1.089\\\hline
\end{tabular}

\end{center}

\bigskip

\textit{Notes:} This table presents results on the accuracy of the fitted VaR
and\ ES estimates for the three estimation methods: (Q)MLE, CAViaR and\ FZ
estimation. In the first column of each panel we present the mean absolute
error (MAE) from (Q)MLE, computed across all dates in a given sample and all
1000 simulation replications. The next two columns present the
\textit{relative }MAE of CAViaR and FZ to (Q)MLE. Values greater than one
indicate (Q)MLE is more accurate (has lower MAE); values less than one
indicate the opposite.

\bigskip%

\end{landscape}%
\pagebreak

\begin{center}
\textbf{Table 5: Summary statistics}

\bigskip%

\begin{tabular}
[c]{rcccc}\hline
& \textbf{S\&P 500} & \textbf{DJIA} & \textbf{NIKKEI} & \textbf{FTSE}%
\\\cline{2-5}%
\multicolumn{1}{l}{Mean (Annualized)} & 6.776 & 7.238 & -2.682 & 3.987\\
\multicolumn{1}{l}{Std dev (Annualized)} & 17.879 & 17.042 & 24.667 & 17.730\\
\multicolumn{1}{l}{Skewness} & -0.244 & -0.163 & -0.114 & -0.126\\
\multicolumn{1}{l}{Kurtosis} & 11.673 & 11.116 & 8.580 & 8.912\\\hline
\multicolumn{1}{l}{VaR-0.01} & -3.128 & -3.034 & -4.110 & -3.098\\
\multicolumn{1}{l}{VaR-0.025} & -2.324 & -2.188 & -3.151 & -2.346\\
\multicolumn{1}{l}{VaR-0.05} & -1.731 & -1.640 & -2.451 & -1.709\\
\multicolumn{1}{l}{VaR-0.10} & -1.183 & -1.126 & -1.780 & -1.193\\\hline
\multicolumn{1}{l}{ES-0.01} & -4.528 & -4.280 & -5.783 & -4.230\\
\multicolumn{1}{l}{ES-0.025} & -3.405 & -3.215 & -4.449 & -3.295\\
\multicolumn{1}{l}{ES-0.05} & -2.697 & -2.553 & -3.603 & -2.643\\
\multicolumn{1}{l}{ES-0.10} & -2.065 & -1.955 & -2.850 & -2.031\\\hline
\end{tabular}

\end{center}

\bigskip

\textit{Notes:} This table presents summary statistics on the four daily
equity return series studied in Section \ref{sAPPLICATION}, over the full
sample period from January 1990 to December 2016. The first two rows report
the annualized mean and standard deviation of these returns in percent. The
second panel presents sample Value-at-Risk for four choices of $\alpha,$ and
the third panel presents corresponding sample Expected Shortfall estimates.

\bigskip

\bigskip

\begin{center}
\textbf{Table 6: ARMA, GARCH, and Skew t results}

\bigskip%

\begin{tabular}
[c]{rcccc}\hline
& \textbf{SP500} & \textbf{DJIA} & \textbf{NIKKEI} & \textbf{FTSE}\\
$\phi_{0}$ & 0.0269 & 0.0287 & -0.0106 & 0.0158\\
$\phi_{1}$ & 0.6482 & -0.0486 & -- & -0.0098\\
$\phi_{2}$ & -- & -0.0407 & -- & -0.0438\\
$\phi_{3}$ & -- & -- & -- & -0.0585\\
$\phi_{4}$ & -- & -- & -- & 0.0375\\
$\phi_{5}$ & -- & -- & -- & -0.0501\\
$\theta_{1}$ & -0.7048 & -- & -- & --\\
$R^{2}$ & 0.0056 & 0.0039 & 0.0000 & 0.0093\\\hline
$\omega$ & 0.0140 & 0.0165 & 0.0657 & 0.0162\\
$\beta$ & 0.9053 & 0.8970 & 0.8629 & 0.8932\\
$\alpha$ & 0.0824 & 0.0875 & 0.1125 & 0.0935\\\hline
$\nu$ & 6.9336 & 7.0616 & 7.8055 & 11.8001\\
$\lambda$ & -0.1146 & -0.0997 & -0.0659 & -0.1018\\\hline
\end{tabular}

\end{center}

\bigskip

\textit{Notes:} This table presents parameter estimates for the four daily
equity return series studied in Section \ref{sAPPLICATION}, over the in-sample
period from January 1990 to December 1999. The first panel presents the
optimal\ ARMA model according to the BIC, along with the $R^{2}$ of that
model. The second panel presents the estimated GARCH(1,1) parameters, and the
third panel presents the estimated parameters of the skewed $t$ distribution
applied to the estimated standardized residuals.

\bigskip

\pagebreak

\bigskip

\begin{center}
\textbf{Table 7: Estimated paramters of GAS models for VaR and\ ES}

\bigskip%

\begin{tabular}
[c]{lllllllll}\hline
&  &  &  &  &  &  &  & \\
& \multicolumn{2}{c}{\textit{GAS-2F}} & \multicolumn{1}{c}{} &  &  &
\multicolumn{1}{c}{\textit{GAS-1F}} & \multicolumn{1}{c}{\textit{GARCH-FZ}} &
\multicolumn{1}{c}{\textit{Hybrid}}\\\cline{2-3}\cline{7-9}
& \multicolumn{1}{c}{\textbf{VaR}} & \multicolumn{1}{c}{\textbf{ES}} &  &  &
&  &  & \\\cline{2-3}
& \multicolumn{1}{c}{} & \multicolumn{1}{c}{} & \multicolumn{1}{c}{} &  &
\multicolumn{1}{c}{\ \ \ \ \ \ \ \ \ \ \ \ \ \ } & \multicolumn{1}{c}{} &
\multicolumn{1}{c}{} & \multicolumn{1}{c}{}\\
\multicolumn{1}{c}{$w$} & \multicolumn{1}{c}{-0.046} &
\multicolumn{1}{c}{-0.069} & \multicolumn{1}{c}{} &  &
\multicolumn{1}{c}{$\beta$} & \multicolumn{1}{c}{0.990} &
\multicolumn{1}{c}{0.908} & \multicolumn{1}{c}{0.968}\\
\multicolumn{1}{c}{{\small (s.e.)}} & \multicolumn{1}{c}{{\small (0.010)}} &
\multicolumn{1}{c}{{\small (0.019)}} & \multicolumn{1}{c}{} &  &
\multicolumn{1}{c}{{\small (s.e.)}} & \multicolumn{1}{c}{{\small (0.004)}} &
\multicolumn{1}{c}{{\small (0.072)}} & \multicolumn{1}{c}{{\small (0.015)}}\\
\multicolumn{1}{c}{$b$} & \multicolumn{1}{c}{0.977} &
\multicolumn{1}{c}{0.973} & \multicolumn{1}{c}{} &  &
\multicolumn{1}{c}{$\gamma$} & \multicolumn{1}{c}{-0.010} &
\multicolumn{1}{c}{0.030} & \multicolumn{1}{c}{-0.011}\\
\multicolumn{1}{c}{{\small (s.e.)}} & \multicolumn{1}{c}{{\small (0.005)}} &
\multicolumn{1}{c}{{\small (0.007)}} &  &  &
\multicolumn{1}{c}{{\small (s.e.)}} & \multicolumn{1}{c}{{\small (0.002)}} &
\multicolumn{1}{c}{{\small (0.010)}} & \multicolumn{1}{c}{{\small (0.002)}}\\
\multicolumn{1}{c}{$a_{v}$} & \multicolumn{1}{c}{0.001} &
\multicolumn{1}{c}{0.001} &  &  & \multicolumn{1}{c}{$\delta$} &
\multicolumn{1}{c}{--} & \multicolumn{1}{c}{--} & \multicolumn{1}{c}{0.018}\\
\multicolumn{1}{c}{{\small (s.e.)}} & \multicolumn{1}{c}{{\small (0.092)}} &
\multicolumn{1}{c}{{\small (0.164)}} &  &  &
\multicolumn{1}{c}{{\small (s.e.)}} & \multicolumn{1}{c}{} &
\multicolumn{1}{c}{} & \multicolumn{1}{c}{{\small (0.009)}}\\
\multicolumn{1}{c}{$a_{e}$} & \multicolumn{1}{c}{0.007} &
\multicolumn{1}{c}{0.011} &  &  & \multicolumn{1}{c}{$a$} &
\multicolumn{1}{c}{-1.490} & \multicolumn{1}{c}{-2.659} &
\multicolumn{1}{c}{-2.443}\\
\multicolumn{1}{c}{{\small (s.e.)}} & \multicolumn{1}{c}{{\small (0.004)}} &
\multicolumn{1}{c}{{\small (0.007)}} &  &  &
\multicolumn{1}{c}{{\small (s.e.)}} & \multicolumn{1}{c}{{\small (0.346)}} &
\multicolumn{1}{c}{{\small (0.492)}} & \multicolumn{1}{c}{{\small (0.473)}}\\
& \multicolumn{1}{c}{} & \multicolumn{1}{c}{} &  &  & \multicolumn{1}{c}{$b$}
& \multicolumn{1}{c}{-2.089} & \multicolumn{1}{c}{-3.761} &
\multicolumn{1}{c}{-3.389}\\
& \multicolumn{1}{c}{} & \multicolumn{1}{c}{} &  &  &
\multicolumn{1}{c}{{\small (s.e.)}} & \multicolumn{1}{c}{{\small (0.487)}} &
\multicolumn{1}{c}{{\small (0.747)}} & \multicolumn{1}{c}{{\small (0.664)}}\\
&  &  &  &  &  &  &  & \\\hline
Avg loss & \multicolumn{2}{c}{0.747} & \multicolumn{1}{c}{} &  &  &
\multicolumn{1}{c}{0.750} & \multicolumn{1}{c}{0.762} &
\multicolumn{1}{c}{0.745}\\\hline
\end{tabular}

\end{center}

\bigskip

\textit{Notes:} This table presents parameter estimates and standard errors
for four GAS\ models of VaR and ES for the S\&P 500 index over the in-sample
period from January 1990 to December 1999. The left panel presents the results
for the two-factor GAS model in Section \ref{sGAS}. The right panel presents
the results for the three one-factor models: a one-factor GAS model
(from\ Section \ref{sFZ1F}), and a GARCH model estimated by\ FZ loss
minimization, and \textquotedblleft hybrid\textquotedblright\ one-factor GAS
model that includes a additional GARCH-type forcing variable (both from
Section \ref{sGASGARCH}). The bottom row of this table presents the average
(in-sample) losses from each of these four models.

\bigskip

\pagebreak\pagebreak%

\begin{landscape}%

\begin{center}
\bigskip

\textbf{Table 8: Out-of-sample average losses and goodness-of-fit tests
(alpha=0.05)}

\bigskip%

\begin{tabular}
[c]{rccccllllllllll}\hline
&  &  &  &  &  &  &  &  &  &  &  &  &  & \\
& \multicolumn{4}{c}{\textit{Average loss}} &  &
\multicolumn{4}{c}{\textit{GoF p-values: VaR}} &  &
\multicolumn{4}{c}{\textit{GoF p-values:\ ES}}\\\cline{2-5}\cline{7-10}%
\cline{12-15}
& \textbf{S\&P} & \textbf{DJIA} & \textbf{NIK} & \textbf{FTSE} &  &
\multicolumn{1}{c}{\textbf{S\&P}} & \multicolumn{1}{c}{\textbf{DJIA}} &
\multicolumn{1}{c}{\textbf{NIK}} & \multicolumn{1}{c}{\textbf{FTSE}} &  &
\multicolumn{1}{c}{\textbf{S\&P}} & \multicolumn{1}{c}{\textbf{DJIA}} &
\multicolumn{1}{c}{\textbf{NIK}} & \multicolumn{1}{c}{\textbf{FTSE}%
}\\\cline{2-15}
&  &  &  &  &  &  &  &  &  &  &  &  &  & \\
\multicolumn{1}{l}{RW-125} & 0.914 & 0.864 & 1.290 & 0.959 &  &
\multicolumn{1}{c}{0.021} & \multicolumn{1}{c}{0.013} &
\multicolumn{1}{c}{0.000} & \multicolumn{1}{c}{0.000} &  &
\multicolumn{1}{c}{0.029} & \multicolumn{1}{c}{0.018} &
\multicolumn{1}{c}{0.006} & \multicolumn{1}{c}{0.000}\\
\multicolumn{1}{l}{RW-250} & 0.959 & 0.909 & 1.294 & 1.002 &  &
\multicolumn{1}{c}{0.001} & \multicolumn{1}{c}{0.001} &
\multicolumn{1}{c}{0.007} & \multicolumn{1}{c}{0.000} &  &
\multicolumn{1}{c}{0.043} & \multicolumn{1}{c}{0.014} &
\multicolumn{1}{c}{0.018} & \multicolumn{1}{c}{0.002}\\
\multicolumn{1}{l}{RW-500} & 1.023 & 0.976 & 1.318 & 1.056 &  &
\multicolumn{1}{c}{0.001} & \multicolumn{1}{c}{0.001} &
\multicolumn{1}{c}{0.000} & \multicolumn{1}{c}{0.000} &  &
\multicolumn{1}{c}{0.012} & \multicolumn{1}{c}{0.011} &
\multicolumn{1}{c}{0.001} & \multicolumn{1}{c}{0.000}\\
\multicolumn{1}{l}{GCH-N} & 0.876 & 0.808 & 1.170 & 0.871 &  &
\multicolumn{1}{c}{0.031} & \multicolumn{1}{c}{\textbf{0.139}} &
\multicolumn{1}{c}{\textbf{0.532}} & \multicolumn{1}{c}{0.000} &  &
\multicolumn{1}{c}{0.001} & \multicolumn{1}{c}{0.006} &
\multicolumn{1}{c}{\textbf{0.187}} & \multicolumn{1}{c}{0.000}\\
\multicolumn{1}{l}{GCH-Skt} & 0.866 & 0.796 & 1.168 & \textit{0.863} &  &
\multicolumn{1}{c}{0.003} & \multicolumn{1}{c}{\textit{0.085}} &
\multicolumn{1}{c}{\textbf{0.114}} & \multicolumn{1}{c}{0.000} &  &
\multicolumn{1}{c}{0.003} & \multicolumn{1}{c}{\textit{0.085}} &
\multicolumn{1}{c}{\textbf{0.282}} & \multicolumn{1}{c}{0.000}\\
\multicolumn{1}{l}{GCH-EDF} & 0.862 & \textit{0.796} & \textit{1.166} &
0.867 &  & \multicolumn{1}{c}{0.003} & \multicolumn{1}{c}{0.029} &
\multicolumn{1}{c}{\textbf{0.583}} & \multicolumn{1}{c}{0.000} &  &
\multicolumn{1}{c}{0.014} & \multicolumn{1}{c}{\textit{0.098}} &
\multicolumn{1}{c}{\textbf{0.527}} & \multicolumn{1}{c}{0.000}\\
\multicolumn{1}{l}{FZ-2F} & \textit{0.856} & 0.798 & 1.206 & 1.098 &  &
\multicolumn{1}{c}{0.000} & \multicolumn{1}{c}{0.000} &
\multicolumn{1}{c}{\textbf{0.258}} & \multicolumn{1}{c}{0.000} &  &
\multicolumn{1}{c}{\textit{0.061}} & \multicolumn{1}{c}{\textbf{0.195}} &
\multicolumn{1}{c}{\textbf{0.247}} & \multicolumn{1}{c}{0.000}\\
\multicolumn{1}{l}{FZ-1F} & \textbf{0.853} & \textbf{0.784} & 1.191 & 0.867 &
& \multicolumn{1}{c}{\textbf{0.242}} & \multicolumn{1}{c}{\textbf{0.248}} &
\multicolumn{1}{c}{\textbf{0.317}} & \multicolumn{1}{c}{0.019} &  &
\multicolumn{1}{c}{\textbf{0.313}} & \multicolumn{1}{c}{\textbf{0.130}} &
\multicolumn{1}{c}{\textbf{0.612}} & \multicolumn{1}{c}{0.003}\\
\multicolumn{1}{l}{GCH-FZ} & 0.862 & 0.797 & 1.167 & 0.866 &  &
\multicolumn{1}{c}{0.005} & \multicolumn{1}{c}{0.001} &
\multicolumn{1}{c}{\textbf{0.331}} & \multicolumn{1}{c}{0.000} &  &
\multicolumn{1}{c}{0.018} & \multicolumn{1}{c}{0.011} &
\multicolumn{1}{c}{\textbf{0.389}} & \multicolumn{1}{c}{0.000}\\
\multicolumn{1}{l}{Hybrid} & 0.869 & 0.797 & \textbf{1.165} & \textbf{0.862} &
& \multicolumn{1}{c}{0.001} & \multicolumn{1}{c}{\textit{0.069}} &
\multicolumn{1}{c}{\textbf{0.326}} & \multicolumn{1}{c}{0.000} &  &
\multicolumn{1}{c}{0.010} & \multicolumn{1}{c}{\textbf{0.159}} &
\multicolumn{1}{c}{\textbf{0.518}} & \multicolumn{1}{c}{0.000}\\\hline
\end{tabular}

\end{center}

\bigskip

\textit{Notes:} The left panel of this table presents the average losses,
using the FZ0 loss function, for four daily equity return series, over the
out-of-sample period from January 2000 to December 2016, for ten different
forecasting models. The lowest average loss in each column is highlighted in
bold, the second-lowest is highlighted in italics. The first three rows
correspond to rolling window forecasts, the next three rows correspond
to\ GARCH forecasts based on different models for the standardized residuals,
and the last four rows correspond to models introduced in Section
\ref{sMODELS}. The middle and right panels of this table present $p$-values
from goodness-of-fit tests of the VaR and\ ES forecasts respectively. Values
that are greater than 0.10 (indicating no evidence against optimality at the
0.10 level) are in bold, and values between 0.05 and 0.10 are in italics.

\bigskip%

\end{landscape}%
\pagebreak

\begin{center}
\bigskip

\textbf{Table 9: Diebold-Mariano t-statistics on average out-of-sample loss
differences}

\textbf{alpha=0.05, S\&P 500 returns}%

\begin{tabular}
[c]{rcccccccccc}\hline
& RW125 & RW250 & RW500 & G-N & G-Skt & G-EDF & FZ-2F & FZ-1F & G-FZ &
Hybrid\\
RW125 &  & -2.580 & -4.260 & 2.109 & 2.693 & 2.900 & 2.978 & 3.978 & 3.020 &
2.967\\
RW250 & 2.580 &  & -4.015 & 3.098 & 3.549 & 3.730 & 3.799 & 4.701 & 3.921 &
4.110\\
RW500 & 4.260 & 4.015 &  & 4.401 & 4.783 & 4.937 & 5.168 & 5.893 & 5.125 &
5.450\\\hline
G-N & -2.109 & -3.098 & -4.401 &  & 3.670 & 3.068 & 1.553 & 2.248 & 2.818 &
0.685\\
G-Skt & -2.693 & -3.549 & -4.783 & -3.670 &  & 2.103 & 0.889 & 1.475 & 1.232 &
-0.403\\
G-EDF & -2.900 & -3.730 & -4.937 & -3.068 & -2.103 &  & 0.599 & 1.157 &
0.024 & -0.769\\\hline
FZ-2F & -2.978 & -3.799 & -5.168 & -1.553 & -0.889 & -0.599 &  & 0.582 &
-0.555 & -0.580\\
FZ-1F & -3.912 & -4.423 & -5.483 & -1.986 & -1.421 & -1.198 & -0.582 &  &
-1.266 & -1.978\\
G-FZ & -3.020 & -3.921 & -5.125 & -2.818 & -1.324 & -0.024 & 0.555 & 1.266 &
& -0.914\\
Hybrid & -3.276 & -4.137 & -5.272 & -1.492 & -0.419 & 0.045 & 0.580 & 1.978 &
0.914 & \\\hline
\end{tabular}

\end{center}

\bigskip

\textit{Notes:} This table presents $t$-statistics from Diebold-Mariano tests
comparing the average losses, using the FZ0 loss function, over the
out-of-sample period from January 2000 to December 2016, for ten different
forecasting models. A positive value indicates that the row model has higher
average loss than the column model. Values greater than 1.96 in absolute value
indicate that the average loss difference is significantly different from zero
at the 95\% confidence level. Values along the main diagonal are all
identically zero and are omitted for interpretability. The first three rows
correspond to rolling window forecasts, the next three rows correspond
to\ GARCH forecasts based on different models for the standardized residuals,
and the last four rows correspond to models introduced in Section
\ref{sMODELS}.

\bigskip

\pagebreak\bigskip

\begin{center}
\textbf{Table 10: Out-of-sample performance rankings for various alpha}%

\begin{tabular}
[c]{rrrrrrrrrrrr}\hline
&  &  &  &  &  &  &  &  &  &  & \\
& \multicolumn{5}{c}{$\alpha=0.01$} &  & \multicolumn{5}{c}{$\alpha=0.025$%
}\\\cline{2-6}\cline{8-12}%
\multicolumn{1}{l}{} & \multicolumn{1}{c}{\textbf{S\&P}} &
\multicolumn{1}{c}{\textbf{DJIA}} & \multicolumn{1}{c}{\textbf{NIK}} &
\multicolumn{1}{c}{\textbf{FTSE}} & \multicolumn{1}{c}{\textbf{Avg}} &
\multicolumn{1}{c}{} & \multicolumn{1}{c}{\textbf{S\&P}} &
\multicolumn{1}{c}{\textbf{DJIA}} & \multicolumn{1}{c}{\textbf{NIK}} &
\multicolumn{1}{c}{\textbf{FTSE}} & \multicolumn{1}{c}{\textbf{Avg}}\\
\multicolumn{1}{l}{RW-125} & \multicolumn{1}{c}{7} & \multicolumn{1}{c}{8} &
\multicolumn{1}{c}{10} & \multicolumn{1}{c}{7} & \multicolumn{1}{c}{8} &
\multicolumn{1}{c}{} & \multicolumn{1}{c}{8} & \multicolumn{1}{c}{8} &
\multicolumn{1}{c}{8} & \multicolumn{1}{c}{7} & \multicolumn{1}{c}{7.75}\\
\multicolumn{1}{l}{RW-250} & \multicolumn{1}{c}{8} & \multicolumn{1}{c}{9} &
\multicolumn{1}{c}{8} & \multicolumn{1}{c}{8} & \multicolumn{1}{c}{8.25} &
\multicolumn{1}{c}{} & \multicolumn{1}{c}{9} & \multicolumn{1}{c}{9} &
\multicolumn{1}{c}{7} & \multicolumn{1}{c}{8} & \multicolumn{1}{c}{8.25}\\
\multicolumn{1}{l}{RW-500} & \multicolumn{1}{c}{10} & \multicolumn{1}{c}{10} &
\multicolumn{1}{c}{9} & \multicolumn{1}{c}{9} & \multicolumn{1}{c}{9.5} &
\multicolumn{1}{c}{} & \multicolumn{1}{c}{10} & \multicolumn{1}{c}{10} &
\multicolumn{1}{c}{9} & \multicolumn{1}{c}{9} & \multicolumn{1}{c}{9.5}%
\\\hline
\multicolumn{1}{l}{G-N} & \multicolumn{1}{c}{6} & \multicolumn{1}{c}{6} &
\multicolumn{1}{c}{5} & \multicolumn{1}{c}{4} & \multicolumn{1}{c}{5.25} &
\multicolumn{1}{c}{} & \multicolumn{1}{c}{7} & \multicolumn{1}{c}{6} &
\multicolumn{1}{c}{4} & \multicolumn{1}{c}{3} & \multicolumn{1}{c}{5}\\
\multicolumn{1}{l}{G-Skt} & \multicolumn{1}{c}{5} & \multicolumn{1}{c}{3} &
\multicolumn{1}{c}{2} & \multicolumn{1}{c}{2} & \multicolumn{1}{c}{3} &
\multicolumn{1}{c}{} & \multicolumn{1}{c}{5} & \multicolumn{1}{c}{3} &
\multicolumn{1}{c}{1} & \multicolumn{1}{c}{1} & \multicolumn{1}{c}{2.5}\\
\multicolumn{1}{l}{G-EDF} & \multicolumn{1}{c}{4} & \multicolumn{1}{c}{2} &
\multicolumn{1}{c}{3} & \multicolumn{1}{c}{1} & \multicolumn{1}{c}{2.5} &
\multicolumn{1}{c}{} & \multicolumn{1}{c}{2} & \multicolumn{1}{c}{2} &
\multicolumn{1}{c}{3} & \multicolumn{1}{c}{2} & \multicolumn{1}{c}{2.25}%
\\\hline
\multicolumn{1}{l}{FZ-2F} & \multicolumn{1}{c}{1} & \multicolumn{1}{c}{4} &
\multicolumn{1}{c}{7} & \multicolumn{1}{c}{10} & \multicolumn{1}{c}{5.5} &
\multicolumn{1}{c}{} & \multicolumn{1}{c}{4} & \multicolumn{1}{c}{5} &
\multicolumn{1}{c}{10} & \multicolumn{1}{c}{10} & \multicolumn{1}{c}{7.25}\\
\multicolumn{1}{l}{FZ-1F} & \multicolumn{1}{c}{9} & \multicolumn{1}{c}{7} &
\multicolumn{1}{c}{6} & \multicolumn{1}{c}{6} & \multicolumn{1}{c}{7} &
\multicolumn{1}{c}{} & \multicolumn{1}{c}{3} & \multicolumn{1}{c}{4} &
\multicolumn{1}{c}{6} & \multicolumn{1}{c}{4} & \multicolumn{1}{c}{4.25}\\
\multicolumn{1}{l}{G-FZ} & \multicolumn{1}{c}{3} & \multicolumn{1}{c}{1} &
\multicolumn{1}{c}{1} & \multicolumn{1}{c}{3} & \multicolumn{1}{c}{2} &
\multicolumn{1}{c}{} & \multicolumn{1}{c}{1} & \multicolumn{1}{c}{1} &
\multicolumn{1}{c}{2} & \multicolumn{1}{c}{5} & \multicolumn{1}{c}{2.25}\\
\multicolumn{1}{l}{Hybrid} & \multicolumn{1}{c}{2} & \multicolumn{1}{c}{5} &
\multicolumn{1}{c}{4} & \multicolumn{1}{c}{5} & \multicolumn{1}{c}{4} &
\multicolumn{1}{c}{} & \multicolumn{1}{c}{6} & \multicolumn{1}{c}{7} &
\multicolumn{1}{c}{5} & \multicolumn{1}{c}{6} & \multicolumn{1}{c}{6}\\\hline
\multicolumn{1}{l}{} & \multicolumn{1}{c}{} & \multicolumn{1}{c}{} &
\multicolumn{1}{c}{} & \multicolumn{1}{c}{} & \multicolumn{1}{c}{} &
\multicolumn{1}{c}{} & \multicolumn{1}{c}{} & \multicolumn{1}{c}{} &
\multicolumn{1}{c}{} & \multicolumn{1}{c}{} & \multicolumn{1}{c}{}\\
\multicolumn{1}{l}{} & \multicolumn{5}{c}{$\alpha=0.05$} &
\multicolumn{1}{c}{} & \multicolumn{5}{c}{$\alpha=0.10$}\\\cline{2-6}%
\cline{8-12}%
\multicolumn{1}{l}{} & \multicolumn{1}{c}{\textbf{S\&P}} &
\multicolumn{1}{c}{\textbf{DJIA}} & \multicolumn{1}{c}{\textbf{NIK}} &
\multicolumn{1}{c}{\textbf{FTSE}} & \multicolumn{1}{c}{\textbf{Avg}} &
\multicolumn{1}{c}{} & \multicolumn{1}{c}{\textbf{S\&P}} &
\multicolumn{1}{c}{\textbf{DJIA}} & \multicolumn{1}{c}{\textbf{NIK}} &
\multicolumn{1}{c}{\textbf{FTSE}} & \multicolumn{1}{c}{\textbf{Avg}}\\
\multicolumn{1}{l}{RW-125} & \multicolumn{1}{c}{8} & \multicolumn{1}{c}{8} &
\multicolumn{1}{c}{8} & \multicolumn{1}{c}{7} & \multicolumn{1}{c}{7.75} &
\multicolumn{1}{c}{} & \multicolumn{1}{c}{8} & \multicolumn{1}{c}{8} &
\multicolumn{1}{c}{8} & \multicolumn{1}{c}{8} & \multicolumn{1}{c}{8}\\
\multicolumn{1}{l}{RW-250} & \multicolumn{1}{c}{9} & \multicolumn{1}{c}{9} &
\multicolumn{1}{c}{9} & \multicolumn{1}{c}{8} & \multicolumn{1}{c}{8.75} &
\multicolumn{1}{c}{} & \multicolumn{1}{c}{9} & \multicolumn{1}{c}{9} &
\multicolumn{1}{c}{9} & \multicolumn{1}{c}{9} & \multicolumn{1}{c}{9}\\
\multicolumn{1}{l}{RW-500} & \multicolumn{1}{c}{10} & \multicolumn{1}{c}{10} &
\multicolumn{1}{c}{10} & \multicolumn{1}{c}{9} & \multicolumn{1}{c}{9.75} &
\multicolumn{1}{c}{} & \multicolumn{1}{c}{10} & \multicolumn{1}{c}{10} &
\multicolumn{1}{c}{10} & \multicolumn{1}{c}{10} & \multicolumn{1}{c}{10}%
\\\hline
\multicolumn{1}{l}{G-N} & \multicolumn{1}{c}{7} & \multicolumn{1}{c}{7} &
\multicolumn{1}{c}{5} & \multicolumn{1}{c}{6} & \multicolumn{1}{c}{6.25} &
\multicolumn{1}{c}{} & \multicolumn{1}{c}{3} & \multicolumn{1}{c}{2} &
\multicolumn{1}{c}{5} & \multicolumn{1}{c}{5} & \multicolumn{1}{c}{3.75}\\
\multicolumn{1}{l}{G-Skt} & \multicolumn{1}{c}{5} & \multicolumn{1}{c}{3} &
\multicolumn{1}{c}{4} & \multicolumn{1}{c}{2} & \multicolumn{1}{c}{3.5} &
\multicolumn{1}{c}{} & \multicolumn{1}{c}{7} & \multicolumn{1}{c}{4} &
\multicolumn{1}{c}{4} & \multicolumn{1}{c}{4} & \multicolumn{1}{c}{4.75}\\
\multicolumn{1}{l}{G-EDF} & \multicolumn{1}{c}{4} & \multicolumn{1}{c}{2} &
\multicolumn{1}{c}{2} & \multicolumn{1}{c}{5} & \multicolumn{1}{c}{3.25} &
\multicolumn{1}{c}{} & \multicolumn{1}{c}{4} & \multicolumn{1}{c}{3} &
\multicolumn{1}{c}{3} & \multicolumn{1}{c}{3} & \multicolumn{1}{c}{3.25}%
\\\hline
\multicolumn{1}{l}{FZ-2F} & \multicolumn{1}{c}{2} & \multicolumn{1}{c}{6} &
\multicolumn{1}{c}{7} & \multicolumn{1}{c}{10} & \multicolumn{1}{c}{6.25} &
\multicolumn{1}{c}{} & \multicolumn{1}{c}{2} & \multicolumn{1}{c}{6} &
\multicolumn{1}{c}{7} & \multicolumn{1}{c}{7} & \multicolumn{1}{c}{5.5}\\
\multicolumn{1}{l}{FZ-1F} & \multicolumn{1}{c}{1} & \multicolumn{1}{c}{1} &
\multicolumn{1}{c}{6} & \multicolumn{1}{c}{4} & \multicolumn{1}{c}{3} &
\multicolumn{1}{c}{} & \multicolumn{1}{c}{1} & \multicolumn{1}{c}{7} &
\multicolumn{1}{c}{2} & \multicolumn{1}{c}{2} & \multicolumn{1}{c}{3}\\
\multicolumn{1}{l}{G-FZ} & \multicolumn{1}{c}{3} & \multicolumn{1}{c}{5} &
\multicolumn{1}{c}{3} & \multicolumn{1}{c}{3} & \multicolumn{1}{c}{3.5} &
\multicolumn{1}{c}{} & \multicolumn{1}{c}{6} & \multicolumn{1}{c}{5} &
\multicolumn{1}{c}{6} & \multicolumn{1}{c}{6} & \multicolumn{1}{c}{5.75}\\
\multicolumn{1}{l}{Hybrid} & \multicolumn{1}{c}{6} & \multicolumn{1}{c}{4} &
\multicolumn{1}{c}{1} & \multicolumn{1}{c}{1} & \multicolumn{1}{c}{3} &
\multicolumn{1}{c}{} & \multicolumn{1}{c}{5} & \multicolumn{1}{c}{1} &
\multicolumn{1}{c}{1} & \multicolumn{1}{c}{1} & \multicolumn{1}{c}{2}\\\hline
\end{tabular}

\end{center}

\bigskip

\textit{Notes:} This table presents the rankings (with the best performing
model ranked 1 and the worst ranked 10) based on average losses using the FZ0
loss function, for four daily equity return series, over the out-of-sample
period from January 2000 to December 2016, for ten different forecasting
models. The first three rows in each panel correspond to rolling window
forecasts, the next three rows correspond to\ GARCH forecasts based on
different models for the standardized residuals, and the last four rows
correspond to models introduced in Section \ref{sMODELS}. The last column in
each panel represents the average rank across the four equity return series.

\pagebreak%

\begin{figure}
[ph]
\begin{center}
\includegraphics[
height=2.8905in,
width=6.5778in
]%
{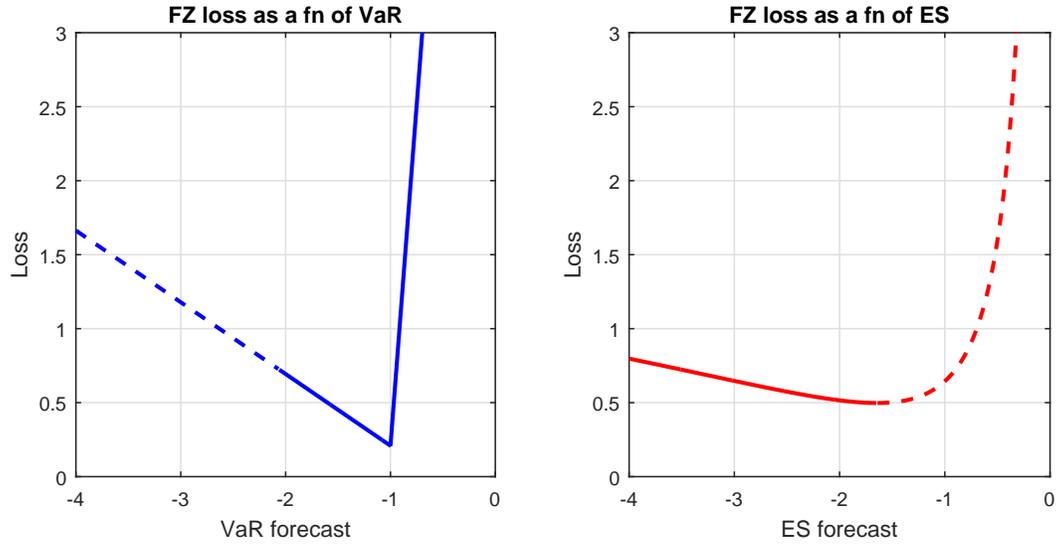}%
\caption{\label{figFZ0loss}\textit{This figure plots the FZ0 loss function
when }$Y=-1$\textit{\ and }$\alpha=0.05.$\textit{\ In the left panel we fix
}$e=-2.06$\textit{\ and vary }$v,$\textit{\ in the right panel we fix
}$v=-1.64$\textit{\ and vary }$e.$\textit{\ Values where }$v<e$\textit{\ are
indicated with a dashed line.}}%
\end{center}
\end{figure}

%

\begin{figure}
[ptb]
\begin{center}
\includegraphics[
height=3.8269in,
width=4.2473in
]%
{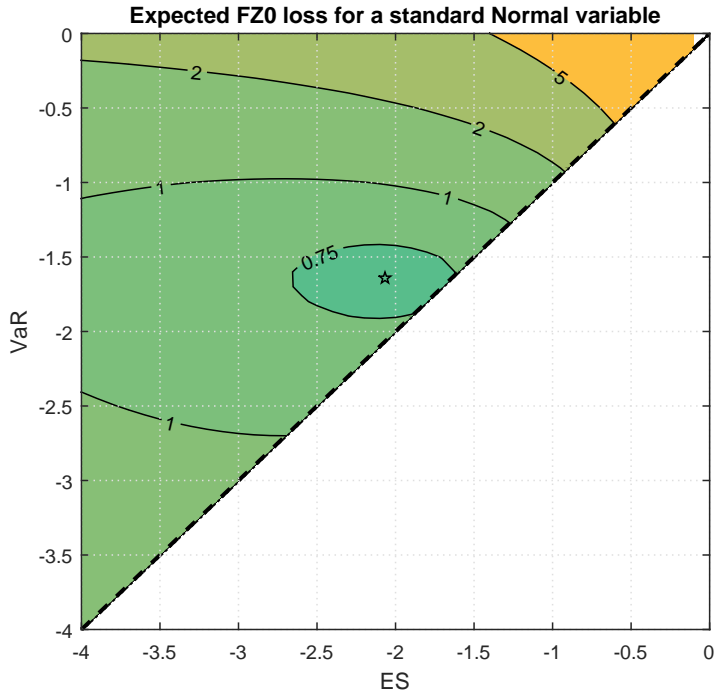}%
\caption{\label{figFZ0contour}\textit{Contours of expected FZ0 loss when the
target variable is standard Normal. Only values where ES$<$VaR$<$0 are
considered. The optimal value is marked with a star.}}%
\end{center}
\end{figure}
%

\begin{figure}
[ph]
\begin{center}
\includegraphics[
height=2.881in,
width=5.6597in
]%
{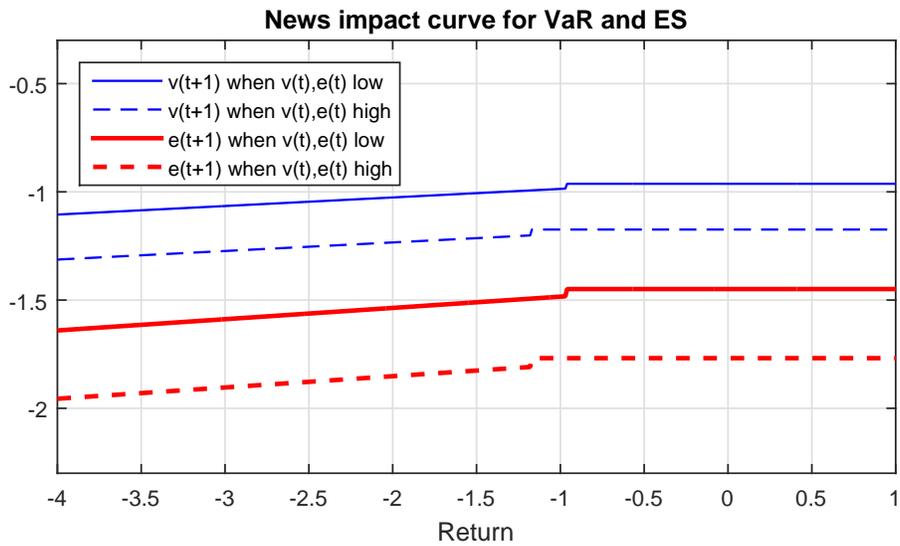}%
\caption{\label{figNIC}\textit{This figure shows the values of VaR and\ ES as
a function of the lagged return, when the lagged values of VaR and\ ES are
either low (10\% below average) or high (10\% above average). The function is
based on the estimated parameters for daily S\&P 500 returns.}}%
\end{center}
\end{figure}

\bigskip%

\begin{figure}
[ptb]
\begin{center}
\includegraphics[
height=7.072in,
width=5.0181in
]%
{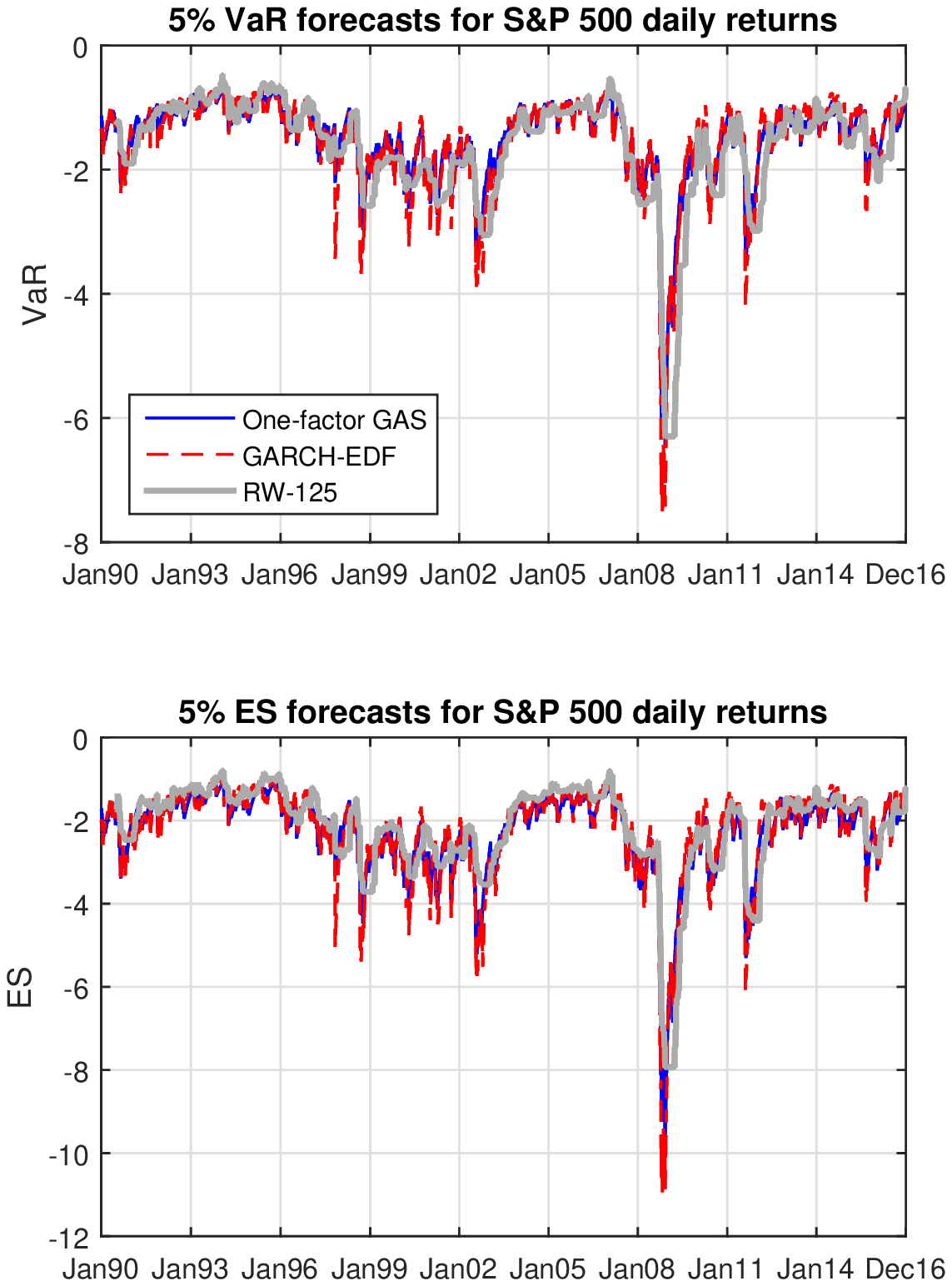}%
\caption{\label{figSPve1}\textit{This figure plots the estimated 5\%
Value-at-Risk (VaR) and Expected Shortfall (ES) for daily returns on the S\&P
500 index, over the period January 1990 to December 2016.\ The estimates are
based on a one-factor GAS model, a GARCH model, and a rolling window using 125
observations.}}%
\end{center}
\end{figure}
%

\begin{figure}
[ptb]
\begin{center}
\includegraphics[
height=7.0668in,
width=5.019in
]%
{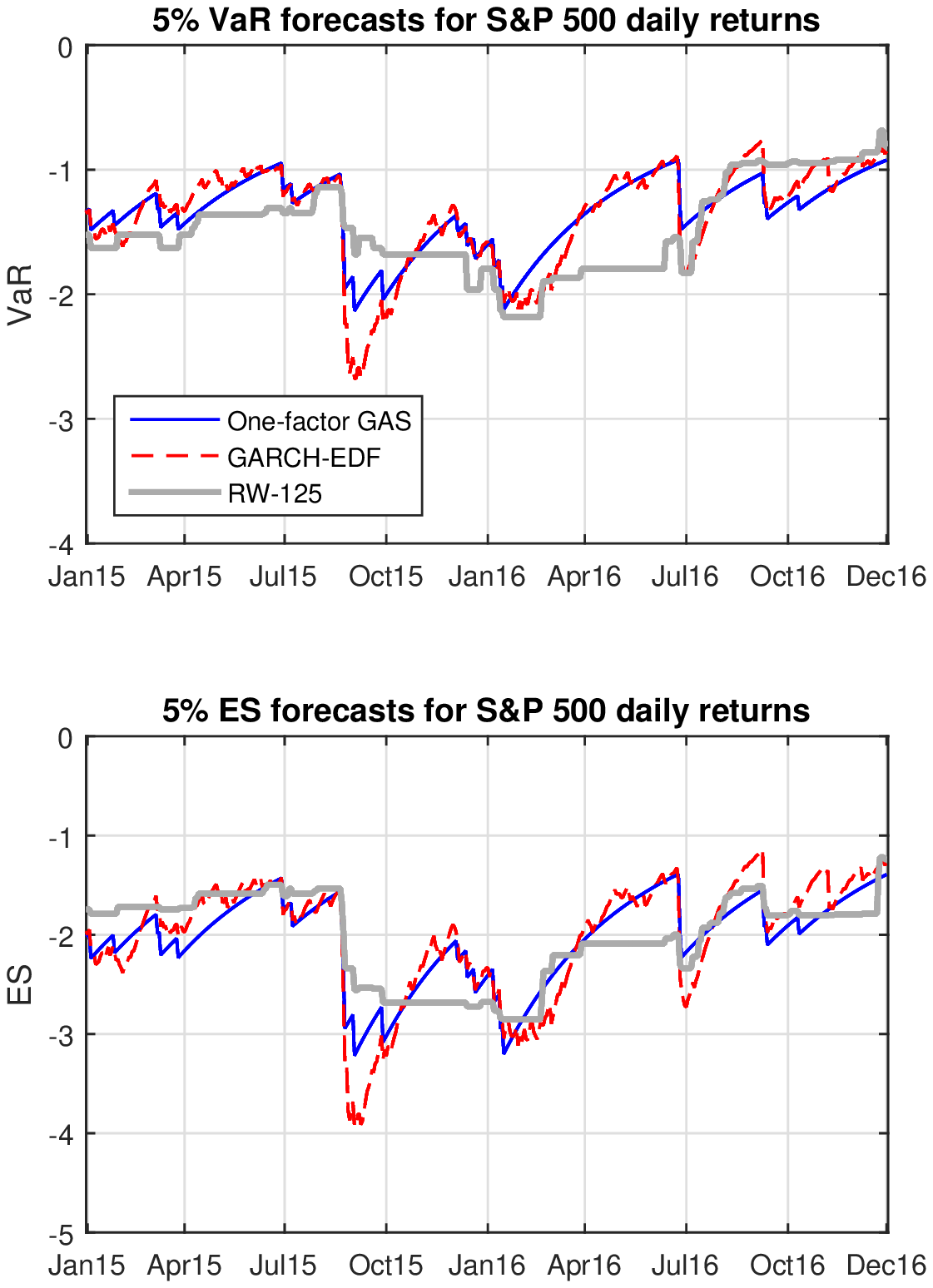}%
\caption{\label{figSPve2}\textit{This figure plots the estimated 5\%
Value-at-Risk (VaR) and Expected Shortfall (ES) for daily returns on the S\&P
500 index, over the period January 2015 to December 2016.\ The estimates are
based on a one-factor GAS model, a GARCH model, and a rolling window using 125
observations.}}%
\end{center}
\end{figure}

\bigskip

\pagebreak%

\def\baselinestretch{1.5}\small\normalsize
\setcounter{page}{1}\setcounter{equation}{0}

\begin{center}
{\Large Supplemental Appendix to:}

\bigskip

{\LARGE Dynamic Semiparametric Models for }\linebreak{\LARGE Expected
Shortfall (and Value-at-Risk)}

\bigskip\bigskip%

\begin{tabular}
[c]{ccc}%
{\Large Andrew J.\ Patton } & {\Large Johanna F. Ziegel} & {\Large Rui Chen
}\\
Duke University & University of Bern & \ \ Duke University \
\end{tabular}
{\Large \ }

\bigskip

11 July 2017
\end{center}

\bigskip

This appendix contains lemmas that provide further details on the proof of
Theorem 2 presented in the main paper, as well as additional tables of analysis.

\bigskip

{\LARGE Appendix SA.1:\ Detailed proofs}

Throughout this appendix, we suppress the subscript on $\mathbf{\hat{\theta}%
}_{T}$ for simplicity of presentation, and we denote the conditional
distribution and density functions as $F_{t}$ and $f_{t}$ rather than
$F_{t}\left(  \cdot|\mathcal{F}_{t-1}\right)  $ and $f_{t}\left(
\cdot|\mathcal{F}_{t-1}\right)  .$

\bigskip

In Lemmas \ref{lemmaLAM} and \ref{lemmaN3i} below, we will refer to the
expected score, defined as:%
\begin{align}
\lambda_{T}(\mathbf{\theta})=T^{-1}\sum_{t=1}^{T} &  \mathbb{E}\left[
g_{t}(\mathbf{\theta})\right] \\
=T^{-1}\sum_{t=1}^{T} &  \mathbb{E}\left[  \frac{1}{-e_{t}(\mathbf{\theta}%
)}\left(  \frac{F_{t}\left(  v_{t}(\mathbf{\theta})\right)  }{\alpha
}-1\right)  \nabla v_{t}(\mathbf{\theta})^{\prime}+\right. \nonumber\\
&  \left.  \frac{1}{e_{t}(\mathbf{\theta})^{2}}\left(  \frac{F_{t}\left(
v_{t}(\mathbf{\theta})\right)  }{\alpha}v_{t}(\mathbf{\theta})-\frac{{1}%
}{\alpha}\mathbb{E}_{t-1}[Y_{t}|1\left\{  Y_{t}\leq v_{t}(\mathbf{\theta
})\right\}  ]-v_{t}(\mathbf{\theta})+e_{t}(\mathbf{\theta})\right)  \nabla
e_{t}(\mathbf{\theta})^{\prime}\right] \nonumber
\end{align}

\bigskip

\begin{lemma}
\label{lemmaLAM}Let
\begin{equation}
\Lambda(\mathbf{\theta}^{\ast})=T^{-1}\sum_{t=1}^{T}\left.  \frac
{\partial\mathbb{E}\left[  g_{t}(\mathbf{\theta})\right]  }{\partial
\mathbf{\theta}}\right\vert _{\mathbf{\theta}=\mathbf{\theta}^{\ast}}%
\end{equation}
Then under Assumptions 1-2,%
\begin{equation}
\sqrt{T}(\mathbf{\mathbf{\hat{\theta}}}-\mathbf{\theta}^{0})=\left(
\Lambda^{-1}(\mathbf{\theta}^{0})+o_{p}(1)\right)  \left(  -\frac{1}{\sqrt{T}%
}\sum_{t=1}^{T}g_{t}(\mathbf{\theta}^{0})+o_{p}(1)\right)
\end{equation}

\end{lemma}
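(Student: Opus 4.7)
\textbf{Proof plan for Lemma \ref{lemmaLAM}.} The strategy is to invoke Lemma A.1 of Weiss (1991), which gives a Bahadur-type representation for M-estimators defined by a possibly non-differentiable criterion under weakly dependent data. That lemma requires verifying five conditions (labelled N1--N5 in Weiss). Conditions N1 (stationarity) and N5 (compactness of $\Theta$) are immediate from Assumption 1(B), and condition N2 (consistency of $\mathbf{\hat{\theta}}$) is exactly Theorem \ref{thmCONSISTENCY}. The real work is in verifying N3 (differentiability of the population criterion $\lambda_{T}$ at $\mathbf{\theta}^{0}$) and N4 (stochastic equicontinuity of the centred empirical score), which are the contents of Lemmas \ref{lemmaN3i} and \ref{lemmaN4} cited in the outline of Theorem \ref{thmASYMPNORM}.

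For N3, I would separately establish (i) $\lambda_{T}(\mathbf{\theta}^{0})=\mathbf{0}$, and (ii) that $\lambda_{T}$ is continuously differentiable on a neighbourhood of $\mathbf{\theta}^{0}$ with Jacobian $\Lambda(\mathbf{\theta}^{0})$ whose eigenvalues are bounded below. For (i), although $g_{t}$ contains the indicator $\mathbf{1}\{Y_{t}\leq v_{t}(\mathbf{\theta})\}$, its conditional expectation given $\mathcal{F}_{t-1}$ smooths this into $F_{t}(v_{t}(\mathbf{\theta}))$; Corollary 5.5 of Fissler and Ziegel (2016) implies that the conditional expected FZ0 loss is uniquely minimised at $(v_{t}(\mathbf{\theta}^{0}),e_{t}(\mathbf{\theta}^{0}))$, so the resulting conditional FOCs $\partial\mathbb{E}_{t-1}[L_{FZ0}]/\partial v=\partial\mathbb{E}_{t-1}[L_{FZ0}]/\partial e=0$ feed through the chain-rule expression for $\mathbb{E}[g_{t}(\mathbf{\theta}^{0})\mid\mathcal{F}_{t-1}]$ to give zero, and iterated expectations yield $\lambda_{T}(\mathbf{\theta}^{0})=\mathbf{0}$. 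For (ii), differentiating $\lambda_{T}$ under the integral sign---justified by the envelope bounds $V,V_{1},V_{2},H_{1},H_{2}$ of Assumption 2(C)(D) together with the bounded, Lipschitz conditional density of Assumption 2(B)(i)---delivers $\Lambda(\mathbf{\theta}^{0})=\mathbf{D}_{T}$ of Theorem \ref{thmASYMPNORM}, with invertibility given by Assumption 2(E).

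The main obstacle is N4, a stochastic equicontinuity statement of the form
\[
\sup_{\|\mathbf{\theta}-\mathbf{\theta}^{0}\|\leq d_{T}}\left\Vert T^{-1/2}\sum_{t=1}^{T}\bigl[g_{t}(\mathbf{\theta})-g_{t}(\mathbf{\theta}^{0})-\bigl(\lambda_{T}(\mathbf{\theta})-\lambda_{T}(\mathbf{\theta}^{0})\bigr)\bigr]\right\Vert =o_{p}(1)
\]
for a deterministic $d_{T}\to 0$. The difficulty is that $\mathbf{\theta}\mapsto g_{t}(\mathbf{\theta})$ is discontinuous through the indicator $\mathbf{1}\{Y_{t}\leq v_{t}(\mathbf{\theta})\}$, so a direct Taylor expansion of the empirical process is unavailable. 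My plan is to decompose $g_{t}$ into a smooth piece (the $\nabla v_{t}$ and $\nabla e_{t}$ contributions whose $\mathbf{\theta}$-dependence enters through the continuous functions $1/e_{t}$, $v_{t}/e_{t}^{2}$, etc.) and an indicator piece. The smooth piece is handled by a mean-value expansion with remainders controlled by the moment bounds in Assumption 2(D). For the indicator piece, the key estimate is that $\mathbb{E}[|\mathbf{1}\{Y_{t}\leq v_{t}(\mathbf{\theta})\}-\mathbf{1}\{Y_{t}\leq v_{t}(\mathbf{\theta}^{0})\}|\mid\mathcal{F}_{t-1}]\leq K\,V_{1}(\mathcal{F}_{t-1})\|\mathbf{\theta}-\mathbf{\theta}^{0}\|$, following from the bounded, Lipschitz density in Assumption 2(B)(i). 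This gives the centred indicator increments a conditional second moment of order $d_{T}$; combined with the $\alpha$-mixing of size $-2q/(q-2)$ from Assumption 2(G) and a standard maximal inequality (chaining over a finite covering of the shrinking ball) this yields the required $o_{p}(1)$ rate. This is essentially the time-series expansion carried out by Weiss (1991) and Engle and Manganelli (2004a), here extended to the two-dimensional criterion $(v_{t},e_{t})$.

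With N1--N5 in hand, Weiss' Lemma A.1 applies. The approximate first-order condition $T^{-1}\sum_{t=1}^{T}g_{t}(\mathbf{\hat{\theta}})=o_{p}(T^{-1/2})$ (with slack accommodating the non-differentiability of $L_{T}$) combined with the equicontinuity in N4 gives $\lambda_{T}(\mathbf{\hat{\theta}})=-T^{-1}\sum_{t=1}^{T}g_{t}(\mathbf{\theta}^{0})+o_{p}(T^{-1/2})$. A mean-value expansion using part (ii) of N3 then yields $\lambda_{T}(\mathbf{\hat{\theta}})=\Lambda(\mathbf{\theta}^{0})(\mathbf{\hat{\theta}}-\mathbf{\theta}^{0})+o_{p}(\|\mathbf{\hat{\theta}}-\mathbf{\theta}^{0}\|)$, where continuity of $\Lambda$ at $\mathbf{\theta}^{0}$ has been used to replace $\Lambda(\mathbf{\theta}^{\ast})$ with $\Lambda(\mathbf{\theta}^{0})+o_{p}(1)$ along the mean-value point. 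Multiplying through by $\sqrt{T}$ and pre-multiplying by $\Lambda^{-1}(\mathbf{\theta}^{0})$ delivers the claimed representation.
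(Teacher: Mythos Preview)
Your overall strategy matches the paper's: invoke Lemma A.1 of Weiss (1991), verify its conditions, and combine the resulting Bahadur representation with a mean-value expansion of $\lambda_T$. However, you have mislabeled the content of Weiss' conditions N3 and N4, and this matters for how the verification is organised. In Weiss, N3 has three parts: N3(i) is the local-identification growth condition $\|\lambda_T(\mathbf{\theta})\|\geq a\|\mathbf{\theta}-\mathbf{\theta}^0\|$ near $\mathbf{\theta}^0$; N3(ii) and N3(iii) are pointwise moment bounds $E[\mu_t(\mathbf{\theta},d)]\leq bd$ and $E[\mu_t(\mathbf{\theta},d)^q]\leq cd$ on the local oscillation $\mu_t(\mathbf{\theta},d)=\sup_{\|\mathbf{\tau}-\mathbf{\theta}\|\leq d}\|g_t(\mathbf{\tau})-g_t(\mathbf{\theta})\|$. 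Condition N4 is merely the second-moment bound $E\|g_t(\mathbf{\theta}^0)\|^2\leq M$. What you describe as ``N4'' (stochastic equicontinuity via a smooth/indicator decomposition) is in fact the substance of N3(ii)--(iii); the paper carries this out by splitting $g_t$ into six additive terms and bounding each $\mu_t^{(i)}(\mathbf{\theta},d)$ using the bounded density of Assumption 2(B)(i) and the envelope functions $V,V_1,V_2,H_1,H_2$ of Assumption 2(C)--(D). Your ``N3'' (differentiability of $\lambda_T$ with nonsingular Jacobian $\mathbf{D}_T$) delivers N3(i) but does not address parts (ii) and (iii).

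You also glide over the approximate first-order condition $T^{-1/2}\sum_{t}g_t(\mathbf{\hat\theta})=o_p(1)$, which Weiss' lemma requires as a separate input. The paper establishes this in its own Lemma \ref{lemmaGop1} via a Ruppert--Carroll left/right directional-derivative argument: since $L_T$ is minimised at $\mathbf{\hat\theta}$, the one-sided derivatives in each coordinate direction bracket zero, and their difference is controlled by $T^{-1/2}\max_t V_1(\mathcal{F}_{t-1})\cdot\sum_t\mathbf{1}\{Y_t=v_t(\mathbf{\hat\theta})\}$, which is $o_p(1)$ under the continuous-density assumption. Your parenthetical ``slack accommodating the non-differentiability'' gestures at this but does not supply the argument.
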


\bigskip

\begin{proof}
[Proof of Lemma \ref{lemmaLAM}]Consider a mean-value expansion of $\lambda
_{T}(\mathbf{\hat{\theta}})$ around $\mathbf{\theta}^{0}$:
\begin{align}
\lambda_{T}(\mathbf{\hat{\theta}})= &  \lambda_{T}(\mathbf{\theta}^{0}%
)+T^{-1}\sum_{t=1}^{T}\left.  \frac{\partial\mathbb{E}\left[  g_{t}%
(\mathbf{\theta})\right]  }{\partial\mathbf{\theta}}\right\vert
_{\mathbf{\theta}=\mathbf{\theta}^{\ast}}(\mathbf{\hat{\theta}}-\mathbf{\theta
}^{0})\\
&  =\Lambda(\mathbf{\theta}^{\ast})(\mathbf{\hat{\theta}}-\mathbf{\theta}^{0})
\end{align}
where $\mathbf{\theta}^{\ast}$ lies between $\mathbf{\hat{\theta}}$ and
$\mathbf{\theta}^{0}$, and noting that $\lambda_{T}(\mathbf{\theta}^{0})=0$
and the definition of $\Lambda(\mathbf{\theta}^{\ast})$ given in the statement
of the lemma. Proving the claim involves two results: (I) $\Lambda
^{-1}(\mathbf{\theta}^{\ast})=\Lambda^{-1}(\mathbf{\theta}^{0})+o_{p}(1), $
and (II) $\sqrt{T}\lambda_{T}(\mathbf{\hat{\theta}})=-\frac{1}{\sqrt{T}}%
\sum_{t=1}^{T}g_{t}(\mathbf{\theta}^{0})+o_{p}(1).$ Part (I) is easy to
verify: Since $v_{t}(\mathbf{\theta})$ and $e_{t}(\mathbf{\theta})$ are twice
continuously differentiable, and $e_{t}(\mathbf{\theta}^{0})<0$ ,
$\Lambda(\mathbf{\theta})$ is continuous in $\mathbf{\theta}$ and
$\Lambda(\mathbf{\theta})$ is non-singular in a neighborhood of
$\mathbf{\theta}^{0}$. Then by the continuous mapping theorem, $\mathbf{\theta
}^{\ast}\overset{p}{\rightarrow}\mathbf{\theta}^{0}\Rightarrow\Lambda
(\mathbf{\theta}^{\ast})^{-1}\overset{p}{\rightarrow}\Lambda^{-1}%
(\mathbf{\theta}^{0})$. Establishing (II) builds on Theorem 3 of Huber (1967)
and Lemma A.1 of Weiss (1991), which extends Huber's conclusion to the case of
non-\textit{iid} dependent random variables. We are going to verify the
conditions of Weiss's Lemma A.1. Since the other conditions are easily
checked, we only need to show that $T^{-1/2}\sum_{t=1}^{T}g_{t}(\mathbf{\hat
{\theta}})=o_{p}(1)$, which we show in Lemma \ref{lemmaGop1}, and that his
assumptions N3 and N4 hold, which we show in Lemmas \ref{lemmaN3i}%
-\ref{lemmaN4}.
\end{proof}

\bigskip

\begin{lemma}
\label{lemmaGop1}Under Assumptions 1-2, $T^{-1/2}\sum_{t=1}^{T}g_{t}%
(\mathbf{\hat{\theta}})=o_{p}(1).$
\end{lemma}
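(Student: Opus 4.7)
The plan is to deduce the approximate first-order condition from the optimality of $\mathbf{\hat{\theta}}$, carefully handling the non-smoothness introduced by the indicator in $L_{FZ0}$. A key structural fact is that $L_{FZ0,t}(\theta):=L_{FZ0}(Y_t,v_t(\theta),e_t(\theta);\alpha)$ is everywhere continuous in $\theta$: the only source of non-smoothness is the indicator $\mathbf{1}\{Y_t\leq v_t(\theta)\}$, whose jump contribution vanishes on the boundary $\{Y_t = v_t(\theta)\}$ because it is multiplied by $v_t-Y_t$. Hence $L_T$ is locally Lipschitz, $g_t(\theta)$ equals the ordinary gradient of $L_{FZ0,t}$ whenever $Y_t\neq v_t(\theta)$, and at boundary points equals one of the two one-sided derivatives (specifically, the one corresponding to $\mathbf{1}\{Y_t\leq v_t\}=1$).

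Since $\mathbf{\hat{\theta}}$ is an interior minimizer of $L_T$ (by Assumption 1(B)(i) and Theorem \ref{thmCONSISTENCY}), the Clarke subdifferential $\partial^{C}L_T(\mathbf{\hat{\theta}})$ contains $\mathbf{0}$: for every unit vector $d\in\mathbb{R}^{p}$ the one-sided directional derivatives in directions $d$ and $-d$ are both nonnegative. A direct computation yields the decomposition
\begin{equation*}
D_d L_T(\mathbf{\hat{\theta}}) = d^{\prime}\,T^{-1}\sum_{t=1}^{T} g_t(\mathbf{\hat{\theta}}) + T^{-1}\sum_{t\in\mathcal{J}} J_t(d),
\end{equation*}
where $\mathcal{J}:=\{t: Y_t=v_t(\mathbf{\hat{\theta}})\}$ is the active boundary set and each correction $J_t(d)$ is bounded in norm by $\|\nabla v_t(\mathbf{\hat{\theta}})\|/(\alpha|e_t(\mathbf{\hat{\theta}})|)\cdot\|d\|$. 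Applying $D_d L_T(\mathbf{\hat{\theta}})\geq 0$ for both $d$ and $-d$ and taking the supremum over $\|d\|=1$ gives
\begin{equation*}
\bigl\|T^{-1}\sum_{t=1}^{T} g_t(\mathbf{\hat{\theta}})\bigr\| \;\leq\; T^{-1}\sum_{t\in\mathcal{J}} \frac{\|\nabla v_t(\mathbf{\hat{\theta}})\|}{\alpha\,|e_t(\mathbf{\hat{\theta}})|}.
\end{equation*}

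The final step is to bound this quantity by $o_p(T^{-1/2})$. A generic-position argument gives $|\mathcal{J}|\leq p$ almost surely: each surface $\{\theta: v_t(\theta)=Y_t\}$ is codimension one (by Assumption 2(A)(i) and the continuous conditional density in Assumption 2(B)(i)), and at an isolated local minimizer of the locally Lipschitz function $L_T$ at most $p=\dim\Theta$ such surfaces can meet simultaneously. Under Assumptions 2(C)--(D), $\|\nabla v_t(\mathbf{\hat{\theta}})\|/|e_t(\mathbf{\hat{\theta}})|$ has finite moments, so the right-hand side of the display is $O_p(p/T)$. Multiplying by $T^{1/2}$ delivers $T^{-1/2}\sum_{t}g_t(\mathbf{\hat{\theta}}) = O_p(T^{-1/2}) = o_p(1)$, as claimed.

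The main obstacle I anticipate is making the cardinality bound $|\mathcal{J}|\leq p$ fully rigorous: although heuristically clear, $\mathbf{\hat{\theta}}$ depends on the sample in a complicated way and one must rule out pathological configurations where many boundary surfaces meet at the minimizer on events of non-negligible probability. An alternative route that sidesteps this counting step is to smooth the indicator (using e.g.\ the logistic surrogate of Appendix C), use the exact first-order condition of the smoothed problem, and then pass to the unsmoothed limit while controlling the approximation error in the score by $o_p(T^{-1/2})$; this is the strategy of Huber (1967) and Weiss (1991), which the paper cites in the outline of Theorem \ref{thmASYMPNORM}.
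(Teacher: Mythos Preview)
Your approach is essentially the paper's: both compute one-sided directional derivatives of $L_T$ at $\mathbf{\hat\theta}$, use optimality to sandwich the score, and observe that the discrepancy is supported on the active set $\mathcal{J}=\{t:Y_t=v_t(\mathbf{\hat\theta})\}$ because the $e$-derivative is continuous and only the $v$-derivative jumps. The paper works coordinate-by-coordinate (Ruppert--Carroll 1980) rather than via the Clarke subdifferential, but this is cosmetic.

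Two substantive differences are worth noting. First, the paper does not attempt the counting bound $|\mathcal{J}|\leq p$ you worry about; instead it asserts directly that $\sum_t\mathbf{1}\{Y_t=v_t(\mathbf{\hat\theta})\}=0$ a.s.\ from the continuous conditional density in Assumption 2(B). Whether that step is airtight for the data-dependent $\mathbf{\hat\theta}$ can be debated, but it sidesteps the genericity argument entirely. Second, your claim that the right-hand side is $O_p(p/T)$ is too quick: Assumptions 2(C)--(D) give only moment bounds on $\|\nabla v_t\|/|e_t|$, not pointwise bounds, and the indices in $\mathcal{J}$ are data-dependent. The paper closes this by bounding the active-set sum by $T^{-1/2}\max_{1\leq t\leq T}V_1(\mathcal{F}_{t-1})$ times $|\mathcal{J}|$ and then showing $T^{-1/2}\max_tV_1=o_p(1)$ via Markov's inequality and $\mathbb{E}[V_1^3]<\infty$. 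Even under your hypothesis $|\mathcal{J}|\leq p$ you would need exactly this step; without it the argument is incomplete. The smoothing alternative you mention at the end is not used in the paper.
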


\bigskip

\begin{proof}
[Proof of Lemma \ref{lemmaGop1}]Let $\{e_{j}\}_{j=1}^{p}$ be the standard
basis of $\mathbb{R}^{p}$ and define\newline%
\begin{equation}
L_{T}^{j}(a)=T^{-1/2}\sum_{t=1}^{T}L_{FZ0}\left(  Y_{t},v_{t}(\mathbf{\hat
{\theta}}+ae_{j}),e_{t}(\mathbf{\hat{\theta}}+ae_{j});\alpha\right)
\end{equation}
where a is a scalar. Following Ruppert and Carroll's (1980) approach, let
$G_{T}^{j}(a)$ (a scalar) be the right derivative of $L_{T}^{j}(a)$, that is
\begin{align}
G_{T}^{j}(a) &  =T^{-1/2}\sum_{t=1}^{T}\left(  \frac{\nabla_{j}v_{t}%
(\mathbf{\hat{\theta}}+ae_{j})}{-e_{t}(\mathbf{\hat{\theta}}+ae_{j})}(\frac
{1}{\alpha}\mathbf{1}\left\{  Y_{t}\leq v_{t}(\mathbf{\hat{\theta}}%
+ae_{j})\right\}  -1)+\right. \\
&  \left.  \frac{\nabla_{j}e_{t}(\mathbf{\hat{\theta}}+ae_{j})}{e_{t}%
(\mathbf{\hat{\theta}}+ae_{j})^{2}}\left(  \frac{1}{\alpha}\mathbf{1}\left\{
Y_{t}\leq v_{t}(\mathbf{\hat{\theta}}+ae_{j})\right\}  (v_{t}(\mathbf{\hat
{\theta}}+ae_{j})-Y_{t})-v_{t}(\mathbf{\hat{\theta}}+ae_{j})+e_{t}%
(\mathbf{\hat{\theta}}+ae_{j})\right)  \right) \nonumber
\end{align}
$G_{T}^{j}(0)=\lim\limits_{\xi_{1}\rightarrow0+}G_{T}^{j}(\xi_{1})$ is the
right partial derivative of $L_{T}(\mathbf{\theta})$ at $\mathbf{\hat{\theta}%
}$ in the direction $\theta_{j}$, while $\lim\limits_{\xi_{2}\rightarrow
0+}G_{T}^{j}(-\xi_{2})$ is the left partial derivative of $L_{T}%
(\mathbf{\theta})$ at $\mathbf{\hat{\theta}}$ in the direction $\theta_{j}$.
Although $L_{T}(\mathbf{\theta})$ is not differentiable, due to the presence
of the indicator function, its left and right derivatives do exist, and
because $L_{T}(\mathbf{\theta})$ achieves its minimum at $\mathbf{\hat{\theta
}}$, its left derivative must be non-positive and its right derivative must be
non-negative. Thus,
\begin{align}
|G_{T}^{j}(0)| &  \leq\lim\limits_{\xi_{1}\rightarrow0+}G_{T}^{j}(\xi
_{1})-\lim\limits_{\xi_{2}\rightarrow0+}G_{T}^{j}(-\xi_{2})\nonumber\\
&  =T^{-1/2}\sum_{t=1}^{T}\left(  \frac{|\nabla_{j}v_{t}(\mathbf{\hat{\theta}%
})|}{-e_{t}(\mathbf{\hat{\theta}})}\frac{1}{\alpha}\mathbf{1}\left\{
Y_{t}=v_{t}(\hat{\theta})\right\}  +\frac{|\nabla_{j}e_{t}(\mathbf{\hat
{\theta}})|}{e_{t}(\mathbf{\hat{\theta}})^{2}}\frac{1}{\alpha}\left(
v_{t}(\mathbf{\hat{\theta}})-Y_{t}\right)  \mathbf{1}\left\{  Y_{t}%
=v_{t}(\mathbf{\hat{\theta}})\right\}  \right) \\
&  =T^{-1/2}\sum_{t=1}^{T}\frac{|\nabla_{j}v_{t}(\mathbf{\hat{\theta}}%
)|}{-e_{t}(\hat{\theta})}\frac{1}{\alpha}\mathbf{1}\left\{  Y_{t}%
=v_{t}(\mathbf{\hat{\theta}})\right\} \nonumber
\end{align}
The second term in the penultimate line vanishes as $\mathbf{1}\{Y_{t}%
=v_{t}(\mathbf{\hat{\theta}})\}(v_{t}(\mathbf{\hat{\theta}})-y_{t})$ is always zero.

By Assumption 2(C), for all $t$, $|\nabla_{j}v_{t}(\mathbf{\hat{\theta}}%
)|\leq\lVert\nabla v_{t}(\mathbf{\hat{\theta}})\rVert\leq V_{1}(\mathcal{F}%
_{t-1})$, $\left\vert 1/e_{t}(\mathbf{\hat{\theta}})\right\vert \leq H$, thus:%
\begin{equation}
|G_{T}^{j}(0)|\leq\frac{H}{\alpha}\left[  T^{-1/2}\max\limits_{1\leq t\leq
T}V_{1}(\mathcal{F}_{t-1})\right]  \left[  \sum_{t=1}^{T}\mathbf{1}\left\{
Y_{t}=v_{t}(\mathbf{\hat{\theta}})\right\}  \right]
\end{equation}
$H$ is finite by Assumption 2(C), and for all $\epsilon>0,$%
\begin{equation}
\Pr\left[  T^{-1/2}\max\limits_{1\leq t\leq T}V_{1}(\mathcal{F}_{t-1}%
)>\epsilon\right]  \leq\sum\limits_{t=1}^{T}\Pr\left[  V_{1}(\mathcal{F}%
_{t-1})>\epsilon T^{1/2}\right]  \leq\sum\limits_{t=1}^{T}\frac{\mathbb{E}%
[V_{1}(\mathcal{F}_{t-1})^{3}]}{\epsilon^{3}T^{3/2}}\rightarrow0
\end{equation}
with the latter inequality following from Markov's inequality. Since
$\mathbb{E}[V_{1}(\mathcal{F}_{t-1})^{3}]$ is finite by assumption 2(D), we
then have that $T^{-1/2}\max\limits_{1\leq t\leq T}V_{1}(\mathcal{F}%
_{t-1})=o_{p}\left(  1\right)  .$ By Assumption 2(B) we have $\sum_{t=1}%
^{T}\mathbf{1}\left\{  y_{t}=v_{t}(\mathbf{\hat{\theta}})\right\}  =0$ a.s.
\ We therefore have $G_{T}^{j}(0)\overset{p}{\rightarrow}0$. Since this holds
for every $j$, we have $T^{-1/2}\sum_{t=1}^{T}g_{t}(\mathbf{\hat{\theta}%
})=o_{p}(1)$.\newline
\end{proof}

\bigskip

The following three lemmas show each of the three parts of Assumption N3 of
Weiss (1991) holds. In the proofs below we make repeated use of mean-value
expansions, and we use $\mathbf{\theta}^{\ast}$ to denote a point on the line
connecting $\mathbf{\hat{\theta}}$ and $\mathbf{\theta}^{0},$ and
$\mathbf{\theta}^{\ast\ast}$ to denote a point on the line connecting
$\mathbf{\theta}^{\ast}$ and $\mathbf{\theta}^{0}.$ The particular point on
the line can vary from expansion to expansion.

\bigskip

\begin{lemma}
\label{lemmaN3i}Under assumptions 1-2, Assumption N3(i) of Weiss (1991)
holds:
\[
\Vert\lambda_{T}(\mathbf{\theta})\Vert\geq a\Vert\mathbf{\theta}%
-\mathbf{\theta}^{0}\Vert,\;\text{for}\;\Vert\mathbf{\theta}-\mathbf{\theta
}^{0}\Vert\leq d_{0}\text{.}%
\]
for $T$ sufficiently large, where $a\ $and $d_{0}$ are strictly positive numbers.
\end{lemma}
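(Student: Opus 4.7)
The plan is to reduce the inequality to a uniform lower bound on the smallest eigenvalue of $\Lambda_T(\mathbf{\theta})$ in a neighborhood of $\mathbf{\theta}^0$. Since the expected FZ0 loss is uniquely minimized at $\mathbf{\theta}^0$ (Corollary 5.5 of Fissler and Ziegel 2016, as used in Theorem \ref{thmCONSISTENCY}), the first-order condition gives $\lambda_T(\mathbf{\theta}^0)=0$. First I would verify, using Assumptions 2(A)--(C), that $\lambda_T(\mathbf{\theta})$ is continuously differentiable in $\mathbf{\theta}$ on a neighborhood $\mathcal{N}(\mathbf{\theta}^0)$: the integrability needed to pass derivatives through expectations follows from the dominating functions $V,V_1,H_1,V_2,H_2$ of Assumption 2(C) combined with the moment bounds of Assumption 2(D) and the uniform bound $f_t\leq K$ of Assumption 2(B)(i). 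Then a mean-value expansion around $\mathbf{\theta}^0$ gives $\lambda_T(\mathbf{\theta})=\Lambda_T(\mathbf{\theta}^\ast)(\mathbf{\theta}-\mathbf{\theta}^0)$ for some $\mathbf{\theta}^\ast$ on the segment joining $\mathbf{\theta}$ to $\mathbf{\theta}^0$, so that
\[
\|\lambda_T(\mathbf{\theta})\|\geq\sigma_{\min}\!\left(\Lambda_T(\mathbf{\theta}^\ast)\right)\|\mathbf{\theta}-\mathbf{\theta}^0\|.
\]

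Next I would decompose $\Lambda_T(\mathbf{\theta}^\ast)=\mathbf{D}_T+(\Lambda_T(\mathbf{\theta}^\ast)-\mathbf{D}_T)$, where by Assumption 2(E) the smallest eigenvalue of $\mathbf{D}_T=\Lambda_T(\mathbf{\theta}^0)$ is bounded below by some $\mu>0$ for $T$ sufficiently large. The goal at this point is to show that the perturbation term is small whenever $\mathbf{\theta}^\ast$ is close enough to $\mathbf{\theta}^0$, uniformly in $T$: concretely, that there exists $d_0>0$ such that $\|\Lambda_T(\mathbf{\theta})-\mathbf{D}_T\|\leq\mu/2$ for all $\|\mathbf{\theta}-\mathbf{\theta}^0\|\leq d_0$ and all $T$ large. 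Once this is in hand, Weyl's inequality gives $\sigma_{\min}(\Lambda_T(\mathbf{\theta}^\ast))\geq\mu/2$, and the lemma follows with $a=\mu/2$.

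The hard part is the uniform continuity of $\Lambda_T$ at $\mathbf{\theta}^0$. I would compute $\Lambda_T$ explicitly by differentiating the expected score term by term: each entry involves products of $1/e_t$, $1/e_t^2$, first and second derivatives of $v_t$ and $e_t$, the conditional CDF $F_t$ and density $f_t$ evaluated at $v_t(\mathbf{\theta})$, and conditional truncated expectations $\mathbb{E}_{t-1}[Y_t\mathbf{1}\{Y_t\leq v_t(\mathbf{\theta})\}]$. I would then bound the difference $\Lambda_T(\mathbf{\theta})-\Lambda_T(\mathbf{\theta}^0)$ on a neighborhood $\mathcal{N}(\mathbf{\theta}^0)$ by (a) the Lipschitz-in-$y$ bound on $f_t$ from Assumption 2(B)(i) to control changes in $F_t(v_t(\mathbf{\theta}))$ and $f_t(v_t(\mathbf{\theta}))$, (b) the continuity in $\mathbf{\theta}$ of $v_t,e_t,\nabla v_t,\nabla e_t$ together with their envelopes $V,V_1,H_1,V_2,H_2$, (c) the uniform bound $|1/e_t(\mathbf{\theta})|\leq K$ from Assumption 2(C)(i), and (d) the moment conditions in Assumption 2(D), which supply integrable majorants for the dominated-convergence arguments. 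Each summand in $\Lambda_T(\mathbf{\theta})-\Lambda_T(\mathbf{\theta}^0)$ is dominated by a term proportional to $\|\mathbf{\theta}-\mathbf{\theta}^0\|$ multiplied by an $\mathcal{F}_{t-1}$-measurable envelope with finite uniform expectation, so the Cesàro average over $t$ admits a uniform modulus of continuity that is independent of $T$. Choosing $d_0$ so that this modulus is at most $\mu/2$ completes the argument.
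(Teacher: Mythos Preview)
Your proposal is correct and follows essentially the same route as the paper: a mean-value expansion $\lambda_T(\mathbf{\theta})=\Lambda_T(\mathbf{\theta}^\ast)(\mathbf{\theta}-\mathbf{\theta}^0)$, then a bound $\|\Lambda_T(\mathbf{\theta}^\ast)-\mathbf{D}_T\|\leq K\|\mathbf{\theta}^\ast-\mathbf{\theta}^0\|$ obtained by explicitly differentiating the expected score and controlling each resulting term with the envelopes in Assumptions 2(B)--(D), combined with Assumption 2(E) on the eigenvalues of $\mathbf{D}_T$. The only cosmetic difference is that the paper finishes via the triangle inequality $\|\Lambda_T(\mathbf{\theta}^\ast)(\mathbf{\theta}-\mathbf{\theta}^0)\|\geq\|\mathbf{D}_T(\mathbf{\theta}-\mathbf{\theta}^0)\|-\|(\mathbf{D}_T-\Lambda_T(\mathbf{\theta}^\ast))(\mathbf{\theta}-\mathbf{\theta}^0)\|\geq(a-K\|\mathbf{\theta}-\mathbf{\theta}^0\|)\|\mathbf{\theta}-\mathbf{\theta}^0\|$ rather than your equivalent Weyl/singular-value phrasing.
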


\bigskip

\begin{proof}
[Proof of Lemma \ref{lemmaN3i}]A mean-value expansion yields:%
\begin{equation}
\lambda_{T}(\mathbf{\hat{\theta}})=\lambda_{T}(\mathbf{\theta}^{0}%
)+\Lambda_{T}(\mathbf{\theta}^{\ast})(\mathbf{\hat{\theta}}-\mathbf{\theta
}^{0})=\Lambda_{T}(\mathbf{\theta}^{\ast})(\mathbf{\hat{\theta}}%
-\mathbf{\theta}^{0})
\end{equation}
since $\lambda_{T}(\mathbf{\theta}^{0})=0,$ where $\Lambda_{T}(\mathbf{\theta
})=T^{-1}\sum_{t=1}^{T}\partial\mathbb{E}[g_{t}(\mathbf{\theta})]/\partial
\mathbf{\theta}.$ Using the fact that
\begin{equation}
\frac{\partial\mathbb{E}[Y_{t}\mathbf{1}\{Y_{t}\leq v_{t}(\mathbf{\theta
})\}|\mathcal{F}_{t-1}]}{\partial\mathbf{\theta}}=\frac{\partial}%
{\partial\mathbf{\theta}}\left\{  \int_{-\infty}^{v_{t}(\mathbf{\theta}%
)}yf_{t}(y)dy\right\}  =v_{t}(\mathbf{\theta})f_{t}(v_{t}(\mathbf{\theta
}))\nabla v_{t}(\mathbf{\theta})
\end{equation}
we can write:
\begin{align}
\Lambda_{T}(\mathbf{\theta})=T^{-1}\sum_{t=1}^{T} &  \mathbb{E}\left[  \left(
\frac{\nabla^{2}v_{t}(\mathbf{\theta})}{-e_{t}(\mathbf{\theta})}+\frac{\nabla
v_{t}(\mathbf{\theta})^{\prime}\nabla e_{t}(\mathbf{\theta})}{e_{t}%
(\mathbf{\theta})^{2}}+\frac{\nabla e_{t}(\mathbf{\theta})^{\prime}\nabla
v_{t}(\mathbf{\theta})}{e_{t}(\mathbf{\theta})^{2}}\right)  \left(
\frac{F_{t}(v_{t}(\mathbf{\theta}))}{\alpha}-1\right)  \right. \\
&  ~+\left(  \nabla^{2}e_{t}(\mathbf{\theta})\frac{1}{e_{t}(\mathbf{\theta
})^{2}}+\frac{-2}{e_{t}(\mathbf{\theta})^{3}}\nabla e_{t}(\mathbf{\theta
})^{\prime}\nabla e_{t}(\mathbf{\theta})\right) \nonumber\\
&  \text{\ \ \ }\cdot\left(  \left(  \frac{F_{t}(v_{t}(\mathbf{\theta}%
))}{\alpha}-1\right)  v_{t}(\mathbf{\theta})-\frac{1}{\alpha}\mathbb{E}\left[
Y_{t}\mathbf{1}\{Y_{t}\leq v_{t}(\mathbf{\theta})\}|\mathcal{F}_{t-1}\right]
+e_{t}(\mathbf{\theta})\right) \nonumber\\
&  ~+\frac{f_{t}(v_{t}(\mathbf{\theta}))}{-\alpha e_{t}(\mathbf{\theta}%
)}\nabla^{\prime}v_{t}(\mathbf{\theta})\nabla v_{t}(\mathbf{\theta
})\nonumber\\
&  ~\left.  \left.  +\frac{1}{e_{t}(\mathbf{\theta})^{2}}\nabla^{\prime}%
e_{t}(\mathbf{\theta})\nabla e_{t}(\mathbf{\theta})]\}\right\vert
\mathcal{F}_{t-1}\right] \nonumber
\end{align}
Evaluated at $\mathbf{\theta}^{0}\text{,}$ the first two terms of $\Lambda
_{T}$ drop out because $F_{t}\left(  v_{t}(\mathbf{\theta}^{0})\right)
=\alpha$ and $\frac{1}{\alpha}\mathbb{E}[Y_{t}\mathbf{1}\{Y_{t}\leq
v_{t}(\mathbf{\theta}^{0})\}|\mathcal{F}_{t-1}]=e_{t}\left(  \mathbf{\theta
}^{0}\right)  .$ Define $D_{T}$ as
\begin{equation}
D_{T}\equiv\Lambda_{T}(\mathbf{\theta}^{0})=T^{-1}\sum_{t=1}^{T}%
\mathbb{E}\left[  \frac{f_{t}(v_{t}(\mathbf{\theta}^{0}))}{-\alpha
e_{t}(\mathbf{\theta}^{0})}\nabla v_{t}(\mathbf{\theta}^{0})^{\prime}\nabla
v_{t}(\mathbf{\theta}^{0})+\frac{1}{e_{t}(\mathbf{\theta}^{0})^{2}}\nabla
e_{t}(\mathbf{\theta}^{0})^{\prime}\nabla e_{t}(\mathbf{\theta}^{0})\right]
\end{equation}
Below we show that $\Lambda_{T}(\mathbf{\theta}^{\ast})=D_{T}+O(\Vert
\mathbf{\hat{\theta}}-\mathbf{\theta}^{0}\Vert)$ by decomposing $\Vert
\Lambda_{T}(\mathbf{\theta}^{\ast})-D_{T}\Vert$ into four terms and showing
that each is bounded by a $O(\Vert\mathbf{\hat{\theta}}-\mathbf{\theta}%
^{0}\Vert)$ term.

\textbf{First term: }Using a mean-value expansion around $\mathbf{\theta}^{0}$
and Assumptions 2(C)-(D) we obtain:%
\begin{align}
& \left\Vert T^{-1}\sum_{t=1}^{T}\mathbb{E}\left[  \left(  \frac{\nabla
^{2}v_{t}(\mathbf{\theta}^{\ast})}{-e_{t}(\mathbf{\theta}^{\ast})}%
+\frac{\nabla v_{t}(\mathbf{\theta}^{\ast})^{\prime}\nabla e_{t}%
(\mathbf{\theta}^{\ast})}{e_{t}(\mathbf{\theta}^{\ast})^{2}}+\frac{\nabla
e_{t}(\mathbf{\theta}^{\ast})^{\prime}\nabla v_{t}(\mathbf{\theta}^{\ast}%
)}{e_{t}(\mathbf{\theta}^{\ast})^{2}}\right)  \left(  \frac{F_{t}%
(v_{t}(\mathbf{\theta}^{\ast}))}{{\alpha}}-1\right)  \right]  \right\Vert
\nonumber\\
& \leq T^{-1}\sum_{t=1}^{T}\mathbb{E}\left[  \left\Vert \left(  HV_{2}%
(\mathcal{F}_{t-1})+2H^{2}V_{1}(\mathcal{F}_{t-1})H_{1}(\mathcal{F}%
_{t-1})\right)  \left(  \frac{f_{t}(v_{t}(\mathbf{\theta}^{\ast\ast}%
))}{{\alpha}}\nabla v_{t}(\mathbf{\theta}^{\ast\ast})(\mathbf{\theta}^{\ast
}-\mathbf{\theta}^{0})\right)  \right\Vert \right] \\
& \leq T^{-1}\sum_{t=1}^{T}\frac{K}{\alpha}\left\{  H\mathbb{E}[V_{1}%
(\mathcal{F}_{t-1})^{3}]^{1/3}\mathbb{E}[V_{2}(\mathcal{F}_{t-1})^{3/2}%
]^{2/3}+2H^{2}\mathbb{E}[V_{1}(\mathcal{F}_{t-1})^{3}]^{2/3}\mathbb{E}%
[H_{1}(\mathcal{F}_{t-1})^{3}]^{1/3}\right\}  \Vert\mathbf{\theta}^{\ast
}-\mathbf{\theta}^{0}\Vert\nonumber
\end{align}

\textbf{Second term: }Again using a mean-value expansion around
$\mathbf{\theta}^{0}$ and Assumptions 2(C)-(D):%
\begin{align}
&  \left\Vert T^{-1}\sum_{t=1}^{T}\mathbb{E}\left[  \left(  \frac{1}%
{e_{t}(\mathbf{\theta}^{\ast})^{2}}\nabla^{2}e_{t}(\mathbf{\theta}^{\ast
})-\frac{2}{e_{t}(\mathbf{\theta}^{\ast})^{3}}\nabla e_{t}(\mathbf{\theta
}^{\ast})^{\prime}\nabla e_{t}(\mathbf{\theta}^{\ast})\right)  \right.
\right. \nonumber\\
&  ~~~~~~~~\left.  ~\left.  \cdot\left(  \left(  \frac{F_{t}(v_{t}%
(\mathbf{\theta}^{\ast}))}{{\alpha}}-1\right)  v_{t}(\mathbf{\theta}^{\ast
})-\frac{1}{\alpha}\mathbb{E}[Y_{t}\mathbf{1}\{Y_{t}\leq v_{t}(\mathbf{\theta
}^{\ast})\}|\mathcal{F}_{t-1}]+e_{t}(\mathbf{\theta}^{\ast})\right)  \right]
\right\Vert \nonumber\\
\leq &  T^{-1}\sum_{t=1}^{T}\mathbb{E[}\Vert\left(  H_{2}(\mathcal{F}%
_{t-1})H^{2}+H_{1}(\mathcal{F}_{t-1})\cdot2H^{3}\cdot H_{1}(\mathcal{F}%
_{t-1})\right) \\
&  ~~~~~~~~~~~~~~\cdot\left(  (F_{t}\left(  v_{t}(\mathbf{\theta}^{\ast\ast
})\right)  /{\alpha}-1)\nabla v_{t}(\mathbf{\theta}^{\ast\ast})+\nabla
e_{t}(\mathbf{\theta}^{\ast\ast})\right)  (\mathbf{\theta}^{\ast
}-\mathbf{\theta}^{0})\Vert]\nonumber\\
\leq &  T^{-1}\sum_{t=1}^{T}\{(1/{\alpha}+1)(H^{2}\mathbb{E}[V_{1}%
(\mathcal{F}_{t-1})H_{2}(\mathcal{F}_{t-1})]+2H^{3}\mathbb{E}[V_{1}%
(\mathcal{F}_{t-1})H_{1}(\mathcal{F}_{t-1})^{2}])\nonumber\\
&  ~~~~~~~~~~~+(H^{2}\cdot\mathbb{E}[H_{1}(\mathcal{F}_{t-1})H_{2}%
(\mathcal{F}_{t-1})]+2H^{3}\mathbb{E}[H_{1}(\mathcal{F}_{t-1})^{3}%
])\}\Vert\mathbf{\theta}^{\ast}-\mathbf{\theta}^{0}\Vert\nonumber
\end{align}

\textbf{Third term:}%
\begin{align}
&  \left\Vert T^{-1}\sum_{t=1}^{T}\mathbb{E}\left[  \frac{f_{t}(v_{t}%
(\mathbf{\theta}^{\ast}))}{-e_{t}(\theta^{\ast})\alpha}\nabla v_{t}%
(\mathbf{\theta}^{\ast})^{\prime}\nabla v_{t}(\mathbf{\theta}^{\ast}%
)-\frac{f_{t}(v_{t}(\mathbf{\theta}^{0}))}{-e_{t}(\mathbf{\theta}^{0})\alpha
}\nabla v_{t}(\theta^{0})^{\prime}\nabla v_{t}(\mathbf{\theta}^{0})\right]
\right\Vert \nonumber\\
&  =\frac{1}{\alpha}\left\Vert T^{-1}\sum_{t=1}^{T}\mathbb{E}\{\frac
{f_{t}(v_{t}(\mathbf{\theta}^{\ast}))}{-e_{t}(\mathbf{\theta}^{\ast})}\nabla
v_{t}(\mathbf{\theta}^{\ast})^{\prime}\nabla v_{t}(\mathbf{\theta}^{\ast
})-\frac{f_{t}(v_{t}(\mathbf{\theta}^{\ast}))}{-e_{t}(\mathbf{\theta}^{\ast}%
)}\nabla v_{t}(\mathbf{\theta}^{0})^{\prime}\nabla v_{t}(\mathbf{\theta}%
^{\ast})\right. \\
&  +\frac{f_{t}(v_{t}(\mathbf{\theta}^{\ast}))}{-e_{t}(\mathbf{\theta}^{\ast
})}\nabla v_{t}(\mathbf{\theta}^{0})^{\prime}\nabla v_{t}(\mathbf{\theta
}^{\ast})-\frac{f_{t}(v_{t}(\mathbf{\theta}^{0}))}{-e_{t}(\mathbf{\theta
}^{\ast})}\nabla v_{t}(\mathbf{\theta}^{0})^{\prime}\nabla v_{t}%
(\mathbf{\theta}^{\ast})\nonumber\\
&  +\frac{f_{t}(v_{t}(\mathbf{\theta}^{0}))}{-e_{t}(\mathbf{\theta}^{\ast}%
)}\nabla v_{t}(\mathbf{\theta}^{0})^{\prime}\nabla v_{t}(\mathbf{\theta}%
^{\ast})-\frac{f_{t}(v_{t}(\mathbf{\theta}^{0}))}{-e_{t}(\mathbf{\theta}^{0}%
)}\nabla v_{t}(\mathbf{\theta}^{0})^{\prime}\nabla v_{t}(\mathbf{\theta}%
^{\ast})\nonumber\\
&  \left.  +\frac{f_{t}(v_{t}(\mathbf{\theta}^{0}))}{-e_{t}(\mathbf{\theta
}^{0})}\nabla v_{t}(\mathbf{\theta}^{0})^{\prime}\nabla v_{t}(\mathbf{\theta
}^{\ast})-\frac{f_{t}(v_{t}(\mathbf{\theta}^{0}))}{-e_{t}(\mathbf{\theta}%
^{0})}\nabla v_{t}(\mathbf{\theta}^{0})^{\prime}\nabla v_{t}(\mathbf{\theta
}^{0})\}\right\Vert \nonumber\\
&  =\frac{1}{\alpha}\left\Vert T^{-1}\sum_{t=1}^{T}\mathbb{E}\{\frac
{f_{t}(v_{t}(\mathbf{\theta}^{\ast}))}{-e_{t}(\mathbf{\theta}^{\ast})}%
[\nabla^{2}v_{t}(\mathbf{\theta}^{\ast\ast})(\mathbf{\theta}^{\ast
}-\mathbf{\theta}^{0})]\nabla v_{t}(\theta^{\ast})\right. \nonumber\\
&  +\frac{f_{t}(v_{t}(\mathbf{\theta}^{\ast}))-f_{t}(v_{t}(\mathbf{\theta}%
^{0}))}{-e_{t}(\mathbf{\theta}^{\ast})}\nabla v_{t}(\mathbf{\theta}%
^{0})^{\prime}\nabla v_{t}(\mathbf{\theta}^{\ast})\nonumber\\
&  +\frac{f_{t}(v_{t}(\mathbf{\theta}^{0}))}{e_{t}(\mathbf{\theta}^{\ast\ast
})^{2}}(\mathbf{\theta}^{\ast}-\mathbf{\theta}^{0})\nabla v_{t}(\mathbf{\theta
}^{0})^{\prime}\nabla v_{t}(\mathbf{\theta}^{\ast})\nonumber\\
&  \left.  +\frac{f_{t}(v_{t}(\mathbf{\theta}^{0}))}{-e_{t}(\mathbf{\theta
}^{0})}\nabla v_{t}(\mathbf{\theta}^{0})^{\prime}(\mathbf{\theta}^{\ast
}-\mathbf{\theta}^{0})^{2}v_{t}(\mathbf{\theta}^{\ast\ast})\}\right\Vert
\nonumber\\
&  \leq\frac{1}{\alpha}T^{-1}\sum_{t=1}^{T}\mathbb{E}\{V_{2}(\mathcal{F}%
_{t-1})\left(  KH\cdot V_{1}(\mathcal{F}_{t-1})\right)  +KH\cdot
V_{1}(\mathcal{F}_{t-1})^{3}\nonumber\\
&  +KH^{2}H_{1}(\mathcal{F}_{t-1})V_{1}(\mathcal{F}_{t-1})^{2}+KHV_{1}%
(\mathcal{F}_{t-1})V_{2}(\mathcal{F}_{t-1})\}\cdot\Vert\mathbf{\theta}^{\ast
}-\mathbf{\theta}^{0}\Vert\nonumber
\end{align}

\textbf{Fourth term: }The bound on this term follows similar steps to that of
the third term:%
\begin{align}
& \left\Vert T^{-1}\sum_{t=1}^{T}\mathbb{E}\{\frac{{1}}{e_{t}(\mathbf{\theta
}^{\ast})^{2}}\nabla e_{t}(\mathbf{\theta}^{\ast})^{\prime}\nabla
e_{t}(\mathbf{\theta}^{\ast})-\frac{{1}}{e_{t}(\mathbf{\theta}^{0})^{2}}\nabla
e_{t}(\mathbf{\theta}^{0})^{\prime}\nabla e_{t}(\mathbf{\theta}^{0}%
)\}\right\Vert \nonumber\\
& =\left\Vert T^{-1}\sum_{t=1}^{T}\mathbb{E}\{\frac{{1}}{e_{t}(\mathbf{\theta
}^{\ast})^{2}}\nabla e_{t}(\mathbf{\theta}^{\ast})^{\prime}\nabla
e_{t}(\mathbf{\theta}^{\ast})-\frac{{1}}{e_{t}(\mathbf{\theta}^{\ast})^{2}%
}\nabla e_{t}(\mathbf{\theta}^{0})^{\prime}\nabla e_{t}(\mathbf{\theta}^{\ast
})\right. \\
& +\frac{{1}}{e_{t}(\mathbf{\theta}^{\ast})^{2}}\nabla e_{t}(\mathbf{\theta
}^{0})^{\prime}\nabla e_{t}(\mathbf{\theta}^{\ast})-\frac{{1}}{e_{t}%
(\mathbf{\theta}^{0})^{2}}\nabla e_{t}(\mathbf{\theta}^{0})^{\prime}\nabla
e_{t}(\mathbf{\theta}^{\ast})\nonumber\\
& \left.  +\frac{{1}}{e_{t}(\mathbf{\theta}^{0})^{2}}\nabla e_{t}%
(\mathbf{\theta}^{0})^{\prime}\nabla e_{t}(\mathbf{\theta}^{\ast})-\frac{{1}%
}{e_{t}(\mathbf{\theta}^{0})^{2}}\nabla e_{t}(\mathbf{\theta}^{0})^{\prime
}\nabla e_{t}(\mathbf{\theta}^{0})\}\right\Vert |\nonumber\\
& \leq T^{-1}\sum_{t=1}^{T}\{H^{2}\cdot\mathbb{E}[H_{1}(\mathcal{F}%
_{t-1})H_{2}(\mathcal{F}_{t-1})]+2H^{3}\mathbb{E}\left[  H_{1}(\mathcal{F}%
_{t-1})^{3}\right]  +H^{2}\mathbb{E}[H_{1}(\mathcal{F}_{t-1})H_{2}%
(\mathcal{F}_{t-1})]\}\Vert\mathbf{\theta}^{\ast}-\mathbf{\theta}^{0}%
\Vert\nonumber
\end{align}
Therefore, $\Lambda_{T}(\mathbf{\theta}^{\ast})=D_{T}+O(\Vert\mathbf{\hat
{\theta}}-\mathbf{\theta}^{0}\Vert)\Rightarrow\Vert\Lambda_{T}(\mathbf{\theta
}^{\ast})-D_{T}\Vert\leq K\Vert\mathbf{\hat{\theta}}-\mathbf{\theta}^{0}%
\Vert,$where $K$ is some constant $<\infty$ , for T sufficiently large. By
Assumption 2(E), $D_{T}$ has eigenvalues bounded below by a positive constant,
denoted as $a,$ for $T$ sufficiently large. Thus,
\begin{align}
\Vert\lambda_{T}(\mathbf{\hat{\theta}})\Vert &  =\Vert\Lambda_{T}%
(\mathbf{\theta}^{\ast})\left(  \mathbf{\hat{\theta}}-\mathbf{\theta}%
^{0}\right)  \Vert\nonumber\\
&  =\Vert D_{T}(\mathbf{\hat{\theta}}-\mathbf{\theta}^{0})-(D_{T}-\Lambda
_{T}(\mathbf{\theta}^{\ast}))(\mathbf{\hat{\theta}}-\mathbf{\theta}^{0}%
)\Vert\\
&  \geq\Vert D_{T}(\mathbf{\hat{\theta}}-\mathbf{\theta}^{0})\Vert-\Vert
(D_{T}-\Lambda_{T}(\mathbf{\theta}^{\ast}))(\mathbf{\hat{\theta}%
}-\mathbf{\theta}^{0})\Vert\nonumber\\
&  \geq(a-K\Vert\mathbf{\hat{\theta}}-\mathbf{\theta}^{0}\Vert)\cdot
\Vert\mathbf{\hat{\theta}}-\mathbf{\theta}^{0}\Vert\nonumber
\end{align}
The penultimate inequality holds by the triangle inequality, and the final
inequality follows from Assumption 2(E) on the minimum eigenvalue of $D_{T}.$
Thus, for $T$ sufficiently large so that \linebreak$a-K\Vert\mathbf{\hat
{\theta}}-\mathbf{\theta}^{0}\Vert>0,$ the result follows.
\end{proof}

\bigskip

\begin{lemma}
\label{lemmaN3ii}Define
\begin{equation}
\mu_{t}(\mathbf{\theta},d)=\underset{\Vert\mathbf{\tau}-\mathbf{\theta}%
\Vert\leq d}{\sup}\Vert g_{t}(\mathbf{\tau})-g_{t}(\mathbf{\theta})\Vert
\end{equation}
Then under assumptions 1-2, Assumption N3(ii) of Weiss (1991) holds
\begin{equation}
\mathbb{E}[\mu_{t}(\mathbf{\theta},d)]\leq bd,\text{ for }\Vert\mathbf{\theta
}-\mathbf{\theta}^{0}\Vert+d\leq d_{0},d\geq0
\end{equation}
for $T$ sufficiently large, where $b,~d,$and $d_{0}$ are strictly positive numbers.
\end{lemma}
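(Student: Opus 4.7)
\textbf{Proof proposal for Lemma \ref{lemmaN3ii}.} The plan is to decompose $g_t(\tau)-g_t(\theta)$ into a smooth part (holding the indicator fixed) plus a jump part coming from the indicator, and to control each piece separately. Write $g_t(\theta)=A_t(\theta)\mathbf{1}\{Y_t\leq v_t(\theta)\}+B_t(\theta)$, where
\begin{equation*}
A_t(\theta)=\tfrac{\nabla v_t(\theta)^\prime}{-\alpha e_t(\theta)}+\tfrac{\nabla e_t(\theta)^\prime}{\alpha e_t(\theta)^2}\bigl(v_t(\theta)-Y_t\bigr),\qquad B_t(\theta)=\tfrac{-\nabla v_t(\theta)^\prime}{-e_t(\theta)}+\tfrac{\nabla e_t(\theta)^\prime}{e_t(\theta)^2}\bigl(-v_t(\theta)+e_t(\theta)\bigr).
\end{equation*}
Then by the triangle inequality,
\begin{equation*}
\|g_t(\tau)-g_t(\theta)\|\le \|A_t(\tau)-A_t(\theta)\|+\|A_t(\theta)\|\cdot\bigl|\mathbf{1}\{Y_t\leq v_t(\tau)\}-\mathbf{1}\{Y_t\leq v_t(\theta)\}\bigr|+\|B_t(\tau)-B_t(\theta)\|.
\end{equation*}

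First I would handle the two Lipschitz-type terms. Since $v_t,e_t$ are twice continuously differentiable in $\theta$, the maps $\theta\mapsto A_t(\theta)$ and $\theta\mapsto B_t(\theta)$ are continuously differentiable on $\mathcal{N}(\theta^0)$, so a mean-value expansion on a ball of radius $d$ gives $\|A_t(\tau)-A_t(\theta)\|\le M^A_t d$ and $\|B_t(\tau)-B_t(\theta)\|\le M^B_t d$ where $M^A_t,M^B_t$ are $\mathcal{F}_{t-1}$-measurable dominating functions obtained by plugging the bounds of Assumption 2(C) into $\partial A_t/\partial\theta$ and $\partial B_t/\partial\theta$. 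Schematically, these dominators are products/sums of $V,V_1,V_2,H_1,H_2$ (together with the constant $H$ bounding $1/|e_t|$) and, through the term $(v_t-Y_t)$ in $A_t$, also $|Y_t|$. Holder applied termwise, using the moment conditions in Assumption 2(D) (in particular 2(D)(ii)-(iii) for the cross-moments involving $|Y_t|$), bounds $\mathbb{E}[M^A_t]$ and $\mathbb{E}[M^B_t]$ by a finite constant $K_1$, so the contribution of these two pieces to $\mathbb{E}[\mu_t(\theta,d)]$ is at most $K_1 d$.

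The main obstacle is the indicator term, which is discontinuous in $\theta$. The key observation is that taking the supremum over $\|\tau-\theta\|\le d$ is controlled by the oscillation of $v_t$: by the mean-value bound $|v_t(\tau)-v_t(\theta)|\le V_1(\mathcal{F}_{t-1})d$, so
\begin{equation*}
\sup_{\|\tau-\theta\|\le d}\bigl|\mathbf{1}\{Y_t\leq v_t(\tau)\}-\mathbf{1}\{Y_t\leq v_t(\theta)\}\bigr|\le \mathbf{1}\bigl\{|Y_t-v_t(\theta)|\le V_1(\mathcal{F}_{t-1})d\bigr\}.
\end{equation*}
Conditioning on $\mathcal{F}_{t-1}$ and invoking the uniform density bound $f_t\le K$ from Assumption 2(B)(i) gives a conditional expectation of at most $2K V_1(\mathcal{F}_{t-1})d$. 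Multiplying by $\|A_t(\theta)\|$ (which is bounded by an $\mathcal{F}_{t-1}$-measurable function of $V_1,H_1,H,|Y_t|$) and taking an unconditional expectation via Holder, with the cross-moment bound in Assumption 2(D)(ii), yields a bound $K_2 d$ for this piece. Combining, $\mathbb{E}[\mu_t(\theta,d)]\le b d$ with $b=K_1+K_2$, uniformly in $\theta\in\mathcal{N}(\theta^0)$, which gives the desired conclusion once $d_0$ is taken small enough that the ball $\{\tau:\|\tau-\theta^0\|\le d_0\}$ lies in $\mathcal{N}(\theta^0)$.
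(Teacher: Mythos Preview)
Your strategy is the same as the paper's: split $g_t$ into a smooth part and an indicator part, control the smooth pieces by a mean-value expansion using the dominators in Assumption 2(C)--(D), and control the indicator jump via the bounded-density assumption 2(B)(i). The paper carries this out with a six-term decomposition of $g_t(\theta)$ (each handled separately), whereas you package the same argument into the three-term split $[A_t(\tau)-A_t(\theta)]\mathbf{1}_\tau + A_t(\theta)[\mathbf{1}_\tau-\mathbf{1}_\theta] + [B_t(\tau)-B_t(\theta)]$; either organization works and the moment conditions invoked are the same.

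One point needs tightening. In your indicator step you write that the conditional expectation of the indicator is at most $2K V_1 d$ and then ``multiply by $\|A_t(\theta)\|$,'' but $A_t(\theta)$ contains the factor $(v_t(\theta)-Y_t)$ and is therefore \emph{not} $\mathcal{F}_{t-1}$-measurable; you cannot pull it outside the conditional expectation, and applying H\"older to the product $\|A_t(\theta)\|\cdot\mathbf{1}\{\cdot\}$ would only give a bound of order $d^{1/q}$, not $d$. The correct fix (which is what the paper does in its ``fifth term'') is to observe that on the event $\{|Y_t-v_t(\theta)|\le V_1 d\}$ one has $|Y_t|\le V(\mathcal{F}_{t-1})+V_1(\mathcal{F}_{t-1})d$, so $\|A_t(\theta)\|\,\mathbf{1}\{|Y_t-v_t(\theta)|\le V_1 d\}$ is dominated by an $\mathcal{F}_{t-1}$-measurable quantity times the indicator; equivalently, compute $\int |y|\, f_t(y)\,dy$ over an interval of length $O(V_1 d)$ and bound it by $K\,V(\mathcal{F}_{t-1})V_1(\mathcal{F}_{t-1})\,d$. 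With that adjustment (and 2(D)(ii) for the resulting cross-moment $\mathbb{E}[V V_1 H_1]$), your argument goes through.
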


\bigskip

\begin{proof}
[Proof of Lemma \ref{lemmaN3ii}]In this proof, the strictly positive constant
$c$ and the mean-value expansion term, $\mathbf{\tau}^{\ast}$, can change from
line to line. Pick $d_{0}$ such that for any $\mathbf{\theta}$ that satisfies
$\Vert\mathbf{\theta}-\mathbf{\theta}^{0}\Vert\leq d_{0}$, all the conditions
in Assumption 2(C) and 2(D) hold as well as $e_{t}(\mathbf{\theta})\leq
v_{t}(\mathbf{\theta})\leq0$. Let us expand $g_{t}(\mathbf{\theta})$ into six
terms:%
\begin{align}
g_{t}(\mathbf{\theta})=\frac{1}{\alpha} &  \frac{\nabla^{\prime}%
v_{t}(\mathbf{\theta})}{-e_{t}(\mathbf{\theta})}\mathbf{1}\{Y_{t}\leq
v_{t}(\mathbf{\theta})\}-\frac{\nabla^{\prime}v_{t}(\mathbf{\theta})}%
{-e_{t}(\mathbf{\theta})}+\frac{1}{\alpha}\frac{v_{t}(\mathbf{\theta}%
)\nabla^{\prime}e_{t}(\mathbf{\theta})}{e_{t}(\mathbf{\theta})^{2}}%
\mathbf{1}\{Y_{t}\leq v_{t}(\mathbf{\theta})\}\\
&  -\frac{v_{t}(\mathbf{\theta})\nabla^{\prime}e_{t}(\mathbf{\theta})}%
{e_{t}(\mathbf{\theta})^{2}}-\frac{1}{\alpha}\frac{\nabla^{\prime}%
e_{t}(\mathbf{\theta})}{e_{t}(\mathbf{\theta})^{2}}\mathbf{1}\{Y_{t}\leq
v_{t}(\mathbf{\theta})\}Y_{t}+\frac{\nabla^{\prime}e_{t}(\mathbf{\theta}%
)}{e_{t}(\mathbf{\theta})}\nonumber
\end{align}
We will bound $\mu_{t}(\mathbf{\theta},d)$ by considering six terms, $\mu
_{t}(\mathbf{\theta},d)^{(i)},i=1,2,\cdots,6$, defined below. Each term is
shown to be bounded by a constant times $d$.

\textbf{First term:}%
\begin{equation}
\mu_{t}(\mathbf{\theta},d)^{(1)}=\frac{1}{\alpha}\sup\limits_{\Vert
\mathbf{\tau}-\mathbf{\theta}\Vert\leq d}\left\Vert \frac{\nabla^{\prime}%
v_{t}(\mathbf{\tau})}{-e_{t}(\mathbf{\tau})}\mathbf{1}\{Y_{t}\leq
v_{t}(\mathbf{\tau})\}-\frac{\nabla^{\prime}v_{t}(\mathbf{\theta})}%
{-e_{t}(\mathbf{\theta})}\mathbf{1}\{Y_{t}\leq v_{t}(\mathbf{\theta
})\}\right\Vert
\end{equation}
Set $\mathbf{\tau}_{1}=\arg\min_{\Vert\mathbf{\tau}-\mathbf{\theta}\Vert\leq
d}v_{t}(\mathbf{\tau})$ and $\mathbf{\tau}_{2}=\arg\max_{\Vert\mathbf{\tau
}-\mathbf{\theta}\Vert\leq d}v_{t}(\mathbf{\tau})$. Since $v_{t}%
(\mathbf{\theta})$ and $e_{t}(\mathbf{\theta})$ are assumed to be twice
continously differentiable, $\mathbf{\tau}_{1}$ and $\mathbf{\tau}_{2}$ exist.
We want to take the indicator function out from the `sup' operator. To this
end, let us discuss what $\alpha\cdot\mu_{t}(\mathbf{\theta},d)^{(1)}$ equals
in two cases.

Case 1: $Y_{t}\leq v_{t}(\mathbf{\theta})$. (a)\ If $Y_{t}>v_{t}(\mathbf{\tau
}_{2})$, $\alpha\cdot\mu_{t}(\mathbf{\theta},d)^{(1)}=\left\Vert \frac
{\nabla^{\prime}v_{t}(\mathbf{\theta})}{-e_{t}(\mathbf{\theta})}\right\Vert $.
(b) If $Y_{t}<v_{t}(\mathbf{\tau}_{1})$, $\alpha\cdot\mu_{t}(\mathbf{\theta
},d)^{(1)}=\sup\limits_{\Vert\mathbf{\tau}-\mathbf{\theta}\Vert\leq
d}\left\Vert \frac{\nabla^{\prime}v_{t}(\mathbf{\tau})}{-e_{t}(\mathbf{\tau}%
)}-\frac{\nabla^{\prime}v_{t}(\mathbf{\theta})}{-e_{t}(\mathbf{\theta}%
)}\right\Vert $. \ (c) If $v_{t}(\mathbf{\tau}_{1})\leq Y_{t}\leq
v_{t}(\mathbf{\tau}_{2})$,
\begin{align}
\alpha\cdot\mu_{t}(\mathbf{\theta},d)^{(1)}= &  \max\left\{  \sup
\limits_{\Vert\mathbf{\tau}-\mathbf{\theta}\Vert\leq d,Y_{t}\leq
v(\mathbf{\tau})}\left\Vert \frac{\nabla^{\prime}v_{t}(\mathbf{\tau})}%
{-e_{t}(\mathbf{\tau})}-\frac{\nabla^{\prime}v_{t}(\mathbf{\theta})}%
{-e_{t}(\mathbf{\theta})}\right\Vert ,\left\Vert \frac{\nabla^{\prime}%
v_{t}(\mathbf{\theta})}{-e_{t}(\mathbf{\theta})}\right\Vert \right\} \\
\leq &  \sup\limits_{\Vert\mathbf{\tau}-\mathbf{\theta}\Vert\leq d}\left\Vert
\frac{\nabla^{\prime}v_{t}(\mathbf{\tau})}{-e_{t}(\mathbf{\tau})}-\frac
{\nabla^{\prime}v_{t}(\mathbf{\theta})}{-e_{t}(\mathbf{\theta})}\right\Vert
+\left\Vert \frac{\nabla^{\prime}v_{t}(\mathbf{\theta})}{-e_{t}(\mathbf{\theta
})}\right\Vert \nonumber
\end{align}

Case 2: $Y_{t}>v_{t}(\mathbf{\theta})$,
\begin{align}
\alpha\cdot\mu_{t}(\mathbf{\theta},d)^{(1)}  & =\mathbf{1}\{Y_{t}\leq
v(\mathbf{\tau}_{2})\}\cdot\sup\limits_{\Vert\mathbf{\tau}-\mathbf{\theta
}\Vert\leq d,Y_{t}\leq v(\mathbf{\tau})}\left\Vert \frac{\nabla^{\prime}%
v_{t}(\mathbf{\tau})}{-e_{t}(\mathbf{\tau})}\right\Vert \\
& \leq\mathbf{1}\{Y_{t}\leq v(\mathbf{\tau}_{2})\}\cdot\sup\limits_{\Vert
\mathbf{\tau}-\mathbf{\theta}\Vert\leq d}\left\Vert \frac{\nabla^{\prime}%
v_{t}(\mathbf{\tau})}{-e_{t}(\mathbf{\tau})}\right\Vert \nonumber
\end{align}
$\Vert\mathbf{\theta}-\mathbf{\theta}^{0}\Vert+d\leq d_{0}$ implies that both
$\mathbf{\theta}$ and $\mathbf{\tau}$ (which are in a $d$-neighborhood of
$\mathbf{\theta}$) are in a $d_{0}$-neighborhood of $\mathbf{\theta}_{0}$ ,
and so
\begin{equation}
\left\Vert \frac{\nabla^{\prime}v_{t}(\mathbf{\theta})}{-e_{t}(\mathbf{\theta
})}\right\Vert \leq\sup\limits_{\Vert\mathbf{\tau}-\mathbf{\theta}\Vert\leq
d}\left\Vert \frac{\nabla^{\prime}v_{t}(\mathbf{\tau})}{-e_{t}(\mathbf{\tau}%
)}\right\Vert \leq\sup\limits_{\Vert\mathbf{\theta}-\mathbf{\theta}^{0}%
\Vert\leq d_{0}}\left\Vert \frac{\nabla^{\prime}v_{t}(\mathbf{\theta})}%
{-e_{t}(\mathbf{\theta})}\right\Vert
\end{equation}
Thus,
\begin{align}
&  \alpha\cdot\mu_{t}(\mathbf{\theta},d)^{(1)}\nonumber\\
\leq &  \left(  \mathbf{1}\{v_{t}(\mathbf{\tau}_{2})<Y_{t}\leq v_{t}%
(\mathbf{\theta})\}+\mathbf{1}\{v_{t}(\mathbf{\tau}_{1})\leq Y_{t}\leq
v_{t}(\mathbf{\theta})\}+\mathbf{1}\{v_{t}(\mathbf{\theta})<Y_{t}\leq
v_{t}(\mathbf{\tau}_{2})\}\right) \\
&  \cdot\sup\limits_{\Vert\mathbf{\theta}-\mathbf{\theta}^{0}\Vert\leq d_{0}%
}\left\Vert \frac{\nabla^{\prime}v_{t}(\mathbf{\theta})}{-e_{t}(\mathbf{\theta
})}\right\Vert +\sup\limits_{\Vert\mathbf{\tau}-\mathbf{\theta}\Vert\leq
d}\left\Vert \frac{\nabla^{\prime}v_{t}(\mathbf{\tau})}{-e_{t}(\mathbf{\tau}%
)}-\frac{\nabla^{\prime}v_{t}(\mathbf{\theta})}{-e_{t}(\mathbf{\theta}%
)}\right\Vert ,\nonumber
\end{align}
where
\begin{align}
\mathbb{E}_{t-1}[\mathbf{1}\{v_{t}(\mathbf{\tau}_{2})<Y_{t}\leq v_{t}%
(\mathbf{\theta})\}]= &  \int_{v_{t}(\mathbf{\tau}_{2})}^{v_{t}(\mathbf{\theta
})}f_{t}(y)dy\\
\leq &  K|v_{t}(\mathbf{\tau}_{2})-v_{t}(\mathbf{\theta})|\leq KV_{1}%
(\mathcal{F}_{t-1})\Vert\mathbf{\tau}_{2}-\mathbf{\theta}\Vert\leq
KV_{1}(\mathcal{F}_{t-1})d\nonumber
\end{align}
and similarly,%
\begin{align}
\mathbb{E}\left[  \mathbf{1}\{v_{t}(\mathbf{\theta})<Y_{t}\leq v_{t}%
(\mathbf{\tau}_{2})\}|\mathcal{F}_{t-1}\right]   & \leq KV_{1}(\mathcal{F}%
_{t-1})d\\
\text{and \ }\mathbb{E}\left[  \mathbf{1}\{v_{t}(\mathbf{\tau}_{1})<Y_{t}\leq
v_{t}(\mathbf{\theta})\}|\mathcal{F}_{t-1}\right]   & \leq KV_{1}%
(\mathcal{F}_{t-1})d\nonumber
\end{align}
Further
\begin{equation}
\sup\limits_{\Vert\mathbf{\theta}-\mathbf{\theta}^{0}\Vert\leq d}\left\Vert
\frac{\nabla^{\prime}v_{t}(\mathbf{\theta})}{-e_{t}(\mathbf{\theta}%
)}\right\Vert \leq HV_{1}(\mathcal{F}_{t-1})
\end{equation}
and by the mean-value theorem,
\begin{equation}
\frac{\nabla^{\prime}v_{t}(\mathbf{\tau})}{-e_{t}(\mathbf{\tau})}-\frac
{\nabla^{\prime}v_{t}(\mathbf{\theta})}{-e_{t}(\mathbf{\theta})}=\left\Vert
\frac{\nabla^{2}v_{t}(\mathbf{\tau}^{\ast})}{-e_{t}(\mathbf{\tau}^{\ast}%
)}+\frac{\nabla^{\prime}v_{t}(\mathbf{\tau}^{\ast})\nabla e_{t}(\mathbf{\tau
}^{\ast})}{e_{t}(\mathbf{\tau}^{\ast})^{2}}\right\Vert \cdot(\mathbf{\tau
}-\mathbf{\theta})
\end{equation}%
\begin{equation}
\Rightarrow\sup\limits_{\Vert\mathbf{\tau}-\mathbf{\theta}\Vert\leq
d}\left\Vert \frac{\nabla^{\prime}v_{t}(\mathbf{\tau})}{-e_{t}(\mathbf{\tau}%
)}-\frac{\nabla^{\prime}v_{t}(\mathbf{\theta})}{-e_{t}(\mathbf{\theta}%
)}\right\Vert \leq\left(  HV_{2}(\mathcal{F}_{t-1})+H^{2}V_{1}(\mathcal{F}%
_{t-1})H_{1}(\mathcal{F}_{t-1})\right)  \cdot d.
\end{equation}
By Assumption 2(D), $\mathbb{E}[V_{2}(\mathcal{F}_{t-1})]$ and $\mathbb{E}%
[V_{1}(\mathcal{F}_{t-1})H_{1}(\mathcal{F}_{t-1})]$ are finite, so
$\mathbb{E}[\mu_{t}(\mathbf{\theta},d)^{(1)}]\leq cd$, where c is a strictly
positive constant.\newline

\textbf{Second term: }$\mu_{t}(\mathbf{\theta},d)^{(2)}=\sup\limits_{\Vert
\mathbf{\tau}-\mathbf{\theta}\Vert\leq d}\left\Vert \frac{\nabla^{\prime}%
v_{t}(\mathbf{\tau})}{-e_{t}(\mathbf{\tau})}-\frac{\nabla^{\prime}%
v_{t}(\mathbf{\theta})}{-e_{t}(\mathbf{\theta})}\right\Vert .$ It was shown in
the derivations for the first term that $\mathbb{E}[\mu_{t}(\mathbf{\theta
},d)^{(2)}]\leq cd$, where c is a strictly positive constant.\newline

\textbf{Third term:}
\begin{equation}
\mu_{t}(\mathbf{\theta},d)^{(3)}=\frac{1}{\alpha}\sup\limits_{\Vert
\mathbf{\tau}-\mathbf{\theta}\Vert\leq d}\left\Vert \frac{v_{t}(\mathbf{\tau
})\nabla^{\prime}e_{t}(\mathbf{\tau})}{e_{t}(\mathbf{\tau})^{2}}%
\mathbf{1}\{Y_{t}\leq v_{t}(\mathbf{\tau})\}-\frac{v_{t}(\mathbf{\theta
})\nabla^{\prime}e_{t}(\mathbf{\theta})}{e_{t}(\mathbf{\theta})^{2}}%
\mathbf{1}\{Y_{t}\leq v_{t}(\mathbf{\theta})\}\right\Vert
\end{equation}
Similar to the first term, $\alpha\cdot\mu_{t}(\mathbf{\theta},d)^{(3)}$ can
be bounded by
\begin{align}
&  \left(  \mathbf{1}\{v_{t}(\mathbf{\tau}_{2})<Y_{t}\leq v_{t}(\mathbf{\theta
})\}+\mathbf{1}\{v_{t}(\mathbf{\tau}_{1})\leq Y_{t}\leq v_{t}(\mathbf{\theta
})\}+\mathbf{1}\{v_{t}(\mathbf{\theta})<Y_{t}\leq v_{t}(\mathbf{\tau}%
_{2})\}\right) \\
&  \cdot\sup\limits_{\Vert\mathbf{\theta}-\mathbf{\theta}^{0}\Vert\leq d_{0}%
}\left\Vert \frac{v_{t}(\mathbf{\theta})\nabla^{\prime}e_{t}(\mathbf{\theta}%
)}{e_{t}(\mathbf{\theta})^{2}}\right\Vert +\sup\limits_{\Vert\mathbf{\tau
}-\mathbf{\theta}\Vert\leq d}\left\Vert \frac{v_{t}(\mathbf{\tau}%
)\nabla^{\prime}e_{t}(\mathbf{\tau})}{e_{t}(\mathbf{\tau})^{2}}-\frac
{v_{t}(\mathbf{\theta})\nabla^{\prime}e_{t}(\mathbf{\theta})}{e_{t}%
(\mathbf{\theta})^{2}}\right\Vert \nonumber
\end{align}
where
\begin{equation}
\mathbb{E}\left[  \mathbf{1}\{v_{t}(\mathbf{\tau}_{2})<Y_{t}\leq
v_{t}(\mathbf{\theta})\}+\mathbf{1}\{v_{t}(\mathbf{\tau}_{1})\leq Y_{t}\leq
v_{t}(\mathbf{\theta})\}+\mathbf{1}\{v_{t}(\mathbf{\theta})<Y_{t}\leq
v_{t}(\mathbf{\tau}_{2})\}|\mathcal{F}_{t-1}\right]  \leq3KV_{1}%
(\mathcal{F}_{t-1})d
\end{equation}
and
\begin{equation}
\sup\limits_{\Vert\mathbf{\theta}-\mathbf{\theta}^{0}\Vert\leq d}\left\Vert
\frac{v_{t}(\mathbf{\theta})\nabla^{\prime}e_{t}(\mathbf{\theta})}%
{e_{t}(\mathbf{\theta})^{2}}\right\Vert \leq H\cdot H_{1}(\mathcal{F}_{t-1})
\end{equation}
where $e_{t}(\mathbf{\theta})\leq v_{t}(\mathbf{\theta})\leq0$ is used, and by
the mean-value theorem,
\begin{align}
&  \frac{v_{t}(\mathbf{\tau})\nabla^{\prime}e_{t}(\mathbf{\tau})}%
{e_{t}(\mathbf{\tau})^{2}}-\frac{v_{t}(\mathbf{\theta})\nabla^{\prime}%
e_{t}(\mathbf{\theta})}{e_{t}(\mathbf{\theta})^{2}}\\
= &  \left\Vert \frac{\nabla^{\prime}e_{t}(\mathbf{\tau}^{\ast})\nabla
v_{t}(\mathbf{\tau}^{\ast})}{e_{t}(\mathbf{\tau}^{\ast})^{2}}-\frac
{2v_{t}(\mathbf{\tau}^{\ast})\nabla^{\prime}e_{t}(\mathbf{\tau}^{\ast})\nabla
e_{t}(\mathbf{\tau}^{\ast})}{e_{t}(\mathbf{\tau}^{\ast})^{3}}+\frac
{v_{t}(\mathbf{\tau}^{\ast})\nabla^{2}e_{t}(\mathbf{\tau}^{\ast})}%
{e_{t}(\mathbf{\tau}^{\ast})^{2}}\right\Vert \cdot(\mathbf{\tau}%
-\mathbf{\theta})\nonumber
\end{align}%
\begin{align}
\Rightarrow &  \sup\limits_{\Vert\mathbf{\tau}-\mathbf{\theta}\Vert\leq
d}\left\Vert \frac{v_{t}(\mathbf{\tau})\nabla^{\prime}e_{t}(\mathbf{\tau}%
)}{e_{t}(\mathbf{\tau})^{2}}-\frac{v_{t}(\mathbf{\theta})\nabla^{\prime}%
e_{t}(\mathbf{\theta})}{e_{t}(\mathbf{\theta})^{2}}\right\Vert \\
\leq &  \left(  H^{2}V_{1}(\mathcal{F}_{t-1})H_{1}(\mathcal{F}_{t-1}%
)+2H^{2}H_{1}(\mathcal{F}_{t-1})^{2}+H\cdot H_{2}(\mathcal{F}_{t-1})\right)
\cdot d\nonumber
\end{align}
By Assumption 2(D), $\mathbb{E}[V_{1}(\mathcal{F}_{t-1})H_{1}(\mathcal{F}%
_{t-1})]$, $\mathbb{E}[H_{1}(\mathcal{F}_{t-1})^{2}]$, $\mathbb{E}%
[H_{2}(\mathcal{F}_{t-1})]$ $<\infty$. Therefore, $\mathbb{E}[\mu
_{t}(\mathbf{\theta},d)^{(3)}]\leq cd$, where c is a strictly positive
constant.\newline

\textbf{Fourth term: }$\mu_{t}(\mathbf{\theta},d)^{(4)}=\sup\limits_{\Vert
\mathbf{\tau}-\mathbf{\theta}\Vert\leq d}\left\Vert \frac{v_{t}(\mathbf{\tau
})\nabla^{\prime}e_{t}(\mathbf{\tau})}{e_{t}(\mathbf{\tau})^{2}}-\frac
{v_{t}(\mathbf{\theta})\nabla^{\prime}e_{t}(\mathbf{\theta})}{e_{t}%
(\mathbf{\theta})^{2}}\right\Vert .$ In the derivations for the third term we
showed that $\mathbb{E}[\mu_{t}(\mathbf{\theta},d)^{(4)}]\leq cd$, where c is
a strictly positive constant.\newline

\textbf{Fifth term:}
\begin{equation}
\mu_{t}(\mathbf{\theta},d)^{(5)}=\frac{1}{\alpha}\sup\limits_{\Vert
\mathbf{\tau}-\mathbf{\theta}\Vert\leq d}\left\Vert \frac{\nabla^{\prime}%
e_{t}(\mathbf{\tau})}{e_{t}(\mathbf{\tau})^{2}}\mathbf{1}\{Y_{t}\leq
v_{t}(\mathbf{\tau})\}Y_{t}-\frac{\nabla^{\prime}e_{t}(\mathbf{\theta})}%
{e_{t}(\mathbf{\theta})^{2}}\mathbf{1}\{Y_{t}\leq v_{t}(\mathbf{\theta
})\}Y_{t}\right\Vert
\end{equation}
Similar to the first term, $\alpha\cdot\mu_{t}(\mathbf{\theta},d)^{(5)}$ can
be bounded by
\begin{align}
&  \left(  \mathbf{1}\{v_{t}(\mathbf{\tau}_{2})<Y_{t}\leq v_{t}(\mathbf{\theta
})\}+\mathbf{1}\{v_{t}(\mathbf{\tau}_{1})\leq Y_{t}\leq v_{t}(\mathbf{\theta
})\}+\mathbf{1}\{v_{t}(\mathbf{\theta})<Y_{t}\leq v_{t}(\mathbf{\tau}%
_{2})\}\right) \\
&  \cdot|Y_{t}|\sup\limits_{\Vert\mathbf{\theta}-\mathbf{\theta}^{0}\Vert\leq
d_{0}}\left\Vert \frac{\nabla^{\prime}e_{t}(\mathbf{\theta})}{e_{t}%
(\mathbf{\theta})^{2}}\right\Vert +|Y_{t}|\sup\limits_{\Vert\mathbf{\tau
}-\mathbf{\theta}\Vert\leq d}\left\Vert \frac{\nabla^{\prime}e_{t}%
(\mathbf{\tau})}{e_{t}(\mathbf{\tau})^{2}}-\frac{\nabla^{\prime}%
e_{t}(\mathbf{\theta})}{e_{t}(\mathbf{\theta})^{2}}\right\Vert \nonumber
\end{align}
where
\begin{align}
\mathbb{E}\left[  \mathbf{1}\{v_{t}(\mathbf{\tau}_{2})<Y_{t}\leq
v_{t}(\mathbf{\theta})\}|Y_{t}|~|\mathcal{F}_{t-1}\right]  = &  \int
_{v_{t}(\mathbf{\tau}_{2})}^{v_{t}(\mathbf{\theta})}|y|f_{t}(y)dy\leq
K|v_{t}(\mathbf{\tau}_{2})|\cdot|v_{t}(\mathbf{\tau}_{2})-v_{t}(\mathbf{\theta
})|\\
&  \leq KV(\mathcal{F}_{t-1})V_{1}(\mathcal{F}_{t-1})\Vert\mathbf{\tau}%
_{2}-\mathbf{\theta}\Vert\leq KV(\mathcal{F}_{t-1})V_{1}(\mathcal{F}%
_{t-1})d\nonumber
\end{align}
and similarly,
\begin{align}
\mathbb{E}\left[  \mathbf{1}\{v_{t}(\mathbf{\tau}_{1})<Y_{t}\leq
v_{t}(\mathbf{\theta})\}|Y_{t}|~|\mathcal{F}_{t-1}\right]   & \leq
KV(\mathcal{F}_{t-1})V_{1}(\mathcal{F}_{t-1})d\\
\text{and \ \ }\mathbb{E}\left[  \mathbf{1}\{v_{t}(\mathbf{\theta})<Y_{t}\leq
v_{t}(\mathbf{\tau}_{2})\}|Y_{t}|~|\mathcal{F}_{t-1}\right]   & \leq
KV(\mathcal{F}_{t-1})V_{1}(\mathcal{F}_{t-1})d\nonumber
\end{align}
Further
\begin{equation}
\sup\limits_{\Vert\mathbf{\theta}-\mathbf{\theta}^{0}\Vert\leq d}\left\Vert
\frac{\nabla^{\prime}e_{t}(\mathbf{\theta})}{e_{t}(\mathbf{\theta})^{2}%
}\right\Vert \leq H^{2}H_{1}(\mathcal{F}_{t-1})
\end{equation}
and by the mean-value theorem,
\begin{gather}
\frac{\nabla^{\prime}e_{t}(\mathbf{\tau})}{e_{t}(\mathbf{\tau})^{2}}%
-\frac{\nabla^{\prime}e_{t}(\mathbf{\theta})}{e_{t}(\mathbf{\theta})^{2}%
}=\left\Vert -\frac{2\nabla^{\prime}e_{t}(\mathbf{\tau}^{\ast})\nabla
e_{t}(\mathbf{\tau}^{\ast})}{e_{t}(\mathbf{\tau}^{\ast})^{3}}+\frac{\nabla
^{2}e_{t}(\mathbf{\tau}^{\ast})}{e_{t}(\mathbf{\tau}^{\ast})^{2}}\right\Vert
\cdot(\mathbf{\tau}-\mathbf{\theta})\\
\Rightarrow\sup\limits_{\Vert\mathbf{\tau}-\mathbf{\theta}\Vert\leq
d}\left\Vert \frac{\nabla^{\prime}e_{t}(\mathbf{\tau})}{e_{t}(\mathbf{\tau
})^{2}}-\frac{\nabla^{\prime}e_{t}(\mathbf{\theta})}{e_{t}(\mathbf{\theta
})^{2}}\right\Vert \leq\left(  2H^{3}H_{1}(\mathcal{F}_{t-1})^{2}+H^{2}%
H_{2}(\mathcal{F}_{t-1})\right)  \cdot d\nonumber
\end{gather}
By Assumption 2(D), $\mathbb{E}[V(\mathcal{F}_{t-1})V_{1}(\mathcal{F}%
_{t-1})H_{1}(\mathcal{F}_{t-1})]$, $\mathbb{E}[H_{1}(\mathcal{F}_{t-1}%
)^{2}|Y_{t}|]$, $\mathbb{E}[H_{2}(\mathcal{F}_{t-1})|Y_{t}|]$ $<\infty$.
Therefore, $\mathbb{E}[\mu_{t}(\mathbf{\theta},d)^{(5)}]\leq cd$, where c is a
strictly positive constant.\newline

\textbf{Sixth term:}
\begin{equation}
\mu_{t}^{(6)}(\mathbf{\theta},d)=\sup\limits_{\Vert\mathbf{\tau}%
-\mathbf{\theta}\Vert\leq d}\left\Vert \frac{\nabla^{\prime}e_{t}%
(\mathbf{\tau})}{-e_{t}(\mathbf{\tau})}-\frac{\nabla^{\prime}e_{t}%
(\mathbf{\theta})}{-e_{t}(\mathbf{\theta})}\right\Vert
\end{equation}
By the mean-value theorem,
\begin{equation}
\frac{\nabla^{\prime}e_{t}(\mathbf{\tau})}{-e_{t}(\mathbf{\tau})}-\frac
{\nabla^{\prime}e_{t}(\mathbf{\theta})}{-e_{t}(\mathbf{\theta})}=\left\Vert
\frac{\nabla^{\prime}e_{t}(\mathbf{\tau}^{\ast})\nabla e_{t}(\mathbf{\tau
}^{\ast})}{e_{t}(\mathbf{\tau}^{\ast})^{2}}+\frac{\nabla^{2}e_{t}%
(\mathbf{\tau}^{\ast})}{-e_{t}(\mathbf{\tau}^{\ast})}\right\Vert
\cdot(\mathbf{\tau}-\mathbf{\theta})
\end{equation}%
\begin{equation}
\Rightarrow\sup\limits_{\Vert\mathbf{\tau}-\mathbf{\theta}\Vert\leq
d}\left\Vert \frac{\nabla^{\prime}e_{t}(\mathbf{\tau})}{-e_{t}(\mathbf{\tau}%
)}-\frac{\nabla^{\prime}e_{t}(\mathbf{\theta})}{-e_{t}(\mathbf{\theta}%
)}\right\Vert \leq\left(  H^{2}H_{1}(\mathcal{F}_{t-1})^{2}+H\cdot
H_{2}(\mathcal{F}_{t-1})\right)  \cdot d.
\end{equation}
By Assumption 2(D), $\mathbb{E}[H_{1}(\mathcal{F}_{t-1})^{2}]$, $\mathbb{E}%
[H_{2}(\mathcal{F}_{t-1})]$ $<\infty$. Therefore, $\mathbb{E}[\mu
_{t}(\mathbf{\theta},d)^{(6)}]\leq cd$, where c is a strictly positive constant.

Thus we have shown that $\mu_{t}(\mathbf{\theta},d)\leq\sum_{i=1}^{6}\mu
_{t}(\mathbf{\theta},d)^{(i)}$ with $\mathbb{E}[\mu_{t}(\mathbf{\theta
},d)^{(i)}]\leq cd$, $\forall i=1,2,\cdots,6$, where c is a strictly positive
constant, proving the lemma.
\end{proof}

\bigskip

\begin{lemma}
\label{lemmaN3iii}Under Assumptions 1-2, Assumption N3(iii) of Weiss (1991)
holds:%
\[
\mathbb{E}[\mu_{t}(\mathbf{\theta},d)^{q}]\leq cd,\text{ for }\Vert
\mathbf{\theta}-\mathbf{\theta}^{0}\Vert+d\leq d_{0},\text{ and some }q>2
\]
where $c,~d$ and $d_{0}$ are strictly positive numbers.
\end{lemma}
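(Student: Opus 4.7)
My plan is to build on the six-term decomposition $\mu_t(\mathbf{\theta}, d) \leq \sum_{i=1}^6 \mu_t(\mathbf{\theta}, d)^{(i)}$ developed in the proof of Lemma \ref{lemmaN3ii}. By the $c_r$-inequality, $\mu_t^q \leq 6^{q-1} \sum_i (\mu_t^{(i)})^q$, so it suffices to bound each $\mathbb{E}[(\mu_t^{(i)})^q] \leq c d$. I will take $q = 2 + \delta/2$, which is strictly greater than $2$ and lies in $(2, 2+\delta]$ by Assumption 2(D).

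Each $\mu_t^{(i)}$ was already written in Lemma \ref{lemmaN3ii} as a sum of the form $\mathbf{1}_{E_{t,i}} M_{t,i} + B_{t,i} \cdot d$, where $E_{t,i}$ is the indicator event of the form $\{v_t(\tau_1) \leq Y_t \leq v_t(\tau_2)\}$ with conditional probability at most $K V_1(\mathcal{F}_{t-1}) d$ by the density bound in Assumption 2(B), $M_{t,i}$ is a first-derivative envelope built from $V, V_1, H_1$ (and possibly $|Y_t|$), and $B_{t,i}$ is built from $V_1 H_1, H_1^2, V_2, H_2$. The $c_r$-inequality gives $(\mu_t^{(i)})^q \leq 2^{q-1}(\mathbf{1}_{E_{t,i}} M_{t,i}^q + B_{t,i}^q d^q)$. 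For the indicator piece, conditioning on $\mathcal{F}_{t-1}$ and applying H\"older's inequality yields $\mathbb{E}[\mathbf{1}_{E_{t,i}} M_{t,i}^q] \leq K d \cdot \mathbb{E}[V_1 M_{t,i}^q] \leq c d$, where the required cross moments (of order at most $q+1 \leq 3+\delta$) are controlled by Assumption 2(D)(i)--(iii).

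The genuine difficulty lies in the Lipschitz piece: $\mathbb{E}[(B_{t,i} d)^q] = d^q \, \mathbb{E}[B_{t,i}^q] \leq d_0^{q-1} d \cdot \mathbb{E}[B_{t,i}^q]$, which demands $\mathbb{E}[V_2^q]$ and $\mathbb{E}[H_2^q]$ to be finite. However, Assumption 2(D)(i) only delivers moments of order $(3+\delta)/2 < 2 < q$ for these second-derivative envelopes, so a direct bound fails. To circumvent this, I would interpolate against the trivial sup bound: each $\mu_t^{(i)}$ also obeys $\mu_t^{(i)} \leq 2 \tilde M_{t,i}$ with $\tilde M_{t,i}$ depending only on the first-derivative envelopes $V, V_1, H_1$, which enjoy moments of order $3+\delta$. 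Writing $\mu_t^{(i)} \leq \min(\mathbf{1}_{E_{t,i}} M_{t,i} + B_{t,i} d, \, 2 \tilde M_{t,i})$ and using $\min(a,b) \leq a^{1-\alpha} b^\alpha$ for $\alpha \in [0,1]$ together with H\"older's inequality gives a bound of the form $c \, d^{q(1-\alpha)}$, with H\"older exponents chosen so that the second-derivative factor is raised to a power at most $(3+\delta)/2$ and the first-derivative factor to a power at most $3+\delta$. The choice $\alpha = 2 - (3+\delta)/q$ balances these two constraints exactly, yielding a power of $d$ equal to $3+\delta - q$, which is $\geq 1$ precisely when $q \leq 2+\delta$. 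The target $q = 2 + \delta/2$ thus satisfies all constraints with strict slack and delivers the required $\mathbb{E}[(\mu_t^{(i)})^q] \leq c d$.

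The main obstacle, as flagged above, is the moment deficit of $V_2, H_2$ under Assumption 2(D)(i): the mean-value-based Lipschitz bound cannot be raised to a power larger than $(3+\delta)/2 < 2$ in a direct way. The interpolation between the trivial boundedness envelope and the mean-value Lipschitz envelope is the key technical device that trades a bit of the $d$-smallness for better integrability, and it is what allows the conclusion to survive at any $q \in (2, 2+\delta]$. Everything else amounts to a routine elaboration of the term-by-term bounds already derived in the proof of Lemma \ref{lemmaN3ii}.
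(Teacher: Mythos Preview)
Your diagnosis of the moment deficit and the remedy---interpolating the mean-value Lipschitz bound against the first-derivative envelope---are exactly right, and this is precisely the device the paper uses. The paper's implementation is simpler, though: it takes $q=2+\delta$, applies the $c_r$ split as you do, and for the sup-difference piece $D$ writes $D^{2+\delta}=D\cdot D^{1+\delta}$, bounding the first factor by the mean-value estimate $cBd$ and the second by the trivial envelope $(2\tilde M)^{1+\delta}$. No $\min$ inequality is needed. Crucially, this places the second-derivative envelopes $V_2,H_2$ at power \emph{exactly one} in the resulting cross moments, so that for term~5 (where every bound carries an extra $|Y_t|$) the requirement becomes precisely $\mathbb{E}\bigl[|Y_t|^{2+\delta}H_1^{1+\delta}H_2\bigr]$ and $\mathbb{E}\bigl[|Y_t|^{2+\delta}H_1^{3+\delta}\bigr]$---i.e.\ Assumption 2(D)(iii) itself. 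The mixed-moment hypotheses 2(D)(ii)--(iii) were calibrated to sit on this boundary.

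Your choice $q=2+\delta/2$ with $\alpha=2-(3+\delta)/q$ gives $q(1-\alpha)=1+\delta/2>1$, which pushes the $H_2$ exponent above one: for term~5 your interpolation would require $\mathbb{E}\bigl[|Y_t|^{2+\delta/2}H_1\,H_2^{1+\delta/2}\bigr]<\infty$, and this does \emph{not} follow from Assumption~2. The exponent triple lies outside the convex hull of the assumed moment exponents; concretely, at $\delta=1$ take $V\equiv V_1\equiv V_2\equiv H_1\equiv 1$ and $(|Y_t|,H_2)$ supported on $\{(n,n^2):n\ge 1\}$ with mass $\propto n^{-13/2}$: then every moment in 2(B)(ii) and 2(D) is finite, but your target equals $\sum_n n^{-1}=\infty$. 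The ``strict slack'' you sought in $q$ is exactly what breaks the match with 2(D)(iii). Replacing $q$ by $2+\delta$ repairs the argument---then $q(1-\alpha)=1$ and $q\alpha=1+\delta$, and your $\min$ formulation reduces to the paper's direct power split.
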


\bigskip

\begin{proof}
[Proof of Lemma \ref{lemmaN3iii}]In this proof, the strictly positive constant
$c$ and the mean-value expansion term, $\mathbf{\tau}^{\ast}$, can change from
line to line. Pick $d_{0}$ such that for any $\mathbf{\theta}$ that satisfies
$\Vert\mathbf{\theta}-\mathbf{\theta}^{0}\Vert\leq d_{0}$, all the conditions
in Assumption 2(C) and 2(D) hold as well as $e_{t}(\mathbf{\theta})\leq
v_{t}(\mathbf{\theta})\leq0$. Similar to Lemma \ref{lemmaN3ii}, we will
decompose $\mu_{t}(\mathbf{\theta},d)$ into six terms, $\mu_{t}(\mathbf{\theta
},d)^{(i)},$ for $i=1,2,...,6$. By Jensen's inequality, $\mathbb{E}\left[
\mu_{t}(\mathbf{\theta},d)^{q}\right]  \leq6^{q-1}\sum_{i=1}^{6}%
\mathbb{E}[\left(  \mu_{t}(\mathbf{\theta},d)^{\left(  i\right)  }\right)
^{q}],~q>2$. We will show that for some $0<\delta<1$, $\mathbb{E}[\left(
\mu_{t}(\mathbf{\theta},d)^{(i)}\right)  ^{2+\delta}]\leq cd$, $\forall
~i=1,2,\cdots,6$, where c is a strictly positive constant.

\textbf{First term:}%
\begin{equation}
\mu_{t}(\mathbf{\theta},d)^{(1)}=\frac{1}{\alpha}\sup\limits_{\Vert
\mathbf{\tau}-\mathbf{\theta}\Vert\leq d}\left\Vert \frac{\nabla^{\prime}%
v_{t}(\mathbf{\tau})}{-e_{t}(\mathbf{\tau})}\mathbf{1}\{Y_{t}\leq
v_{t}(\mathbf{\tau})\}-\frac{\nabla^{\prime}v_{t}(\mathbf{\theta})}%
{-e_{t}(\mathbf{\theta})}\mathbf{1}\{Y_{t}\leq v_{t}(\mathbf{\theta
})\}\right\Vert
\end{equation}
Set $\mathbf{\tau}_{1}=\arg\min_{\Vert\mathbf{\tau}-\mathbf{\theta}\Vert\leq
d}v_{t}(\mathbf{\tau})$ and $\mathbf{\tau}_{2}={\arg\max}_{\Vert\mathbf{\tau
}-\mathbf{\theta}\Vert\leq d}v_{t}(\mathbf{\tau})$. Following the same
argument as in the proof of Lemma 4, we obtain
\begin{align}
& [\alpha\cdot\mu_{t}(\mathbf{\theta},d)^{(1)}]^{2+\delta}\leq c\cdot\left(
\mathbf{1}\{v_{t}(\mathbf{\tau}_{2})<Y_{t}\leq v_{t}(\mathbf{\theta
})\}+\mathbf{1}\{v_{t}(\mathbf{\tau}_{1})\leq Y_{t}\leq v_{t}(\mathbf{\theta
})\}+\mathbf{1}\{v_{t}(\mathbf{\theta})<Y_{t}\leq v_{t}(\mathbf{\tau}%
_{2})\}\right) \\
& \cdot\left(  \sup\limits_{\Vert\mathbf{\theta}-\mathbf{\theta}^{0}\Vert\leq
d_{0}}\left\Vert \frac{\nabla^{\prime}v_{t}(\mathbf{\theta})}{-e_{t}%
(\mathbf{\theta})}\right\Vert \right)  ^{2+\delta}+\left(  \sup\limits_{\Vert
\mathbf{\tau}-\mathbf{\theta}\Vert\leq d}\left\Vert \frac{\nabla^{\prime}%
v_{t}(\mathbf{\tau})}{-e_{t}(\mathbf{\tau})}-\frac{\nabla^{\prime}%
v_{t}(\mathbf{\theta})}{-e_{t}(\mathbf{\theta})}\right\Vert \right)
^{2+\delta}\nonumber
\end{align}
where
\begin{equation}
\mathbb{E}_{t-1}\left[  \mathbf{1}\{v_{t}(\mathbf{\tau}_{2})<Y_{t}\leq
v_{t}(\mathbf{\theta})\}+\mathbf{1}\{v_{t}(\mathbf{\tau}_{1})\leq Y_{t}\leq
v_{t}(\mathbf{\theta})\}+\mathbf{1}\{v_{t}(\mathbf{\theta})<Y_{t}\leq
v_{t}(\mathbf{\tau}_{2})\}\right]  \leq3KV_{1}(\mathcal{F}_{t-1})d
\end{equation}
and
\begin{equation}
\left(  \sup\limits_{\Vert\mathbf{\theta}-\mathbf{\theta}^{0}\Vert\leq
d}\left\Vert \frac{\nabla^{\prime}v_{t}(\mathbf{\theta})}{-e_{t}%
(\mathbf{\theta})}\right\Vert \right)  ^{2+\delta}\leq\left(  HV_{1}%
(\mathcal{F}_{t-1})\right)  ^{2+\delta}%
\end{equation}
For $\left(  \sup\limits_{\Vert\mathbf{\tau}-\mathbf{\theta}\Vert\leq
d}\left\Vert \frac{\nabla^{\prime}v_{t}(\mathbf{\tau})}{-e_{t}(\mathbf{\tau}%
)}-\frac{\nabla^{\prime}v_{t}(\mathbf{\theta})}{-e_{t}(\mathbf{\theta}%
)}\right\Vert \right)  ^{2+\delta}$, we need to combine the two following two
results:
\begin{align}
\sup\limits_{\Vert\mathbf{\tau}-\mathbf{\theta}\Vert\leq d}\left\Vert
\frac{\nabla^{\prime}v_{t}(\mathbf{\tau})}{-e_{t}(\mathbf{\tau})}-\frac
{\nabla^{\prime}v_{t}(\mathbf{\theta})}{-e_{t}(\mathbf{\theta})}\right\Vert
&  \leq\left(  HV_{2}(\mathcal{F}_{t-1})+H^{2}V_{1}(\mathcal{F}_{t-1}%
)H_{1}(\mathcal{F}_{t-1})\right)  d\\
\left(  \sup\limits_{\Vert\mathbf{\tau}-\mathbf{\theta}\Vert\leq d}\left\Vert
\frac{\nabla^{\prime}v_{t}(\mathbf{\tau})}{-e_{t}(\mathbf{\tau})}-\frac
{\nabla^{\prime}v_{t}(\mathbf{\theta})}{-e_{t}(\mathbf{\theta})}\right\Vert
\right)  ^{1+\delta} &  \leq\left(  2HV_{1}(\mathcal{F}_{t-1})\right)
^{1+\delta}\nonumber
\end{align}
Combining with Assumption 2(D), we thus have $\mathbb{E}[\left(  \mu
_{t}(\mathbf{\theta},d)^{(1)}\right)  ^{2+\delta}]\leq cd$, where c is a
strictly positive constant.\newline

\textbf{Second term: }$\mu_{t}(\mathbf{\theta},d)^{(2)}=\sup\limits_{\Vert
\mathbf{\tau}-\mathbf{\theta}\Vert\leq d}\left\Vert \frac{\nabla^{\prime}%
v_{t}(\mathbf{\tau})}{-e_{t}(\mathbf{\tau})}-\frac{\nabla^{\prime}%
v_{t}(\mathbf{\theta})}{-e_{t}(\mathbf{\theta})}\right\Vert .$ It was shown in
the derivations for the first term that $\mathbb{E}[\left(  \mu_{t}%
(\mathbf{\theta},d)^{(2)}\right)  ^{2+\delta}]\leq cd$, where c is a strictly
positive constant.\newline

\textbf{Third term:}
\begin{equation}
\mu_{t}(\mathbf{\theta},d)^{(3)}=\frac{1}{\alpha}\sup\limits_{\Vert
\mathbf{\tau}-\mathbf{\theta}\Vert\leq d}\left\Vert \frac{v_{t}(\mathbf{\tau
})\nabla^{\prime}e_{t}(\mathbf{\tau})}{e_{t}(\mathbf{\tau})^{2}}%
\mathbf{1}\{Y_{t}\leq v_{t}(\mathbf{\tau})\}-\frac{v_{t}(\mathbf{\theta
})\nabla^{\prime}e_{t}(\mathbf{\theta})}{e_{t}(\mathbf{\theta})^{2}}%
\mathbf{1}\{Y_{t}\leq v_{t}(\mathbf{\theta})\}\right\Vert
\end{equation}
Similar to the first term, $\left(  \alpha\cdot\mu_{t}(\mathbf{\theta
},d)^{(3)}\right)  ^{2+\delta}$ can be bounded by
\begin{align}
c\cdot &  \left(  \mathbf{1}\{v_{t}(\mathbf{\tau}_{2})<Y_{t}\leq
v_{t}(\mathbf{\theta})\}+\mathbf{1}\{v_{t}(\mathbf{\tau}_{1})\leq Y_{t}\leq
v_{t}(\mathbf{\theta})\}+\mathbf{1}\{v_{t}(\mathbf{\theta})<Y_{t}\leq
v_{t}(\mathbf{\tau}_{2})\}\right) \\
&  \cdot\left(  \sup\limits_{\Vert\mathbf{\theta}-\mathbf{\theta}^{0}\Vert\leq
d_{0}}\left\Vert \frac{v_{t}(\mathbf{\theta})\nabla^{\prime}e_{t}%
(\mathbf{\theta})}{e_{t}(\mathbf{\theta})^{2}}\right\Vert \right)  ^{2+\delta
}+\left(  \sup\limits_{\Vert\mathbf{\tau}-\mathbf{\theta}\Vert\leq
d}\left\Vert \frac{v_{t}(\mathbf{\tau})\nabla^{\prime}e_{t}(\mathbf{\tau}%
)}{e_{t}(\mathbf{\tau})^{2}}-\frac{v_{t}(\mathbf{\theta})\nabla^{\prime}%
e_{t}(\mathbf{\theta})}{e_{t}(\mathbf{\theta})^{2}}\right\Vert \right)
^{2+\delta}\nonumber
\end{align}
where
\begin{equation}
\mathbb{E}_{t-1}\left(  \mathbf{1}\{v_{t}(\mathbf{\tau}_{2})<Y_{t}\leq
v_{t}(\mathbf{\theta})\}+\mathbf{1}\{v_{t}(\mathbf{\tau}_{1})\leq Y_{t}\leq
v_{t}(\mathbf{\theta})\}+\mathbf{1}\{v_{t}(\mathbf{\theta})<Y_{t}\leq
v_{t}(\mathbf{\tau}_{2})\}\right)  \leq3KV_{1}(\mathcal{F}_{t-1})d
\end{equation}
and
\begin{equation}
\left(  \sup\limits_{\Vert\mathbf{\theta}-\mathbf{\theta}^{0}\Vert\leq
d}\left\Vert \frac{v_{t}(\mathbf{\theta})\nabla^{\prime}e_{t}(\mathbf{\theta
})}{e_{t}(\mathbf{\theta})^{2}}\right\Vert \right)  ^{2+\delta}\leq\left(
H\cdot H_{1}(\mathcal{F}_{t-1})\right)  ^{2+\delta}%
\end{equation}
As for $\left(  \sup\limits_{\Vert\mathbf{\tau}-\mathbf{\theta}\Vert\leq
d}\left\Vert \frac{v_{t}(\mathbf{\tau})\nabla^{\prime}e_{t}(\mathbf{\tau}%
)}{e_{t}(\mathbf{\tau})^{2}}-\frac{v_{t}(\mathbf{\theta})\nabla^{\prime}%
e_{t}(\mathbf{\theta})}{e_{t}(\mathbf{\theta})^{2}}\right\Vert \right)
^{2+\delta}$, we need to combine the following two results:
\begin{align}
&  \sup\limits_{\Vert\mathbf{\tau}-\mathbf{\theta}\Vert\leq d}\left\Vert
\frac{v_{t}(\mathbf{\tau})\nabla^{\prime}e_{t}(\mathbf{\tau})}{e_{t}%
(\mathbf{\tau})^{2}}-\frac{v_{t}(\mathbf{\theta})\nabla^{\prime}%
e_{t}(\mathbf{\theta})}{e_{t}(\mathbf{\theta})^{2}}\right\Vert \leq\left(
H^{2}V_{1}(\mathcal{F}_{t-1})H_{1}(\mathcal{F}_{t-1})+2H^{2}H_{1}%
(\mathcal{F}_{t-1})^{2}+H\cdot H_{2}(\mathcal{F}_{t-1})\right)  d\\
&  \left(  \sup\limits_{\Vert\mathbf{\tau}-\mathbf{\theta}\Vert\leq
d}\left\Vert \frac{v_{t}(\mathbf{\tau})\nabla^{\prime}e_{t}(\mathbf{\tau}%
)}{e_{t}(\mathbf{\tau})^{2}}-\frac{v_{t}(\mathbf{\theta})\nabla^{\prime}%
e_{t}(\mathbf{\theta})}{e_{t}(\mathbf{\theta})^{2}}\right\Vert \right)
^{1+\delta}\leq\left(  2H\cdot H_{1}(\mathcal{F}_{t-1})\right)  ^{1+\delta}%
\end{align}
Combining with Assumption 2(D), we thus have $\mathbb{E}[\left(  \mu
_{t}(\mathbf{\theta},d)^{(3)}\right)  ^{2+\delta}]\leq cd$, where c is a
strictly positive constant.\newline

\textbf{Fourth term: }$\mu_{t}(\mathbf{\theta},d)^{(4)}=\sup\limits_{\Vert
\mathbf{\tau}-\mathbf{\theta}\Vert\leq d}\left\Vert \frac{v_{t}(\mathbf{\tau
})\nabla^{\prime}e_{t}(\mathbf{\tau})}{e_{t}(\mathbf{\tau})^{2}}-\frac
{v_{t}(\mathbf{\theta})\nabla^{\prime}e_{t}(\mathbf{\theta})}{e_{t}%
(\mathbf{\theta})^{2}}\right\Vert .$ It was shown in the derivations for the
third term that $\mathbb{E}[\left(  \mu_{t}(\mathbf{\theta},d)^{(4)}\right)
^{2+\delta}]\leq cd$, where c is a strictly positive constant.\newline

\textbf{Fifth term:}
\begin{equation}
\mu_{t}(\mathbf{\theta},d)^{(5)}=\frac{1}{\alpha}\sup\limits_{\Vert
\mathbf{\tau}-\mathbf{\theta}\Vert\leq d}\left\Vert \frac{\nabla^{\prime}%
e_{t}(\mathbf{\tau})}{e_{t}(\mathbf{\tau})^{2}}\mathbf{1}\{Y_{t}\leq
v_{t}(\mathbf{\tau})\}Y_{t}-\frac{\nabla^{\prime}e_{t}(\mathbf{\theta})}%
{e_{t}(\mathbf{\theta})^{2}}\mathbf{1}\{Y_{t}\leq v_{t}(\mathbf{\theta
})\}Y_{t}\right\Vert
\end{equation}
Similar to the first and third terms, $\left(  \alpha\cdot\mu_{t}%
(\mathbf{\theta},d)^{(5)}\right)  ^{2+\delta}$ can be bounded by
\begin{align}
c\cdot &  \left(  \mathbf{1}\{v_{t}(\mathbf{\tau}_{2})<Y_{t}\leq
v_{t}(\mathbf{\theta})\}+\mathbf{1}\{v_{t}(\mathbf{\tau}_{1})\leq Y_{t}\leq
v_{t}(\mathbf{\theta})\}+\mathbf{1}\{v_{t}(\mathbf{\theta})<Y_{t}\leq
v_{t}(\mathbf{\tau}_{2})\}\right) \\
&  \cdot|Y_{t}|^{2+\delta}\left(  \sup\limits_{\Vert\mathbf{\theta
}-\mathbf{\theta}^{0}\Vert\leq d_{0}}\left\Vert \frac{\nabla^{\prime}%
e_{t}(\mathbf{\theta})}{e_{t}(\mathbf{\theta})^{2}}\right\Vert \right)
^{2+\delta}+|Y_{t}|^{2+\delta}\left(  \sup\limits_{\Vert\mathbf{\tau
}-\mathbf{\theta}\Vert\leq d}\left\Vert \frac{\nabla^{\prime}e_{t}%
(\mathbf{\tau})}{e_{t}(\mathbf{\tau})^{2}}-\frac{\nabla^{\prime}%
e_{t}(\mathbf{\theta})}{e_{t}(\mathbf{\theta})^{2}}\right\Vert \right)
^{2+\delta}\nonumber
\end{align}
where
\begin{align}
\mathbb{E}_{t-1}[\mathbf{1}\{v_{t}(\mathbf{\tau}_{2})<Y_{t}\leq v_{t}%
(\mathbf{\theta})\}|Y_{t}|^{2+\delta}]= &  \int_{v_{t}(\mathbf{\tau}_{2}%
)}^{v_{t}(\mathbf{\theta})}|y|^{2+\delta}f_{t}(y)dy\leq K|v_{t}(\mathbf{\tau
}_{2})|^{2+\delta}\cdot|v_{t}(\mathbf{\tau}_{2})-v_{t}(\mathbf{\theta})|\\
&  \leq KV(\mathcal{F}_{t-1})^{2+\delta}V_{1}(\mathcal{F}_{t-1})\Vert
\mathbf{\tau}_{2}-\mathbf{\theta}\Vert\leq KV(\mathcal{F}_{t-1})^{2+\delta
}V_{1}(\mathcal{F}_{t-1})d\nonumber
\end{align}
and similarly,
\begin{align}
\mathbb{E}\left[  \mathbf{1}\{v_{t}(\mathbf{\tau}_{1})<Y_{t}\leq
v_{t}(\mathbf{\theta})\}|Y_{t}|^{2+\delta}~|\mathcal{F}_{t-1}\right]   & \leq
KV(\mathcal{F}_{t-1})^{2+\delta}V_{1}(\mathcal{F}_{t-1})d\\
~\text{and \ }\mathbb{E}\left[  \mathbf{1}\{v_{t}(\mathbf{\theta})<Y_{t}\leq
v_{t}(\mathbf{\tau}_{2})\}|Y_{t}|^{2+\delta}~|\mathcal{F}_{t-1}\right]   &
\leq KV(\mathcal{F}_{t-1})^{2+\delta}V_{1}(\mathcal{F}_{t-1})d\nonumber
\end{align}
Further
\begin{equation}
\sup\limits_{\Vert\mathbf{\theta}-\mathbf{\theta}^{0}\Vert\leq d}\left\Vert
\frac{\nabla^{\prime}e_{t}(\mathbf{\theta})}{e_{t}(\mathbf{\theta})^{2}%
}\right\Vert \leq H^{2}H_{1}(\mathcal{F}_{t-1})
\end{equation}
As for $\left(  \sup\limits_{\Vert\mathbf{\tau}-\mathbf{\theta}\Vert\leq
d}\left\Vert \frac{\nabla^{\prime}e_{t}(\mathbf{\tau})}{e_{t}(\mathbf{\tau
})^{2}}-\frac{\nabla^{\prime}e_{t}(\mathbf{\theta})}{e_{t}(\mathbf{\theta
})^{2}}\right\Vert \right)  ^{2+\delta}$, we also need to combine the
following two results:
\begin{align}
\sup\limits_{\Vert\mathbf{\tau}-\mathbf{\theta}\Vert\leq d}\left\Vert
\frac{\nabla^{\prime}e_{t}(\mathbf{\tau})}{e_{t}(\mathbf{\tau})^{2}}%
-\frac{\nabla^{\prime}e_{t}(\mathbf{\theta})}{e_{t}(\mathbf{\theta})^{2}%
}\right\Vert  &  \leq\left(  2H^{3}H_{1}(\mathcal{F}_{t-1})^{2}+H^{2}%
H_{2}(\mathcal{F}_{t-1})\right)  d\\
\left(  \sup\limits_{\Vert\mathbf{\theta}-\mathbf{\theta}^{0}\Vert\leq
d}\left\Vert \frac{\nabla^{\prime}e_{t}(\mathbf{\theta})}{e_{t}(\mathbf{\theta
})^{2}}\right\Vert \right)  ^{1+\delta} &  \leq\left(  2H^{2}H_{1}%
(\mathcal{F}_{t-1})\right)  ^{1+\delta}\nonumber
\end{align}
Combining with Assumption 2(D), we thus have $\mathbb{E}[\left(  \mu
_{t}(\mathbf{\theta}d)^{(5)}\right)  ^{2+\delta}]\leq cd$, where c is a
strictly positive constant.\newline

\textbf{Sixth term:}
\begin{equation}
\mu_{t}^{(6)}(\mathbf{\theta},d)=\sup\limits_{\Vert\mathbf{\tau}%
-\mathbf{\theta}\Vert\leq d}\left\Vert \frac{\nabla^{\prime}e_{t}%
(\mathbf{\tau})}{-e_{t}(\mathbf{\tau})}-\frac{\nabla^{\prime}e_{t}%
(\mathbf{\theta})}{-e_{t}(\mathbf{\theta})}\right\Vert
\end{equation}
We have
\begin{align}
\sup\limits_{\Vert\mathbf{\tau}-\mathbf{\theta}\Vert\leq d}\left\Vert
\frac{\nabla^{\prime}e_{t}(\mathbf{\tau})}{-e_{t}(\mathbf{\tau})}-\frac
{\nabla^{\prime}e_{t}(\mathbf{\theta})}{-e_{t}(\mathbf{\theta})}\right\Vert
&  \leq\left(  H^{2}H_{1}(\mathcal{F}_{t-1})^{2}+HH_{2}(\mathcal{F}%
_{t-1})\right)  d\\
\left(  \sup\limits_{\Vert\mathbf{\tau}-\mathbf{\theta}\Vert\leq d}\left\Vert
\frac{\nabla^{\prime}e_{t}(\mathbf{\tau})}{-e_{t}(\mathbf{\tau})}-\frac
{\nabla^{\prime}e_{t}(\mathbf{\theta})}{-e_{t}(\mathbf{\theta})}\right\Vert
\right)  ^{1+\delta} &  \leq\left(  2HH_{1}(\mathcal{F}_{t-1})\right)
^{1+\delta}\nonumber
\end{align}
Combining with Assumption 2(D), we thus have $\mathbb{E}[\left(  \mu
_{t}(\mathbf{\theta},d)^{(6)}\right)  ^{2+\delta}]\leq cd$, where c is a
strictly positive constant. Thus $\mathbb{E}[\mu_{t}(\mathbf{\theta}%
,d)^{(i)}]^{2+\delta}\leq cd$, $\forall i=1,2,\cdots,6,$ proving the lemma.
\end{proof}

\bigskip

\begin{lemma}
\label{lemmaN4} Under Assumptions 1-2, Assumption N4 of Weiss (1991) holds:
$E\Vert g_{t}(\mathbf{\theta}^{0})\Vert^{2}\leq M$, for all t and some $M>0$.
\end{lemma}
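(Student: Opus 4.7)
The plan is to bound $\|g_t(\theta^0)\|$ pointwise using the envelopes from Assumption 2(C), exploiting the sign restrictions, and then apply the moment conditions in 2(D) and 2(B). First I would split $g_t(\theta^0)=T_{1,t}+T_{2,t}$ along the two lines of its definition in Assumption 2(F). For $T_{1,t}$, the scalar factor $\frac{1}{-e_t(\theta^0)}\bigl(\tfrac{1}{\alpha}\mathbf{1}\{Y_t\leq v_t\}-1\bigr)$ is uniformly bounded in absolute value by $K(1/\alpha+1)$ using 2(C)(i), while $\|\nabla v_t(\theta^0)\|\leq V_1(\mathcal{F}_{t-1})$ by 2(C)(ii); hence $\|T_{1,t}\|^2\leq C\,V_1(\mathcal{F}_{t-1})^2$.

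For $T_{2,t}$, the key observation that avoids introducing the envelope $V(\mathcal{F}_{t-1})$ (which would otherwise force us to bound $\mathbb{E}[V^2H_1^2]$, awkward in view of the extra $V_1$-factor in 2(D)(ii)) is to split the bracket into an indicator piece and a linear piece and use sign restrictions on $v_t,e_t$. Specifically, since $v_t(\theta^0)\leq 0$ by 2(A)(ii), on the event $\{Y_t\leq v_t\}$ we have $|Y_t|\geq|v_t|$ and so $|v_t-Y_t|\leq|Y_t|$, giving
\[
\left|\nabla e_t'\,\tfrac{1}{e_t^{2}\alpha}\mathbf{1}\{Y_t\leq v_t\}(v_t-Y_t)\right|\leq \tfrac{1}{\alpha}H_1(\mathcal{F}_{t-1})H^{2}|Y_t|.
\]
For the remaining piece, $e_t(\theta^0)\leq v_t(\theta^0)\leq 0$ implies $0\leq v_t-e_t\leq |e_t|$, so $|v_t-e_t|/e_t^{2}\leq 1/|e_t|\leq H$, whence $\bigl|\nabla e_t'\,(v_t-e_t)/e_t^{2}\bigr|\leq H_1(\mathcal{F}_{t-1})H$. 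Combining, $\|T_{2,t}\|^{2}\leq C\,H_1(\mathcal{F}_{t-1})^{2}\bigl(Y_t^{2}+1\bigr)$, and so
\[
\mathbb{E}\|g_t(\theta^0)\|^{2}\leq C\Bigl(\mathbb{E}[V_1(\mathcal{F}_{t-1})^{2}]+\mathbb{E}[H_1(\mathcal{F}_{t-1})^{2}]+\mathbb{E}[H_1(\mathcal{F}_{t-1})^{2}Y_t^{2}]\Bigr).
\]

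The first two expectations are bounded uniformly in $t$ by Lyapunov's inequality from 2(D)(i), since $\mathbb{E}[V_1^{3+\delta}],\mathbb{E}[H_1^{3+\delta}]\leq K$. For the cross term I would write $H_1^{2}Y_t^{2}=\bigl(H_1^{3+\delta}|Y_t|^{2+\delta}\bigr)^{2/(3+\delta)}\cdot|Y_t|^{2/(3+\delta)}$ and apply Hölder's inequality with conjugate exponents $p=(3+\delta)/2$ and $q=(3+\delta)/(1+\delta)$, giving
\[
\mathbb{E}[H_1^{2}Y_t^{2}]\leq\bigl(\mathbb{E}[H_1^{3+\delta}|Y_t|^{2+\delta}]\bigr)^{2/(3+\delta)}\bigl(\mathbb{E}|Y_t|^{2/(1+\delta)}\bigr)^{(1+\delta)/(3+\delta)};
\]
the first factor is bounded by 2(D)(iii) and the second by 2(B)(ii) together with $2/(1+\delta)<4+\delta$ for $\delta\in(0,1)$. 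Since none of these bounds depends on $t$, the common bound $M$ exists.

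The main obstacle is the temptation to pass to the naive envelope $|v_t(\theta^0)|\leq V(\mathcal{F}_{t-1})$, which would then require a bound on $\mathbb{E}[V^{2+\delta}H_1^{2+\delta}]$ that does not cleanly follow from 2(D)(ii) (there is an unwanted $V_1$-factor and no standalone moment on $V$). The proof hinges on recognizing that the sign structure $Y_t\leq v_t\leq 0$ and $e_t\leq v_t\leq 0$ lets one replace $|v_t|$ by $|Y_t|$ inside the indicator, and by a constant times $1/|e_t|$ in the non-indicator piece, eliminating $V(\mathcal{F}_{t-1})$ from the bound altogether. After that step the conclusion is a short Hölder argument.
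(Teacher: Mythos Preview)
Your proof is correct and follows essentially the same route as the paper: both exploit the sign structure $e_t(\theta^0)\leq v_t(\theta^0)\leq 0$ to eliminate the envelope $V(\mathcal{F}_{t-1})$ and reduce the bound to $\mathbb{E}[V_1^2]$, $\mathbb{E}[H_1^2]$, and $\mathbb{E}[H_1^2 Y_t^2]$. The only cosmetic difference is that the paper splits $g_t(\theta^0)$ into four pieces (separating $v_t(\tfrac{1}{\alpha}\mathbf{1}-1)$, $\tfrac{1}{\alpha}\mathbf{1}\,Y_t$, and $e_t$ inside the ES bracket, using $|v_t/e_t|\leq 1$) rather than your two-piece grouping $(v_t-Y_t)\mathbf{1}$ and $(v_t-e_t)$, and the paper does not spell out the H\"older step for $\mathbb{E}[H_1^2 Y_t^2]$ that you provide explicitly.
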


\bigskip

\begin{proof}
[Proof of Lemma \ref{lemmaN4}]%
\begin{align}
\mathbb{E}\Vert g_{t}(\mathbf{\theta}^{0})\Vert^{2}\leq &  4\left\{
\mathbb{E}\left\Vert \frac{\nabla^{\prime}v_{t}(\mathbf{\theta}^{0})}%
{-e_{t}(\mathbf{\theta}^{0})}\left(  \frac{1}{\alpha}\mathbf{1}\left\{
Y_{t}\leq v_{t}(\mathbf{\theta}^{0})\right\}  -1\right)  \right\Vert
^{2}\right. \\
&  +\mathbb{E}\left\Vert \frac{v_{t}(\mathbf{\theta}^{0})\nabla^{\prime}%
e_{t}(\mathbf{\theta}^{0})}{e_{t}(\mathbf{\theta}^{0})^{2}}\left(  \frac
{1}{\alpha}\mathbf{1}\left\{  Y_{t}\leq v_{t}(\mathbf{\theta}^{0})\right\}
-1\right)  \right\Vert ^{2}+\mathbb{E}\left\Vert \frac{\nabla^{\prime}%
e_{t}(\mathbf{\theta}^{0})}{e_{t}(\mathbf{\theta}^{0})}\right\Vert
^{2}\nonumber\\
&  \left.  +\mathbb{E}\left\Vert \frac{\nabla^{\prime}e_{t}(\mathbf{\theta
}^{0})}{e_{t}(\mathbf{\theta}^{0})^{2}}\frac{1}{\alpha}\mathbf{1}\left\{
Y_{t}\leq v_{t}(\mathbf{\theta}^{0})\right\}  Y_{t}\right\Vert ^{2}\right\}
\nonumber\\
\leq &  4\left\{  \mathbb{E}\left[  \left(  \frac{1}{\alpha}+1\right)
^{2}H^{2}V_{1}(\mathcal{F}_{t-1})^{2}\right]  +\mathbb{E}\left[  \left(
\frac{1}{\alpha}+1\right)  ^{2}H^{2}H_{1}(\mathcal{F}_{t-1})^{2}\right]
\right. \nonumber\\
&  \left.  +\frac{1}{\alpha^{2}}H^{4}\mathbb{E}[H_{1}(\mathcal{F}_{t-1}%
)^{2}Y_{t}^{2}]+\mathbb{E}\left[  H^{2}H_{1}(\mathcal{F}_{t-1})^{2}\right]
\right\} \nonumber\\
\leq &  M\nonumber
\end{align}
where $M$ is some finite constant, and the second inequality follows using
Assumptions 2(C) and 2(D).
\end{proof}

\bigskip

\bigskip

\pagebreak%

\def\baselinestretch{1.0}\small\normalsize

{\LARGE Appendix SA.2:\ Additional tables}

\bigskip

\begin{center}
\bigskip

\textbf{Table S1: Finite-sample performance of (Q)MLE}

\bigskip%

\begin{tabular}
[c]{cccccccc}\hline
& \multicolumn{3}{c}{$T=2500$} &  & \multicolumn{3}{c}{$T=5000$}%
\\\cline{2-4}\cline{6-8}
& $\omega$ & $\beta$ & $\gamma$ &  & $\omega$ & $\beta$ & $\gamma
$\\\cline{2-4}\cline{6-8}
&  &  &  &  &  &  & \\
\multicolumn{4}{l}{\textbf{Panel A: N(0,1) innovations}} &
\multicolumn{1}{l}{} &  &  & \\
\multicolumn{1}{l}{True} & 0.050 & 0.950 & 0.050 &  & 0.050 & 0.950 & 0.050\\
\multicolumn{1}{l}{Median} & 0.053 & 0.897 & 0.050 &  & 0.051 & 0.899 &
0.050\\
\multicolumn{1}{l}{Avg bias} & 0.011 & (0.011) & 0.000 &  & 0.005 & (0.005) &
0.000\\
\multicolumn{1}{l}{St dev} & 0.056 & 0.064 & 0.013 &  & 0.023 & 0.029 &
0.009\\
\multicolumn{1}{l}{Coverage} & 0.936 & 0.930 & 0.928 &  & 0.936 & 0.933 &
0.937\\\hline
&  &  &  &  &  &  & \\
\multicolumn{4}{l}{\textbf{Panel B: Skew t (5,-0.5) innovations}} &
\multicolumn{1}{l}{} &  &  & \\
\multicolumn{1}{l}{True} & 0.050 & 0.950 & 0.050 &  & 0.050 & 0.950 & 0.050\\
\multicolumn{1}{l}{Median} & 0.052 & 0.895 & 0.049 &  & 0.052 & 0.897 &
0.050\\
\multicolumn{1}{l}{Avg bias} & 0.017 & (0.023) & 0.005 &  & 0.006 & (0.008) &
0.002\\
\multicolumn{1}{l}{St dev} & 0.077 & 0.095 & 0.028 &  & 0.026 & 0.037 &
0.017\\
\multicolumn{1}{l}{Coverage} & 0.899 & 0.907 & 0.897 &  & 0.913 & 0.907 &
0.903\\\hline
\end{tabular}

\end{center}

\bigskip

\textit{Notes:} This table presents results from 1000 replications of the
estimation of the parameters of a GARCH(1,1)\ model, using the Normal
likelihood. In Panel A the innovations are standard Normal, and so estimation
is then ML. In Panel B the innovations are standardized skew $t,$ and so
estimation is QML. Details are described in\ Section \ref{sSIMULATION} of the
main paper. The top row of each panel presents the true values of the
parameters. The second, third, and fourth rows present the median estimated
parameters, the average bias, and the standard deviation (across simulations)
of the estimated parameters. The last row of each panel presents the coverage
rates for 95\% confidence intervals constructed using estimated standard errors.

\pagebreak

\begin{center}
\textbf{Table S2: Simulation results for Normal innovations, }

\textbf{estimation by CAViaR}

\bigskip%

\begin{tabular}
[c]{rccccccc}\hline
& \multicolumn{3}{c}{$T=2500$} &  & \multicolumn{3}{c}{$T=5000$}%
\\\cline{2-4}\cline{6-8}
& $\beta$ & $\gamma$ & $a_{\alpha}$ &  & $\beta$ & $\gamma$ & $a_{\alpha}%
$\\\cline{2-4}\cline{6-8}
&  &  &  &  &  &  & \\
& \multicolumn{7}{c}{$\alpha=0.01$}\\
\multicolumn{1}{l}{True} & 0.900 & 0.050 & -2.326 &  & 0.900 & 0.050 &
-2.326\\
\multicolumn{1}{l}{Median} & 0.901 & 0.048 & -2.275 &  & 0.899 & 0.048 &
-2.347\\
\multicolumn{1}{l}{Avg bias} & -0.017 & 0.012 & -0.120 &  & -0.011 & 0.006 &
-0.095\\
\multicolumn{1}{l}{St dev} & 0.079 & 0.066 & 0.957 &  & 0.051 & 0.034 &
0.718\\
\multicolumn{1}{l}{Coverage} & 0.881 & 0.874 & 0.907 &  & 0.892 & 0.886 &
0.905\\\hline
\multicolumn{1}{l}{} &  &  &  &  &  &  & \\
\multicolumn{1}{l}{} & \multicolumn{7}{c}{$\alpha=0.025$}\\
\multicolumn{1}{l}{True} & 0.900 & 0.050 & -1.960 &  & 0.900 & 0.050 &
-1.960\\
\multicolumn{1}{l}{Median} & 0.898 & 0.047 & -1.953 &  & 0.896 & 0.047 &
-2.009\\
\multicolumn{1}{l}{Avg bias} & -0.018 & 0.005 & -0.136 &  & -0.012 & 0.002 &
-0.110\\
\multicolumn{1}{l}{St dev} & 0.068 & 0.038 & 0.728 &  & 0.052 & 0.023 &
0.566\\
\multicolumn{1}{l}{Coverage} & 0.906 & 0.879 & 0.934 &  & 0.913 & 0.892 &
0.918\\\hline
\multicolumn{1}{l}{} &  &  &  &  &  &  & \\
\multicolumn{1}{l}{} & \multicolumn{7}{c}{$\alpha=0.05$}\\
\multicolumn{1}{l}{True} & 0.900 & 0.050 & -1.645 &  & 0.900 & 0.050 &
-1.645\\
\multicolumn{1}{l}{Median} & 0.901 & 0.047 & -1.639 &  & 0.899 & 0.049 &
-1.667\\
\multicolumn{1}{l}{Avg bias} & -0.014 & 0.005 & -0.085 &  & -0.009 & 0.002 &
-0.070\\
\multicolumn{1}{l}{St dev} & 0.068 & 0.037 & 0.597 &  & 0.045 & 0.023 &
0.436\\
\multicolumn{1}{l}{Coverage} & 0.909 & 0.884 & 0.930 &  & 0.918 & 0.900 &
0.935\\\hline
\multicolumn{1}{l}{} &  &  &  &  &  &  & \\
\multicolumn{1}{l}{} & \multicolumn{7}{c}{$\alpha=0.10$}\\
\multicolumn{1}{l}{True} & 0.900 & 0.050 & -1.282 &  & 0.900 & 0.050 &
-1.282\\
\multicolumn{1}{l}{Median} & 0.898 & 0.047 & -1.291 &  & 0.898 & 0.048 &
-1.289\\
\multicolumn{1}{l}{Avg bias} & -0.016 & 0.006 & -0.076 &  & -0.010 & 0.003 &
-0.055\\
\multicolumn{1}{l}{St dev} & 0.069 & 0.041 & 0.482 &  & 0.047 & 0.025 &
0.364\\
\multicolumn{1}{l}{Coverage} & 0.916 & 0.883 & 0.933 &  & 0.921 & 0.896 &
0.937\\\hline
\multicolumn{1}{l}{} &  &  &  &  &  &  & \\
\multicolumn{1}{l}{} & \multicolumn{7}{c}{$\alpha=0.20$}\\
\multicolumn{1}{l}{True} & 0.900 & 0.050 & -0.842 &  & 0.900 & 0.050 &
-0.842\\
\multicolumn{1}{l}{Median} & 0.898 & 0.048 & -0.848 &  & 0.899 & 0.048 &
-0.840\\
\multicolumn{1}{l}{Avg bias} & -0.023 & 0.022 & -0.058 &  & -0.016 & 0.007 &
-0.049\\
\multicolumn{1}{l}{St dev} & 0.091 & 0.107 & 0.391 &  & 0.063 & 0.044 &
0.304\\
\multicolumn{1}{l}{Coverage} & 0.914 & 0.876 & 0.931 &  & 0.929 & 0.901 &
0.940\\\hline
\end{tabular}

\end{center}

\textit{Notes:} This table presents results from 1000 replications of the
estimation of VaR from a GARCH(1,1)\ DGP with standard Normal innovations.
Details are described in\ Section \ref{sSIMULATION} of the main paper. The top
row of each panel presents the true values of the parameters. The second,
third, and fourth rows present the median estimated parameters, the average
bias, and the standard deviation (across simulations) of the estimated
parameters. The last row of each panel presents the coverage rates for 95\%
confidence intervals constructed using estimated standard errors.

\bigskip

\bigskip\pagebreak

\begin{center}
\textbf{Table S3: Simulation results for skew t innovations, }

\textbf{estimation by CAViaR}

\bigskip%

\begin{tabular}
[c]{rccccccc}\hline
& \multicolumn{3}{c}{$T=2500$} &  & \multicolumn{3}{c}{$T=5000$}%
\\\cline{2-4}\cline{6-8}
& $\beta$ & $\gamma$ & $a_{\alpha}$ &  & $\beta$ & $\gamma$ & $a_{\alpha}%
$\\\cline{2-4}\cline{6-8}
&  &  &  &  &  &  & \\
& \multicolumn{7}{c}{$\alpha=0.01$}\\
\multicolumn{1}{l}{True} & 0.900 & 0.050 & -3.290 &  & 0.900 & 0.050 &
\multicolumn{1}{r}{-3.290}\\
\multicolumn{1}{l}{Median} & 0.898 & 0.045 & -3.272 &  & 0.899 & 0.045 &
\multicolumn{1}{r}{-3.306}\\
\multicolumn{1}{l}{Avg bias} & -0.041 & 0.022 & -0.355 &  & -0.027 & 0.008 &
\multicolumn{1}{r}{-0.306}\\
\multicolumn{1}{l}{St dev} & 0.142 & 0.097 & 1.928 &  & 0.103 & 0.044 &
\multicolumn{1}{r}{1.546}\\
\multicolumn{1}{l}{Coverage} & 0.771 & 0.805 & 0.827 &  & 0.785 & 0.808 &
\multicolumn{1}{r}{0.823}\\\hline
\multicolumn{1}{l}{} &  &  &  &  &  &  & \\
\multicolumn{1}{l}{} & \multicolumn{7}{c}{$\alpha=0.025$}\\
\multicolumn{1}{l}{True} & 0.900 & 0.050 & -2.408 &  & 0.900 & 0.050 &
\multicolumn{1}{r}{-2.408}\\
\multicolumn{1}{l}{Median} & 0.899 & 0.047 & -2.371 &  & 0.898 & 0.049 &
\multicolumn{1}{r}{-2.414}\\
\multicolumn{1}{l}{Avg bias} & -0.026 & 0.012 & -0.190 &  & -0.016 & 0.004 &
\multicolumn{1}{r}{-0.144}\\
\multicolumn{1}{l}{St dev} & 0.103 & 0.067 & 1.135 &  & 0.070 & 0.033 &
\multicolumn{1}{r}{0.862}\\
\multicolumn{1}{l}{Coverage} & 0.832 & 0.841 & 0.877 &  & 0.830 & 0.862 &
\multicolumn{1}{r}{0.859}\\\hline
\multicolumn{1}{l}{} &  &  &  &  &  &  & \\
\multicolumn{1}{l}{} & \multicolumn{7}{c}{$\alpha=0.05$}\\
\multicolumn{1}{l}{True} & 0.900 & 0.050 & -1.800 &  & 0.900 & 0.050 &
\multicolumn{1}{r}{-1.800}\\
\multicolumn{1}{l}{Median} & 0.899 & 0.047 & -1.780 &  & 0.899 & 0.049 &
\multicolumn{1}{r}{-1.792}\\
\multicolumn{1}{l}{Avg bias} & -0.023 & 0.008 & -0.146 &  & -0.013 & 0.004 &
\multicolumn{1}{r}{-0.087}\\
\multicolumn{1}{l}{St dev} & 0.092 & 0.060 & 0.782 &  & 0.057 & 0.028 &
\multicolumn{1}{r}{0.563}\\
\multicolumn{1}{l}{Coverage} & 0.863 & 0.861 & 0.892 &  & 0.883 & 0.871 &
\multicolumn{1}{r}{0.890}\\\hline
\multicolumn{1}{l}{} &  &  &  &  &  &  & \\
\multicolumn{1}{l}{} & \multicolumn{7}{c}{$\alpha=0.10$}\\
\multicolumn{1}{l}{True} & 0.900 & 0.050 & -1.223 &  & 0.900 & 0.050 &
\multicolumn{1}{r}{-1.223}\\
\multicolumn{1}{l}{Median} & 0.900 & 0.049 & -1.205 &  & 0.900 & 0.049 &
\multicolumn{1}{r}{-1.217}\\
\multicolumn{1}{l}{Avg bias} & -0.019 & 0.008 & -0.074 &  & -0.010 & 0.004 &
\multicolumn{1}{r}{-0.043}\\
\multicolumn{1}{l}{St dev} & 0.080 & 0.050 & 0.495 &  & 0.050 & 0.027 &
\multicolumn{1}{r}{0.356}\\
\multicolumn{1}{l}{Coverage} & 0.895 & 0.892 & 0.919 &  & 0.892 & 0.905 &
\multicolumn{1}{r}{0.910}\\\hline
\multicolumn{1}{l}{} &  &  &  &  &  &  & \\
\multicolumn{1}{l}{} & \multicolumn{7}{c}{$\alpha=0.20$}\\
\multicolumn{1}{l}{True} & 0.900 & 0.050 & -0.652 &  & 0.900 & 0.050 &
\multicolumn{1}{r}{-0.652}\\
\multicolumn{1}{l}{Median} & 0.903 & 0.051 & -0.619 &  & 0.902 & 0.051 &
\multicolumn{1}{r}{-0.636}\\
\multicolumn{1}{l}{Avg bias} & -0.027 & 0.026 & -0.035 &  & -0.016 & 0.009 &
\multicolumn{1}{r}{-0.028}\\
\multicolumn{1}{l}{St dev} & 0.122 & 0.109 & 0.353 &  & 0.084 & 0.042 &
\multicolumn{1}{r}{0.271}\\
\multicolumn{1}{l}{Coverage} & 0.867 & 0.887 & 0.897 &  & 0.890 & 0.889 &
\multicolumn{1}{r}{0.916}\\\hline
\end{tabular}

\end{center}

\textit{Notes:} This table presents results from 1000 replications of the
estimation of VaR from a GARCH(1,1)\ DGP with skew $t$ innovations. Details
are described in\ Section \ref{sSIMULATION} of the main paper. The top row of
each panel presents the true values of the parameters. The second, third, and
fourth rows present the median estimated parameters, the average bias, and the
standard deviation (across simulations) of the estimated parameters. The last
row of each panel presents the coverage rates for 95\% confidence intervals
constructed using estimated standard errors.\pagebreak

\begin{center}
\textbf{Table S4: Diebold-Mariano t-statistics on average out-of-sample loss
differences}

\textbf{for the DJIA, NIKKEI and\ FTSE100 (alpha=0.05)}

\bigskip%

\begin{tabular}
[c]{rcccccccccc}\hline
& RW125 & RW250 & RW500 & G-N & G-Skt & G-EDF & FZ-2F & FZ-1F & G-FZ &
Hybrid\\\cline{2-11}
&  &  &  &  &  &  &  &  &  & \\
\multicolumn{1}{l}{} & \multicolumn{10}{c}{\textbf{Panel A: DJIA}}\\
\multicolumn{1}{l}{RW125} &  & -2.547 & -4.234 & 3.189 & 3.812 & 3.793 &
3.305 & 4.368 & 3.475 & 3.853\\
\multicolumn{1}{l}{RW250} & 2.547 &  & -4.145 & 4.028 & 4.579 & 4.595 &
4.601 & 5.358 & 4.529 & 4.598\\
\multicolumn{1}{l}{RW500} & 4.234 & 4.145 &  & 5.328 & 5.802 & 5.825 & 5.903 &
6.553 & 5.868 & 5.901\\\hline
\multicolumn{1}{l}{G-N} & -3.189 & -4.028 & -5.328 &  & 3.312 & 2.773 &
0.818 & 2.171 & 1.811 & 1.769\\
\multicolumn{1}{l}{G-Skt} & -3.812 & -4.579 & -5.802 & -3.312 &  & 0.391 &
-0.143 & 1.430 & -0.160 & -0.022\\
\multicolumn{1}{l}{G-EDF} & -3.793 & -4.595 & -5.825 & -2.773 & -0.391 &  &
-0.187 & 1.434 & -0.367 & -0.174\\\hline
\multicolumn{1}{l}{FZ-2F} & -3.305 & -4.601 & -5.903 & -0.818 & 0.143 &
0.187 &  & 0.142 & 0.028 & 1.179\\
\multicolumn{1}{l}{FZ-1F} & -4.022 & -4.738 & -5.750 & -0.965 & 0.004 &
0.038 & -0.142 &  & -1.597 & -1.402\\
\multicolumn{1}{l}{G-FZ} & -3.475 & -4.529 & -5.868 & -1.811 & 0.275 & 0.367 &
-0.028 & 1.597 &  & 0.086\\
\multicolumn{1}{l}{Hybrid} & -3.826 & -4.506 & -5.710 & -2.426 & -1.425 &
-1.430 & -1.179 & 1.402 & -0.086 & \\\hline
\multicolumn{1}{l}{} &  &  &  &  &  &  &  &  &  & \\
\multicolumn{1}{l}{} & \multicolumn{10}{c}{\textbf{Panel B: NIKKEI}}\\
\multicolumn{1}{l}{RW125} &  & -0.245 & -1.181 & 4.015 & 3.993 & 4.030 &
3.804 & 3.464 & 3.933 & 4.166\\
\multicolumn{1}{l}{RW250} & 0.245 &  & -1.418 & 4.460 & 4.473 & 4.519 &
4.075 & 3.887 & 4.437 & 4.661\\
\multicolumn{1}{l}{RW500} & 1.181 & 1.418 &  & 4.412 & 4.433 & 4.476 & 4.348 &
3.965 & 4.431 & 4.582\\\hline
\multicolumn{1}{l}{G-N} & -4.015 & -4.460 & -4.412 &  & 1.180 & 2.177 &
-1.877 & -1.271 & 1.251 & 0.419\\
\multicolumn{1}{l}{G-Skt} & -3.993 & -4.473 & -4.433 & -1.180 &  & 1.831 &
-1.931 & -1.389 & 0.613 & 0.255\\
\multicolumn{1}{l}{G-EDF} & -4.030 & -4.519 & -4.476 & -2.177 & -1.831 &  &
-2.031 & -1.520 & -0.901 & 0.075\\\hline
\multicolumn{1}{l}{FZ-2F} & -3.804 & -4.075 & -4.348 & 1.877 & 1.931 & 2.031 &
& 1.135 & 1.950 & 2.495\\
\multicolumn{1}{l}{FZ-1F} & -3.250 & -3.629 & -3.659 & 1.195 & 1.319 & 1.463 &
-1.135 &  & 1.426 & 2.741\\
\multicolumn{1}{l}{G-FZ} & -3.933 & -4.437 & -4.431 & -1.251 & -0.640 &
0.901 & -1.950 & -1.426 &  & 0.171\\
\multicolumn{1}{l}{Hybrid} & -3.998 & -4.500 & -4.364 & -0.565 & -0.410 &
-0.226 & -2.495 & -2.741 & -0.171 & \\\hline
\end{tabular}

\bigskip
\end{center}

\textit{Table continued on next page.}\pagebreak

\begin{center}
\textbf{Table S4: Diebold-Mariano t-statistics on average out-of-sample loss
differences}

\textbf{for the DJIA, NIKKEI and\ FTSE100 (alpha=0.05)}

\bigskip%

\begin{tabular}
[c]{rcccccccccc}\hline
& RW125 & RW250 & RW500 & G-N & G-Skt & G-EDF & FZ-2F & FZ-1F & G-FZ &
Hybrid\\\cline{2-11}
&  &  &  &  &  &  &  &  &  & \\
\multicolumn{1}{l}{} & \multicolumn{10}{c}{\textbf{Panel C: FTSE}}\\
\multicolumn{1}{l}{RW125} &  & -2.707 & -3.955 & 3.723 & 3.988 & 3.846 &
-3.329 & 3.623 & 3.651 & 3.398\\
\multicolumn{1}{l}{RW250} & 2.707 &  & -3.245 & 4.784 & 5.036 & 4.898 &
-2.188 & 4.724 & 4.764 & 4.486\\
\multicolumn{1}{l}{RW500} & 3.955 & 3.245 &  & 5.470 & 5.685 & 5.570 &
-0.834 & 5.479 & 5.513 & 5.321\\\hline
\multicolumn{1}{l}{G-N} & -3.723 & -4.784 & -5.470 &  & 4.494 & 3.434 &
-6.805 & 0.406 & 1.526 & 0.796\\
\multicolumn{1}{l}{G-Skt} & -3.988 & -5.036 & -5.685 & -4.494 &  & -4.167 &
-6.898 & -0.347 & -0.671 & 0.172\\
\multicolumn{1}{l}{G-EDF} & -3.846 & -4.898 & -5.570 & -3.434 & 4.167 &  &
-6.847 & 0.065 & 0.569 & 0.519\\\hline
\multicolumn{1}{l}{FZ-2F} & 3.329 & 2.188 & 0.834 & 6.805 & 6.898 & 6.847 &  &
6.187 & 6.920 & 7.263\\
\multicolumn{1}{l}{FZ-1F} & -3.831 & -4.853 & -5.382 & -0.247 & 0.355 &
0.020 & -6.187 &  & 0.125 & 0.760\\
\multicolumn{1}{l}{G-FZ} & -3.651 & -4.764 & -5.513 & -1.526 & 0.710 &
-0.569 & -6.920 & -0.125 &  & 0.417\\
\multicolumn{1}{l}{Hybrid} & -3.208 & -4.242 & -5.027 & -0.643 & 0.008 &
-0.355 & -7.263 & -0.760 & -0.417 & \\\hline
\end{tabular}

\end{center}

\bigskip

\textit{Notes:} This table presents $t$-statistics from Diebold-Mariano tests
comparing the average losses, using the FZ0 loss function, over the
out-of-sample period from January 2000 to December 2016, for ten different
forecasting models. A positive value indicates that the row model has higher
average loss than the column model. Values greater than 1.96 in absolute value
indicate that the average loss difference is significantly different from zero
at the 95\% confidence level. Values along the main diagonal are all
identically zero and are omitted for interpretability. The first three rows
correspond to rolling window forecasts, the next three rows correspond
to\ GARCH forecasts based on different models for the standardized residuals,
and the last four rows correspond to models introduced in Section
\ref{sMODELS} of the main paper.\pagebreak\bigskip%

\begin{landscape}%

\begin{center}
\bigskip

\textbf{Table S5: Out-of-sample average losses and goodness-of-fit tests
(alpha=0.025)}

\bigskip%

\begin{tabular}
[c]{rccccllllllllll}\hline
&  &  &  &  &  &  &  &  &  &  &  &  &  & \\
& \multicolumn{4}{c}{\textit{Average loss}} &  &
\multicolumn{4}{c}{\textit{GoF p-values: VaR}} &  &
\multicolumn{4}{c}{\textit{GoF p-values:\ ES}}\\\cline{2-5}\cline{7-10}%
\cline{12-15}
& \textbf{S\&P} & \textbf{DJIA} & \textbf{NIK} & \textbf{FTSE} &  &
\multicolumn{1}{c}{\textbf{S\&P}} & \multicolumn{1}{c}{\textbf{DJIA}} &
\multicolumn{1}{c}{\textbf{NIK}} & \multicolumn{1}{c}{\textbf{FTSE}} &  &
\multicolumn{1}{c}{\textbf{S\&P}} & \multicolumn{1}{c}{\textbf{DJIA}} &
\multicolumn{1}{c}{\textbf{NIK}} & \multicolumn{1}{c}{\textbf{FTSE}%
}\\\cline{2-15}
&  &  &  &  &  &  &  &  &  &  &  &  &  & \\
\multicolumn{1}{l}{RW-125} & 1.119 & 1.088 & 1.525 & 1.166 &  &
\multicolumn{1}{c}{0.022} & \multicolumn{1}{c}{0.003} &
\multicolumn{1}{c}{0.000} & \multicolumn{1}{c}{0.000} &  &
\multicolumn{1}{c}{0.009} & \multicolumn{1}{c}{0.004} &
\multicolumn{1}{c}{0.001} & \multicolumn{1}{c}{0.001}\\
\multicolumn{1}{l}{RW-250} & 1.164 & 1.117 & 1.525 & 1.209 &  &
\multicolumn{1}{c}{0.005} & \multicolumn{1}{c}{0.007} &
\multicolumn{1}{c}{0.002} & \multicolumn{1}{c}{0.000} &  &
\multicolumn{1}{c}{0.023} & \multicolumn{1}{c}{0.039} &
\multicolumn{1}{c}{0.010} & \multicolumn{1}{c}{0.005}\\
\multicolumn{1}{l}{RW-500} & 1.245 & 1.187 & 1.561 & 1.294 &  &
\multicolumn{1}{c}{0.001} & \multicolumn{1}{c}{0.000} &
\multicolumn{1}{c}{0.004} & \multicolumn{1}{c}{0.000} &  &
\multicolumn{1}{c}{0.019} & \multicolumn{1}{c}{0.011} &
\multicolumn{1}{c}{0.007} & \multicolumn{1}{c}{0.000}\\
\multicolumn{1}{l}{GCH-N} & 1.089 & 1.016 & 1.341 & 1.053 &  &
\multicolumn{1}{c}{0.000} & \multicolumn{1}{c}{0.002} &
\multicolumn{1}{c}{\textbf{0.172}} & \multicolumn{1}{c}{0.000} &  &
\multicolumn{1}{c}{0.000} & \multicolumn{1}{c}{0.000} &
\multicolumn{1}{c}{0.048} & \multicolumn{1}{c}{0.000}\\
\multicolumn{1}{l}{GCH-Skt} & 1.043 & 0.975 & \textbf{1.328} & \textbf{1.025}
&  & \multicolumn{1}{c}{0.005} & \multicolumn{1}{c}{\textit{0.057}} &
\multicolumn{1}{c}{\textbf{0.789}} & \multicolumn{1}{c}{0.000} &  &
\multicolumn{1}{c}{0.010} & \multicolumn{1}{c}{\textit{0.076}} &
\multicolumn{1}{c}{\textbf{0.736}} & \multicolumn{1}{c}{0.001}\\
\multicolumn{1}{l}{GCH-EDF} & \textit{1.028} & \textit{0.970} & 1.329 &
1.040 &  & \multicolumn{1}{c}{\textbf{0.164}} &
\multicolumn{1}{c}{\textbf{0.149}} & \multicolumn{1}{c}{\textbf{0.789}} &
\multicolumn{1}{c}{0.000} &  & \multicolumn{1}{c}{\textbf{0.237}} &
\multicolumn{1}{c}{\textbf{0.379}} & \multicolumn{1}{c}{\textbf{0.588}} &
\multicolumn{1}{c}{0.000}\\
\multicolumn{1}{l}{FZ-2F} & 1.041 & 0.998 & 4.037 & 2.445 &  &
\multicolumn{1}{c}{0.000} & \multicolumn{1}{c}{\textbf{0.117}} &
\multicolumn{1}{c}{0.000} & \multicolumn{1}{c}{0.000} &  &
\multicolumn{1}{c}{0.001} & \multicolumn{1}{c}{\textbf{0.341}} &
\multicolumn{1}{c}{0.000} & \multicolumn{1}{c}{0.000}\\
\multicolumn{1}{l}{FZ-1F} & 1.032 & 1.004 & 1.415 & \textit{1.039} &  &
\multicolumn{1}{c}{\textbf{0.343}} & \multicolumn{1}{c}{\textbf{0.314}} &
\multicolumn{1}{c}{0.043} & \multicolumn{1}{c}{0.028} &  &
\multicolumn{1}{c}{\textbf{0.393}} & \multicolumn{1}{c}{\textbf{0.334}} &
\multicolumn{1}{c}{0.047} & \multicolumn{1}{c}{0.045}\\
\multicolumn{1}{l}{GCH-FZ} & \textbf{1.020} & \textbf{0.951} & \textit{1.328}
& 1.059 &  & \multicolumn{1}{c}{\textit{0.095}} &
\multicolumn{1}{c}{\textbf{0.358}} & \multicolumn{1}{c}{\textbf{0.608}} &
\multicolumn{1}{c}{0.000} &  & \multicolumn{1}{c}{\textbf{0.188}} &
\multicolumn{1}{c}{\textbf{0.419}} & \multicolumn{1}{c}{\textbf{0.473}} &
\multicolumn{1}{c}{0.000}\\
\multicolumn{1}{l}{Hybrid} & 1.034 & 1.018 & 1.341 & 1.056 &  &
\multicolumn{1}{c}{0.002} & \multicolumn{1}{c}{\textit{0.082}} &
\multicolumn{1}{c}{\textbf{0.700}} & \multicolumn{1}{c}{0.000} &  &
\multicolumn{1}{c}{0.007} & \multicolumn{1}{c}{\textit{0.064}} &
\multicolumn{1}{c}{\textbf{0.629}} & \multicolumn{1}{c}{0.000}\\\hline
\end{tabular}

\end{center}

\bigskip

\textit{Notes:} The left panel of this table presents the average losses,
using the FZ0 loss function, for four daily equity return series, over the
out-of-sample period from January 2000 to December 2016, for ten different
forecasting models. The lowest average loss in each column is highlighted in
bold, the second-lowest is highlighted in italics. The first three rows
correspond to rolling window forecasts, the next three rows correspond
to\ GARCH forecasts based on different models for the standardized residuals,
and the last four rows correspond to models introduced in Section
\ref{sMODELS} of the main paper. The middle and right panels of this table
present $p$-values from goodness-of-fit tests of the VaR and\ ES forecasts
respectively. Values that are greater than 0.10 (indicating no evidence
against optimality at the 0.10 level) are in bold, and values between 0.05 and
0.10 are in italics.

\bigskip%

\end{landscape}%
\pagebreak

\begin{center}
\bigskip

\textbf{Table S6: Diebold-Mariano t-statistics on average out-of-sample loss
differences}

\textbf{for the \textbf{S\&P 500, }DJIA, NIKKEI and\ FTSE100 (alpha=0.025)}

\bigskip%

\begin{tabular}
[c]{rcccccccccc}\hline
& RW125 & RW250 & RW500 & G-N & G-Skt & G-EDF & FZ-2F & FZ-1F & G-FZ &
Hybrid\\\cline{2-11}
&  &  &  &  &  &  &  &  &  & \\
\multicolumn{1}{l}{} & \multicolumn{10}{c}{\textbf{Panel A: S\&P 500}}\\
\multicolumn{1}{l}{RW125} &  & -2.035 & -3.587 & 1.100 & 2.728 & 3.125 &
1.972 & 3.599 & 3.212 & 2.642\\
\multicolumn{1}{l}{RW250} & 2.035 &  & -3.454 & 1.901 & 3.112 & 3.472 &
2.637 & 4.240 & 3.613 & 3.447\\
\multicolumn{1}{l}{RW500} & 3.587 & 3.454 &  & 3.283 & 4.388 & 4.731 & 3.966 &
5.605 & 4.879 & 4.968\\\hline
\multicolumn{1}{l}{G-N} & -1.100 & -1.901 & -3.283 &  & 4.241 & 3.522 &
1.645 & 2.346 & 3.835 & 1.963\\
\multicolumn{1}{l}{G-Skt} & -2.728 & -3.112 & -4.388 & -4.241 &  & 2.393 &
0.093 & 0.738 & 2.850 & -0.447\\
\multicolumn{1}{l}{G-EDF} & -3.125 & -3.472 & -4.731 & -3.522 & -2.393 &  &
-0.595 & -0.198 & 1.482 & -1.500\\\hline
\multicolumn{1}{l}{FZ-2F} & -1.972 & -2.637 & -3.966 & -1.645 & -0.093 &
0.595 &  & 0.348 & 1.111 & 0.368\\
\multicolumn{1}{l}{FZ-1F} & -3.599 & -4.240 & -5.605 & -2.346 & -0.738 &
0.198 & -0.348 &  & 0.739 & -1.406\\
\multicolumn{1}{l}{G-FZ} & -3.212 & -3.613 & -4.879 & -3.835 & -2.850 &
-1.482 & -1.111 & -0.739 &  & -2.300\\
\multicolumn{1}{l}{Hybrid} & -2.642 & -3.447 & -4.968 & -1.963 & 0.447 &
1.500 & -0.368 & 1.406 & 2.300 & \\\hline
\multicolumn{1}{l}{} &  &  &  &  &  &  &  &  &  & \\
\multicolumn{1}{l}{} & \multicolumn{10}{c}{\textbf{Panel B: DJIA}}\\
\multicolumn{1}{l}{RW125} &  & -1.066 & -2.722 & 2.676 & 3.902 & 3.879 &
3.194 & 3.906 & 3.637 & 1.945\\
\multicolumn{1}{l}{RW250} & 1.066 &  & -3.065 & 2.754 & 3.852 & 3.900 &
4.102 & 4.343 & 3.744 & 2.249\\
\multicolumn{1}{l}{RW500} & 2.722 & 3.065 &  & 3.968 & 5.053 & 5.131 & 5.529 &
5.764 & 5.026 & 3.661\\\hline
\multicolumn{1}{l}{G-N} & -2.676 & -2.754 & -3.968 &  & 3.430 & 3.009 &
0.703 & 1.313 & 2.775 & -0.970\\
\multicolumn{1}{l}{G-Skt} & -3.902 & -3.852 & -5.053 & -3.430 &  & 1.390 &
-1.211 & -0.958 & 1.722 & -3.640\\
\multicolumn{1}{l}{G-EDF} & -3.879 & -3.900 & -5.131 & -3.009 & -1.390 &  &
-1.553 & -1.265 & 1.620 & -3.563\\\hline
\multicolumn{1}{l}{FZ-2F} & -3.194 & -4.102 & -5.529 & -0.703 & 1.211 &
1.553 &  & -0.310 & 1.962 & -0.744\\
\multicolumn{1}{l}{FZ-1F} & -3.906 & -4.343 & -5.764 & -1.313 & 0.958 &
1.265 & 0.310 &  & 1.736 & -1.835\\
\multicolumn{1}{l}{G-FZ} & -3.637 & -3.744 & -5.026 & -2.775 & -1.722 &
-1.620 & -1.962 & -1.736 &  & -3.364\\
\multicolumn{1}{l}{Hybrid} & -1.945 & -2.249 & -3.661 & 0.970 & 3.640 &
3.563 & 0.744 & 1.835 & 3.364 & \\\hline
\end{tabular}

\bigskip
\end{center}

\textit{Table continued on next page.}\pagebreak

\begin{center}
\textbf{Table S6: Diebold-Mariano t-statistics on average out-of-sample loss
differences}

\textbf{for the \textbf{S\&P 500, }DJIA, NIKKEI and\ FTSE100 (alpha=0.025),
continued}

\bigskip%

\begin{tabular}
[c]{rcccccccccc}\hline
& RW125 & RW250 & RW500 & G-N & G-Skt & G-EDF & FZ-2F & FZ-1F & G-FZ &
Hybrid\\\cline{2-11}
&  &  &  &  &  &  &  &  &  & \\
\multicolumn{1}{l}{} & \multicolumn{10}{c}{\textbf{Panel C: NIKKEI}}\\
\multicolumn{1}{l}{RW125} &  & 0.011 & -0.977 & 4.223 & 4.166 & 4.211 &
-16.674 & 2.677 & 4.148 & 4.052\\
\multicolumn{1}{l}{RW250} & -0.011 &  & -1.773 & 4.499 & 4.568 & 4.592 &
-16.612 & 2.767 & 4.542 & 4.466\\
\multicolumn{1}{l}{RW500} & 0.977 & 1.773 &  & 4.536 & 4.628 & 4.638 &
-17.116 & 3.019 & 4.602 & 4.620\\\hline
\multicolumn{1}{l}{G-N} & -4.223 & -4.499 & -4.536 &  & 1.896 & 2.089 &
-16.040 & -2.765 & 2.042 & -0.126\\
\multicolumn{1}{l}{G-Skt} & -4.166 & -4.568 & -4.628 & -1.896 &  & -0.864 &
-15.803 & -3.078 & -0.283 & -0.828\\
\multicolumn{1}{l}{G-EDF} & -4.211 & -4.592 & -4.638 & -2.089 & 0.864 &  &
-15.847 & -3.072 & 0.415 & -0.764\\\hline
\multicolumn{1}{l}{FZ-2F} & 16.674 & 16.612 & 17.116 & 16.040 & 15.803 &
15.847 &  & 15.323 & 15.834 & 15.784\\
\multicolumn{1}{l}{FZ-1F} & -2.677 & -2.767 & -3.019 & 2.765 & 3.078 & 3.072 &
-15.323 &  & 3.035 & 3.650\\
\multicolumn{1}{l}{G-FZ} & -4.148 & -4.542 & -4.602 & -2.042 & 0.283 &
-0.415 & -15.834 & -3.035 &  & -0.785\\
\multicolumn{1}{l}{Hybrid} & -4.052 & -4.466 & -4.620 & 0.126 & 0.828 &
0.764 & -15.784 & -3.650 & 0.785 & \\\hline
\multicolumn{1}{l}{} &  &  &  &  &  &  &  &  &  & \\
\multicolumn{1}{l}{} & \multicolumn{10}{c}{\textbf{Panel D: FTSE}}\\
\multicolumn{1}{l}{RW125} &  & -1.754 & -3.623 & 3.329 & 3.989 & 3.639 &
-4.888 & 3.253 & 2.818 & 2.375\\
\multicolumn{1}{l}{RW250} & 1.754 &  & -3.406 & 4.122 & 4.786 & 4.435 &
-4.800 & 4.139 & 3.716 & 3.257\\
\multicolumn{1}{l}{RW500} & 3.623 & 3.406 &  & 5.066 & 5.638 & 5.339 &
-4.613 & 5.355 & 4.809 & 4.533\\\hline
\multicolumn{1}{l}{G-N} & -3.329 & -4.122 & -5.066 &  & 4.696 & 3.860 &
-5.167 & -0.306 & -0.827 & -2.199\\
\multicolumn{1}{l}{G-Skt} & -3.989 & -4.786 & -5.638 & -4.696 &  & -4.658 &
-5.230 & -2.170 & -3.470 & -3.828\\
\multicolumn{1}{l}{G-EDF} & -3.639 & -4.435 & -5.339 & -3.860 & 4.658 &  &
-5.191 & -1.163 & -2.332 & -3.130\\\hline
\multicolumn{1}{l}{FZ-2F} & 4.888 & 4.800 & 4.613 & 5.167 & 5.230 & 5.191 &  &
5.173 & 5.154 & 5.110\\
\multicolumn{1}{l}{FZ-1F} & -3.253 & -4.139 & -5.355 & 0.306 & 2.170 & 1.163 &
-5.173 &  & -0.147 & -1.526\\
\multicolumn{1}{l}{G-FZ} & -2.818 & -3.716 & -4.809 & 0.827 & 3.470 & 2.332 &
-5.154 & 0.147 &  & -2.015\\
\multicolumn{1}{l}{Hybrid} & -2.375 & -3.257 & -4.533 & 2.199 & 3.828 &
3.130 & -5.110 & 1.526 & 2.015 & \\\hline
\end{tabular}

\end{center}

\bigskip

\textit{Notes:} This table presents $t$-statistics from Diebold-Mariano tests
comparing the average losses, using the FZ0 loss function, over the
out-of-sample period from January 2000 to December 2016, for ten different
forecasting models. A positive value indicates that the row model has higher
average loss than the column model. Values greater than 1.96 in absolute value
indicate that the average loss difference is significantly different from zero
at the 95\% confidence level. Values along the main diagonal are all
identically zero and are omitted for interpretability. The first three rows
correspond to rolling window forecasts, the next three rows correspond
to\ GARCH forecasts based on different models for the standardized residuals,
and the last four rows correspond to models introduced in Section
\ref{sMODELS} of the main paper.

\end{document}